\tikzstyle{dot}=[inner sep=0.3mm, minimum width=2mm, minimum height=2mm, draw, shape=circle, font={\footnotesize}, tikzit fill=magenta]
\tikzstyle{white dot}=[dot, fill=white, text depth=-0.2mm, tikzit category=ZH-pf, draw=black]
\tikzstyle{white phase dot}=[minimum size=5mm, font={\footnotesize\boldmath}, shape=rectangle, rounded corners=2mm, inner sep=0.2mm, outer sep=-2mm, scale=0.8, tikzit shape=circle, draw=black, fill=white, tikzit category=ZH-pf, tikzit fill=white, tikzit draw=blue]
\tikzstyle{gray dot}=[dot, fill={rgb,255: red,180; green,180; blue,180}, text depth=-0.2mm, tikzit category=ZH-pf]
\tikzstyle{gray phase dot}=[white phase dot, tikzit shape=circle, tikzit draw=blue, fill={rgb,255: red,180; green,180; blue,180}, font={\footnotesize\boldmath}]
\tikzstyle{hadamard}=[fill=white, draw, inner sep=0.6mm, minimum height=1.5mm, minimum width=1.5mm, shape=rectangle, tikzit shape=rectangle, tikzit category=ZH-pf]
\tikzstyle{small hadamard}=[hadamard]
\tikzstyle{lambda}=[hadamard, fill={rgb,255: red,180; green,180; blue,180}, tikzit shape=rectangle]
\tikzstyle{halfscalar}=[star, fill=black, draw=black, minimum size=8pt, inner sep=0pt]
\tikzstyle{box}=[shape=rectangle, text height=1.5ex, text depth=0.25ex, yshift=0.2mm, fill=none, draw=black, minimum height=3mm, minimum width=5mm, font={\small}]
\tikzstyle{Z dot}=[inner sep=0mm, minimum size=2mm, shape=circle, draw=black, fill={zx_green}, tikzit fill=green]
\tikzstyle{Z phase dot}=[minimum size=5mm, font={\footnotesize\boldmath}, shape=rectangle, rounded corners=2mm, inner sep=0.2mm, outer sep=-2mm, scale=0.8, tikzit shape=circle, draw=black, fill={zx_green}, tikzit draw=blue, tikzit fill=green]
\tikzstyle{X dot}=[Z dot, shape=circle, draw=black, fill={zx_red}, tikzit fill=red]
\tikzstyle{X phase dot}=[Z phase dot, tikzit shape=circle, tikzit draw=blue, fill={zx_red}, font={\footnotesize\color{black}\boldmath}, tikzit fill=red]
\tikzstyle{H box}=[hadamard]
\tikzstyle{st}=[star, star points=5, fill=white, draw=black, inner sep=1.2pt, line width=1.2pt, tikzit fill=blue, tikzit draw=red, tikzit category=ZH-pf]
\tikzstyle{triangle}=[regular polygon, regular polygon sides=3, fill=white, draw=black, inner sep=0pt, minimum width=1em, tikzit draw=blue, tikzit category=ZH-pf, tikzit fill=cyan]
\tikzstyle{not}=[fill={rgb,255: red,180; green,180; blue,180}, draw=black, shape=circle, font={$\neg$}, dot]
\tikzstyle{vertex}=[inner sep=0mm, minimum size=1mm, shape=circle, draw=black, fill=black]
\tikzstyle{vertex set}=[inner sep=0mm, minimum size=1mm, shape=circle, draw=black, fill=white, font={\footnotesize\boldmath}]
\tikzstyle{wide point}=[fill=white, draw, shape=isosceles triangle, shape border rotate=-90, isosceles triangle stretches=true, inner sep=0pt, minimum width=1.5cm, minimum height=6.12mm, yshift=-0.0mm]
\tikzstyle{medium gray box}=[semilarge box, fill={rgb,255: red,180; green,180; blue,180}]
\tikzstyle{small box}=[rectangle, inline text, fill=white, draw, minimum height=5mm, yshift=-0.5mm, minimum width=5mm, font={\small}]
\tikzstyle{small gray box}=[small box, fill={rgb,255: red,180; green,180; blue,180}]
\tikzstyle{medium box}=[rectangle, inline text, fill=white, draw, minimum height=5mm, yshift=-0.5mm, minimum width=8mm, font={\small}]
\tikzstyle{ddot}=[line width=1.6pt, inner sep=0mm, minimum width=2.5mm, minimum height=2.5mm, draw, shape=circle]
\tikzstyle{dd white}=[ddot, fill=white, tikzit draw=green]
\tikzstyle{dd white phase}=[white phase dot, line width=1.6pt, tikzit draw=yellow]
\tikzstyle{dd gray}=[ddot, fill={rgb,255: red,180; green,180; blue,180}, tikzit draw=green]
\tikzstyle{dd gray phase}=[gray phase dot, line width=1.6pt, tikzit draw=yellow]
\tikzstyle{U box}=[fill=white, draw=black, shape=rectangle, minimum width=1cm, minimum height=1.2cm]
\tikzstyle{tall box}=[fill=none, draw=black, shape=rectangle, minimum width=1.5cm, minimum height=3.5cm]
\tikzstyle{G box}=[fill=white, draw=black, shape=rectangle, minimum width=1cm, minimum height=1.5cm]
\tikzstyle{large G}=[fill={rgb,255: red,11; green,129; blue,255}, draw=black, shape=circle, minimum height=2.5cm]
\tikzstyle{simple}=[-]
\tikzstyle{hadamard edge}=[-, dashed, dash pattern=on 2pt off 1pt, thick, draw=gray]
\tikzstyle{gray}=[-, draw={blue!60!white}, tikzit draw=blue]
\tikzstyle{blue}=[-, draw={blue!60!white}, tikzit draw=blue]
\tikzstyle{brace edge}=[-, tikzit draw=blue, decorate, decoration={brace,amplitude=1mm,raise=-1mm}]
\tikzstyle{diredge}=[->]
\tikzstyle{not edge}=[-, dashed, dash pattern=on 2pt off 1.5pt, thick, draw={rgb,255: red,255; green,68; blue,68}]
\tikzstyle{double edge}=[-, double, shorten <=-1mm, shorten >=-1mm, double distance=2pt]
\tikzstyle{boldedge}=[-, line width=1.6pt, shorten <=-0.17mm, shorten >=-0.17mm, tikzit draw=blue]
\newtheorem{definition}{Definition} %
\newtheorem{lemma}{Lemma} %
\newtheorem{theorem}{Theorem} %
\newtheorem{proposition}{Proposition} %
\newtheorem{example}{Example} %
\newtheorem{procedure}{Procedure}
\newcommand{\nc}{\newcommand} %
\nc{\cA}{{\cal A}} \nc{\cB}{{\cal B}} \nc{\cC}{{\cal C}} %
\nc{\cD}{{\cal D}} \nc{\cE}{{\cal E}} \nc{\cF}{{\cal F}} %
\nc{\cG}{{\cal G}} \nc{\cH}{{\cal H}} \nc{\cI}{{\cal I}} %
\nc{\cJ}{{\cal J}} \nc{\cK}{{\cal K}} \nc{\cL}{{\cal L}} %
\nc{\cM}{{\cal M}} \nc{\cN}{{\cal N}} \nc{\cO}{{\cal O}} %
\nc{\cP}{{\cal P}} \nc{\cQ}{{\cal Q}} \nc{\cR}{{\cal R}} %
\nc{\cS}{{\cal S}} \nc{\cT}{{\cal T}} \nc{\cU}{{\cal U}} %
\nc{\cV}{{\cal V}} \nc{\cW}{{\cal W}} \nc{\cX}{{\cal X}} %
\nc{\cZ}{{\cal Z}}
\begin{document}


\title{A ZX-Calculus Approach for the Construction of Graph Codes}

\author{Zipeng Wu}
\email{zwubp@connect.ust.hk}
\affiliation{Department of Physics, Hong Kong University of Science and Technology}

\author{Song Cheng}
\email{chengsong@bimsa.cn}
\affiliation{Yanqi Lake Beijing Institute of Mathematical Sciences and Applications}

\author{Bei Zeng}
\affiliation{Department of Physics, Hong Kong University of Science and Technology}%

\date{\today}

\begin{abstract}
Quantum Error-Correcting Codes (QECCs) play a crucial role in enhancing the robustness of quantum computing and communication systems against errors. Within the realm of QECCs, stabilizer codes, and specifically graph codes, stand out for their distinct attributes and promising utility in quantum technologies. This study underscores the significance of devising expansive QECCs and adopts the ZX-calculus—a graphical language adept at quantum computational reasoning—to depict the encoders of graph codes effectively. Through the integration of ZX-calculus with established encoder frameworks, we present a nuanced approach that leverages this graphical representation to facilitate the construction of large-scale QECCs. Our methodology is rigorously applied to examine the intricacies of concatenated graph codes and the development of holographic codes, thus demonstrating the practicality of our graphical approach in addressing complex quantum error correction challenges. This research contributes to the theoretical understanding of quantum error correction and offers practical tools for its application, providing objective advancements in the field of quantum computing.

\end{abstract}

\maketitle

\section{Introduction}

Quantum Error-Correcting Codes (QECCs) play a crucial role in the advancement of
 quantum computing and quantum communication systems. Due to the inherently fragile 
 nature of quantum information, it is highly susceptible to errors arising from environmental 
 noise, imperfect quantum gates, and measurement inaccuracies. These errors can lead to 
 significant losses in computational accuracy and jeopardize the overall performance of 
 quantum systems. To counteract these issues, QECCs have been developed to detect and 
 correct errors without disturbing the delicate quantum states, thus safeguarding the
  integrity of quantum information and enabling the realization of fault-tolerant quantum
   computing~\cite{nielsen2002quantum}.

Stabilizer codes are prominent families of QECCs that have garnered considerable 
attention due to their unique properties and potential applications. Stabilizer codes are a generalization of additive classical codes to the quantum domain. They are characterized
 by a set of stabilizer operators, which preserve the encoded quantum state and provide a 
 framework for efficiently detecting and correcting
  errors~\cite{gottesman1997stabilizer,calderbank1998quantum}. Graph codes, a subclass of 
  stabilizer codes, utilize graph states to represent codewords, where each vertex 
  corresponds to a qubit, and edges represent entangling operations~\cite{briegel2001persistent,raussendorf2001one,hein2006entanglement,schlingemann2001stabilizer,schlingemann2001quantum,grassl2002graphs}. The graph structure provides an intuitive way of visualizing the quantum state and its interactions, making these codes particularly appealing for constructing quantum codes with desirable error-correcting capabilities~\cite{cross2009codeword,Chuang_2009,chen2008nonbinary}.
   Additionally, every graph code is local Clifford equivalent to a stabilizer code,
    which means that they can be transformed into each other through local Clifford operations~\cite{van2004graphical,dehaene2003clifford,hostens2005stabilizer}. The graph code is extensively used in photonic measurement-based quantum communication~\cite{azuma2015all,borregaard2020one} and computation~\cite{bartolucci2023fusion}.
    
In the pursuit of developing future large-scale quantum computers, the capability to design complex QECCs spanning a vast number of qubits becomes increasingly crucial. The intricacies of constructing such extensive QECCs lie in their ability to safeguard quantum information across a broad qubit array while maintaining manageable resource overheads and ensuring high fault tolerance. Concatenated quantum codes emerge as a pivotal strategy in this context, offering exponential enhancements in error-correction efficacy relative to the polynomial growth in resource requirements. These codes adeptly fortify quantum data against errors by layering multiple quantum code strata, underscoring their significance in realizing robust, large-scale quantum computational architectures~\cite{knill1996concatenated,knill1996threshold,knill1998resilient,zalka1996threshold,aharonov1997fault}.Moreover, concatenated codes' adaptability permits the amalgamation of diverse QECC types, such as stabilizer and graph codes, fostering innovative hybrid frameworks that capitalize on each code type's strengths to optimize error correction~\cite{grassl2009generalized}. This versatility is further expanded by recent methodologies like the Quantum Lego framework~\cite{cao2022quantum} and Tensor Network codes~\cite{farrelly2021tensor}, which facilitate the modular construction of expansive quantum error-correcting codes by uniting smaller code segments, thereby extending the traditional concept of code concatenation. A notable application of these advanced frameworks is the creation of holographic quantum codes, where a stabilizer code is devised through the contraction of a tensor network.

The linear mapping of logical to physical qubits, termed the ``encoder'', plays a fundamental role in the construction of larger codes. Employing the graphical ZX-calculus, grounded in category theory and represented through ZX-diagrams, provides an insightful perspective on quantum operations and states, especially within graph codes \cite{Coecke2007graphicalcalculus,DuncanPerdrixGraphStates,van2020zx}. Our research, elaborated in Section~\ref{sec:zxencoder}, harnesses this graphical framework to cohesively link graph codes with their encoding circuits in a ``graph-like'' ZX-calculus format. This novel approach not only delineates the stabilizer generators of graph codes but also broadens the scope to encompass stabilizer codes, enabling the modular construction of more extensive codes. The employment of a simplification technique for Clifford ZX-diagrams, as introduced in \cite{Duncan_2020}, optimizes the encoder for larger codes by amalgamating smaller entities, thereby simplifying the extraction of stabilizers, logical operators, and encoding circuits.

In Section~\ref{sec:Concatenation}, we examine the process of code concatenation, which is crucial for enhancing error correction in quantum systems. Utilizing the ZX encoder diagram, we revisit the generalized local complementation rule from \cite{beigi2011graph} and explore how graph codes equipped with Clifford encoders can be seamlessly concatenated. This analysis extends to concatenated stabilizer codes, demonstrating how the ZX encoder diagram can effectively represent the intricate structure of these codes.

In Section~\ref{sec:fusing}, we demonstrate the use of the ZX encoder diagram to construct larger stabilizer codes from smaller ones, surpassing traditional methods of code concatenation. We apply this technique in our construction of the hyperbolic pentagon (HaPPY) code, as detailed in \cite{pastawski2015holographic}, showing it to be more computationally efficient and graphically intuitive than tensor network contraction and algebraic approaches. By comparing our results with recent studies, such as \cite{munne2022engineering}, we illustrate that our graphically driven method outperforms standard algebraic techniques, making it an invaluable tool for the development of quantum codes.

 In summary, our work harnesses the analytical power of ZX-calculus to offer a new dimension in the understanding and construction of quantum error-correcting codes. By innovatively applying this graphical language, we provide not only a theoretical framework but also practical tools for advancing the field of quantum error correction.

\section{Preliminaries}

In this section, we provide an overview of the preliminary concepts related to graph states, graph codes, and ZX-calculus. These concepts lay the groundwork for the subsequent sections of the paper.

\subsection{\textbf{Graph States, Graph Codes, and Stabilizer Codes}}
\label{sec:graph_code}
This subsection revisits graph codes and explores their intrinsic connection with stabilizer codes, highlighting that every stabilizer code can be locally Clifford equivalent to a graph code.

\textbf{Pauli Group and Stabilizer Code}: 
The foundation of our discussion is stabilizer states, which are closely associated with the Pauli group $\mathcal{P}_n$. The Pauli group is generated by the Pauli matrices $X$ and $Z$.
For simplicity, a Pauli operator $X^{\mathbf{a}} Z^{\mathbf{b}}$ is denoted by the vector $[\mathbf{a} \mid \mathbf{b}]$ of length $2 n$. Thus two Pauli operators $X^{\mathbf{a}} Z^{\mathbf{b}}$ and $X^{\mathbf{a}^{\prime}} Z^{\mathbf{b}^{\prime}}$ commute iff their corresponding vectors are orthogonal with respect to the "symplectic inner product" defined by
\begin{equation}
    [\mathbf{a} \mid \mathbf{b}] *\left[\mathbf{a}^{\prime} \mid \mathbf{b}^{\prime}\right]=\mathbf{a b}^{\prime}-\mathbf{a}^{\prime} \mathbf{b}
\end{equation}

A $[[n,k,d]]$ \textit{stabilizer code} is a type of quantum error-correcting code that encodes $k$ logical qubits into $n$ physical qubits and possesses a code distance $d$. It is defined by an Abelian subgroup $\mathcal{S}$ of $\mathcal{P}_n$, with the negative identity operator $-I$ excluded. For a quantum state $|\psi\rangle$ on $n$ qubits to be a valid codeword of this stabilizer code, it must be stabilized by all elements of $\mathcal{S}$, i.e., $S|\psi\rangle = |\psi\rangle$ for every $S \in \mathcal{S}$. The set of all such stabilized states forms the codespace, a $2^k$-dimensional subspace within the larger $2^n$-dimensional Hilbert space.

The stabilizer group \(\mathcal{S}\) is generated by \(n-k\) commuting operators from \(\mathcal{P}_n\), denoted as \(\mathcal{S} = \langle g_1, \dots, g_{n-k} \rangle\). Consequently, the stabilizer generators can be represented by an \((n-k) \times 2n\) binary check matrix \(C=[A \mid B] \), where the rows correspond to the generators of \(\mathcal{S}\), with matrix \(A\) representing the $X$-component and matrix \(B\) representing the $Z$-component of the generators.

An \(n\)-qubit \textit{stabilizer state} can be viewed as a particular instance of a stabilizer code, specifically a \([[n,0,d]]\) stabilizer code, where the entire \(n\)-qubit space is stabilized by \(n\) linearly independent generators. In this scenario, the codespace is one-dimensional, corresponding to the stabilizer state itself.

\textbf{Clifford group}: The Clifford group \(\mathcal{C}_n\) plays a pivotal role in quantum computing and quantum error correction as it normalizes the Pauli group \(\mathcal{P}_n\).
\begin{equation}
    \mathcal{C}_n := \{U \in \mathcal{U}(2^n) \mid U \mathcal{P}_n U^{\dagger} = \mathcal{P}_n\}
\end{equation}where \(\mathcal{U}(2^n)\) is the group of \(2^n \times 2^n\) unitary matrices. 
The Clifford group can be generated  by three gates: Hadamard, Phase gate and the controlled-X(CX) gate.

Stabilizer states are said to be locally Clifford equivalent if one can be transformed into the other by applying local Clifford operations, which are operations from the Clifford group acting independently on each qubit. These operations preserve the entanglement structure of the states.

\textbf{Graph State and Graph Code}: A graph state, denoted as \( |G\rangle \), is a specific type of stabilizer state associated with a graph \( G=(V,E) \), where the vertices \( V \) correspond to qubits and the edges \( E \) represent pairwise entanglements created by controlled-Z (\( \mathrm{CZ} \)) operations. To construct \( |G\rangle \), one applies \( \mathrm{CZ} \) gates between qubits for each edge in \( E \), starting from an initial state of all qubits in the superposition state \( |+\rangle^{\otimes |V|}\):

\begin{equation}
\label{eq:graph_state} 
    |G\rangle := \prod_{(u, v) \in E} \mathrm{CZ}_{u, v} |+\rangle^{\otimes |V|}
\end{equation}
The stabilizer formalism provides an alternative description of a graph state. Each \( n \)-qubit graph state has a set of stabilizer generators that can be succinctly represented in binary form:

\begin{equation}
    S_G = \left[ I_n \ | \  G \right]
\end{equation}

We also use $G$ to denote  the adjacency matrix of the graph \( G \) for simplicity.

A \( [[n,k,d]] \) \textit{graph code}, defined by a graph \( G \) and a classical code \( \mathcal{C} \), is known to be locally Clifford equivalent to a stabilizer code. The classical codewords \( \mathcal{C} = (\boldsymbol{c}_1, \ldots, \boldsymbol{c}_K) \), with \( K = 2^k \), are instrumental in constructing the codespace, which is spanned by states of the form \( \{Z^{\boldsymbol{c}_1}|G\rangle, \ldots, Z^{\boldsymbol{c}_K}|G\rangle\} \). From \( \mathcal{C} \), we can extract a set of \( k \) linearly independent basis vectors \( \boldsymbol{\alpha} = (\boldsymbol{\alpha}_1, \ldots, \boldsymbol{\alpha}_k) \). These vectors can be organized into a matrix \( \Gamma \) as follows:

\begin{equation}
    \Gamma :=
    \begin{bmatrix}
        \boldsymbol{\alpha_1} \\
        \vdots \\
        \boldsymbol{\alpha_k}
    \end{bmatrix},
\end{equation}

\textbf{Encoding of Graph Codes}:
To elucidate the encoding process, it is necessary to specify the logical Pauli operators and, consequently, the encoded logical states. 
We denote a graph code with encoded logical states as
\begin{equation}
    |\overline{i_1\ldots i_k} \rangle_G = Z^{i_1\boldsymbol{\alpha}_1 + i_2\boldsymbol{\alpha}_2 + \ldots + i_k\boldsymbol{\alpha}_k} |G\rangle,
\end{equation}
which represents standard form graph codes.
Standard form graph codes are encompassed within the versatile CWS (Codeword Stabilized) quantum code framework \cite{cross2009codeword}, which integrates both additive and nonadditive codes. This unification has facilitated the construction of high-quality quantum codes \cite{Chuang_2009,grassl2009generalized}. The CWS framework's advantage lies in its comprehensive approach, which allows for the application of its analytical tools to improve error correction strategies for graph codes.

The encoding process for a standard form graph code involves both the top $k$ qubits, labeled $q_1, \ldots, q_k$, and the bottom $n$ qubits, labeled $Q_1, \ldots, Q_n$. The step-by-step procedure for encoding such a code is as follows:
\begin{procedure}(\textbf{Encoding of Standard Form Graph Code})
    \label{pro:enc}
    \begin{enumerate}
        \item Prepare the graph state $|G\rangle$ using the circuit in Eq.(\ref{eq:graph_state}) for qubits $Q_1, \ldots, Q_n$.
        \item For each qubit $q_i$, where $1 \leq i \leq k$, and each qubit $Q_j$, where $1 \leq j \leq n$, apply a $\mathrm{CZ}$ gateza if the $j$-th component of vector $\boldsymbol{\alpha}_i$, denoted $\Gamma_{ij}$, is non-zero.
        \item Apply the Hadamard gate $H$ to qubits $q_1, \ldots, q_k$.
        \item Measure qubits $q_1, \ldots, q_k$ in the computational basis and repeat until the outcome $|0\rangle^{\otimes k}$ is obtained.
    \end{enumerate}
\end{procedure}

\begin{figure}[H]
    \centering
    \includegraphics[width=2.5in]{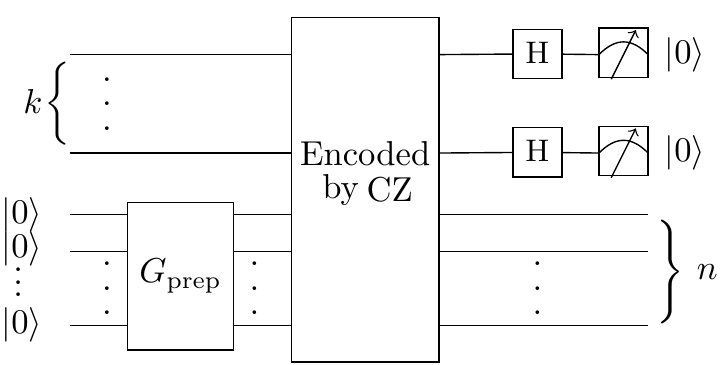}
    \caption{Encoding circuit for a standard form graph code: $k$ qubits are encoded into $n$ qubits. Here, $G_\mathrm{prep}$ represents the encoding circuit for the graph state $|G\rangle$. The "Encoded by CZ" step entails the encoding of the \( k \) qubits into an \( n \)-qubit graph code using CZ gates in accordance with the codeword bases \( \Gamma \).}
    \label{fig:encoder_circuit}
\end{figure}

Starting with the state $\sum_{i_1, \ldots, i_k} a_{i_1, \ldots, i_k}|i_1 \ldots i_k\rangle \otimes |0\rangle^{\otimes n}$, the encoding procedure applies the following transformations:
\begin{eqnarray}
    &&\sum_{i_1, \ldots, i_k} a_{i_1, \ldots, i_k}|i_1 \ldots i_k\rangle \otimes |0\rangle^{\otimes n} \\
    &\mapsto& \sum_{i_1, \ldots, i_k}  a_{i_1, \ldots, i_k}|i_1 \ldots i_k\rangle \otimes |G\rangle\\
    &\mapsto& \sum_{i_1,\ldots,i_k} a_{i_1,\ldots,i_k}|i_1 \ldots i_k\rangle \otimes  Z^{i_1\boldsymbol{\alpha}_1 + \ldots + i_k\boldsymbol{\alpha}_k} |G\rangle\\
    &\mapsto& \frac{1}{\sqrt{2^k}}\sum_{i_1\ldots i_k}\sum_{j_1\ldots j_k} (-1)^{i_1j_1+\cdots+i_kj_k} |j_1\ldots j_k\rangle \otimes a_{i_1\ldots i_k}  Z^{i_1\boldsymbol{\alpha}_1 + \ldots + i_k\boldsymbol{\alpha}_k} |G\rangle
\end{eqnarray}

After measuring qubits \( q_1\ldots q_k \) and obtaining \( |0\rangle^{\otimes k} \), the state of \( Q_1\ldots Q_n \) collapses to \( \sum_{i_1\ldots i_k}a_{i_1\ldots i_k}|\overline{i_1\ldots i_k} \rangle_G \), indicating the mapping \( \sum_{i_1, \ldots, i_k} a_{i_1, \ldots, i_k}|i_1 \ldots i_k\rangle \mapsto  \sum_{i_1, \ldots, i_k} a_{i_1, \ldots, i_k}|\overline{i_1 \ldots i_k}\rangle_G \) is complete.

We will present the stabilizers and logical Pauli operators for the standard form graph code. The choice of basis \(\boldsymbol{\alpha}\) does not alter the codespace, provided the bases span the same space. We select a basis such that \(\Gamma\) assumes a particular form, known as the row-reduced echelon form (RREF), subsequent to the appropriate labeling of the qubits. Consequently, \(\Gamma\) is expressed as \(\Gamma = \left[I_k \ \ M\right]\), where \(M\) is a \(k \times (n-k)\) binary matrix.

The graph \(G\)'s adjacency matrix is split into blocks:

\[
  G=\left[\begin{array}{cc}
    G_1 & N \\
    N^T & G_2
    \end{array}\right],
\]

Here, \(G_1\), \(G_2\), and \(N\) are matrices sized \(k \times k\), \((n-k) \times (n-k)\), and \(k \times (n-k)\) respectively.

Then, the stabilizer generator \(C_G\), and the logical \(\bar{X}\) and \(\bar{Z}\) operators are derived as \cite{beigi2011graph}:

\begin{equation}
\label{eqn:check_matrix}
  \left[\begin{array}{c}
    C_G \\
    \bar{Z} \\
    \bar{X}
    \end{array}\right] = \left[\begin{array}{cc|cc}
    M^T & I_{n-k} & M^TG_1+N^T & M^TN+G_2 \\
    0 & I_k & N^T & G_2 \\
    0 & 0 & I_k & M
    \end{array}\right]    
\end{equation}

The prior discussion has predominantly focused on the standard form of graph codes, wherein the encoded state \(|\overline{i_1\ldots i_k}\rangle_G\) is predefined. Within the realm of error correction, it is imperative to acknowledge that the code distance is inherently dictated by the characteristics of the codespace. Before initiating the encoding process, a basis transformation can be executed, effectively converting the state \(|\overline{i_1\ldots i_k}\rangle_G\) into a novel, complete, and orthonormal state within the codespace.

More specifically, a unitary operation \(U\) can be applied to the top \(k\) qubits prior to the encoding circuit illustrated in Figure \ref{fig:encoder_circuit}. To ensure the logical Pauli operators remain within the Pauli group, \(U\) must be a Clifford operation.

The basis transformation facilitated by \(U\) holds significant relevance in the context of concatenated quantum codes, where multiple layers of error correction are implemented. The choice of basis change can markedly affect the code's overall performance. For instance, the application of a Hadamard transform to the inner code before encoding, as employed in the Shor code\cite{Calderbank_1996}, is beneficial for alternating the correction capabilities for $X$ and $Z$ errors across the various layers of the code.


\subsection{ZX-calculus}
In this section, we provide an overview of the ZX-calculus, a graphical language that is uniquely suited for expressing and manipulating quantum circuits and qubit-based linear maps. For an in-depth treatment, readers are referred to Ref.~\cite{van2020zx}. By employing diagrams for its notation, the ZX-calculus becomes an invaluable tool for the transformation, optimization, and simplification of quantum circuits\cite{Duncan_2020,Kissinger_2020}.

The ZX-calculus is a diagrammatic language that consists of wires and spiders. Wires entering the diagram from the left serve as inputs, while wires exiting to the right are outputs. Given two diagrams, we can compose them by connecting the outputs of the first diagram to the inputs of the second, or we can form their tensor product by placing the two diagrams side by side.

The core elements of the ZX-calculus include two fundamental components: $Z$ nodes (or $Z$ spiders) and $X$ nodes (or $X$ spiders), which are interconnected by edges to represent quantum entanglement or linear mappings. The corresponding linear maps of these nodes can be precisely expressed in Dirac notation, as illustrated below.
\begin{figure}[H]
	\centering
	\includegraphics[width=4in]{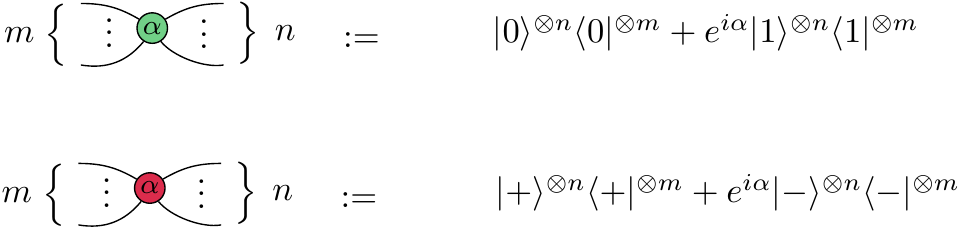}
\end{figure}

The Hadamard gate, an essential element in quantum computing, is typically depicted as a white square or a dashed line within the ZX-calculus to simplify the visuals. Below, we also introduce the ZX-diagram representations for the CZ and CX gate.

\begin{figure}[H]
	\centering
	\includegraphics[width=4.3in]{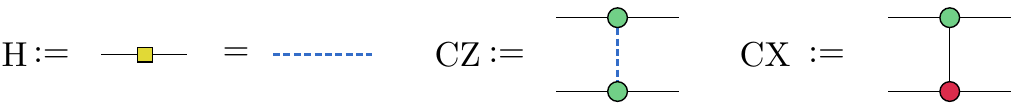}
\end{figure}

Furthermore, ZX-diagrams incorporate additional elements such as identity wires, swaps, cups, and caps, further enriching this visual language. Swap operations interpret wire crossings, allowing for the rearrangement of connections. Cups and caps, on the other hand, facilitate the conversion of inputs to outputs and vice versa, enhancing the flexibility and expressiveness of ZX-diagrams, as depicted below:
\begin{figure}[H]
    \centering
    \includegraphics{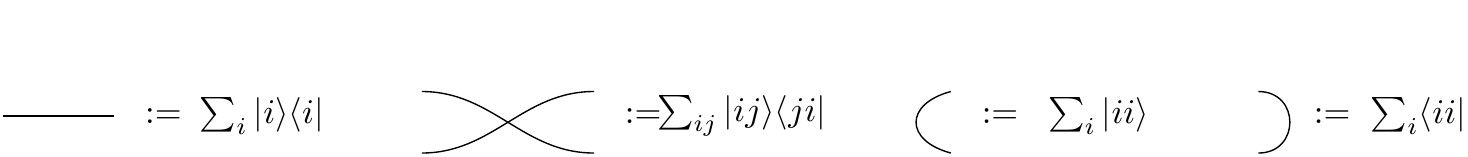}
\end{figure}

In the ZX-calculus, the principle that \textbf{``Only connectivity matters''} highlights the framework's elegance and depth. This principle asserts that the specific layout of a ZX-diagram, including the bending and positioning of wires, does not impact the underlying matrix representation, provided that the connections and the sequence of inputs and outputs are preserved. Therefore, any two ZX-diagrams that possess the same number of spiders, identical phase values, and matching connectivity patterns correspond to the same matrix. This concept allows for the flexible interpretation of diagrams as undirected multigraphs, where spiders act as nodes and wires as edges, underscoring the primacy of connectivity.

Our focus is mainly on Clifford ZX-diagrams—specific subsets of ZX-diagrams that represent Clifford circuits. Clifford circuits, known for their ability to transform Pauli operators into other Pauli operators, are integral to quantum computing due to their classical simulability and critical role in quantum error correction.

In Clifford ZX-diagrams, $Z$ and $X$ nodes signify phase shifts by increments of \(\pi/2\). To streamline these diagrams while preserving the represented quantum process, one can employ rewrite rules. Figure ~\ref{fig:clifford_rule} illustrates the \textit{complete} set of rules for simplifying Clifford ZX-diagrams. By sequentially applying these rules, any Clifford ZX-diagram can be transformed into an equivalent, potentially simpler version.

\begin{figure}[H]
	\centering
	\includegraphics[width=4.7in]{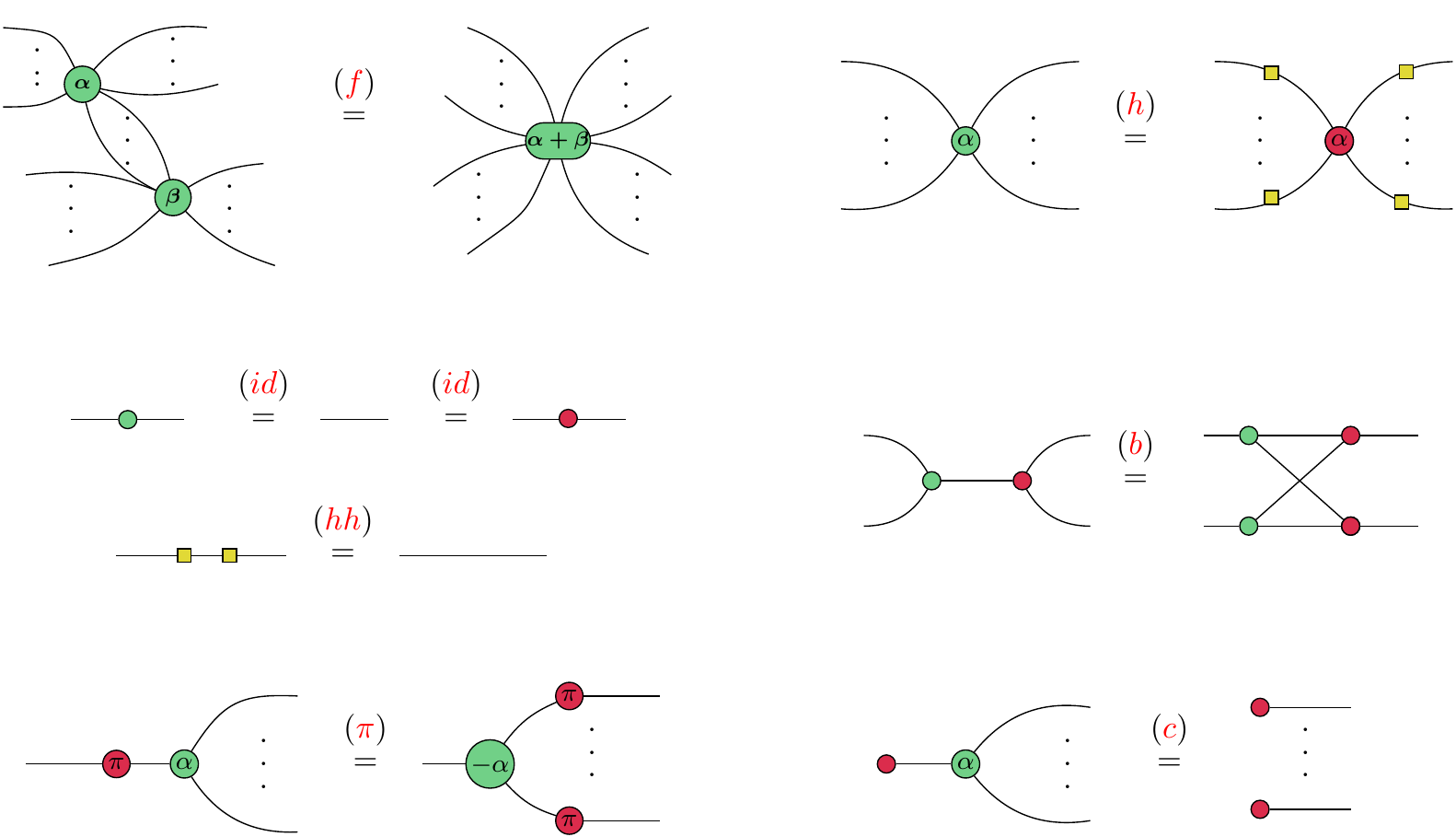}
	\caption{Complete rule sets for Clifford ZX-diagrams. These rules are valid for any \(\alpha, \beta \in [0,2\pi)\). Due to the rules (\(\textcolor{red}{h}\)) and (\(\textcolor{red}{hh}\)), the color interchangeability is also preserved.}
	\label{fig:clifford_rule}
\end{figure}
\textbf{Remark}: In the rules depicted in Figure \ref{fig:clifford_rule}, we deliberately exclude any non-zero scalar factors from consideration. The equality symbol ("=") used in the diagrams denotes equality up to a global non-zero scalar factor. Consequently, within the context of this work, a ZX-diagram is understood to represent a linear map \( L \) up to multiplication by a non-zero scalar.

A key aspect of the ZX-calculus is its illustration of map-state duality, embodied by the Choi-Jamiołkowski isomorphism. This isomorphism enables the conversion of a linear map \( L \) on a qubit system into an equivalent quantum state across \( (n+m) \) qubits. In ZX-diagrams, this conversion is graphically depicted by wire manipulations, notably employing the yanking equation, as shown below:

\begin{figure}[H]
    \centering
    \includegraphics[width=4.5in]{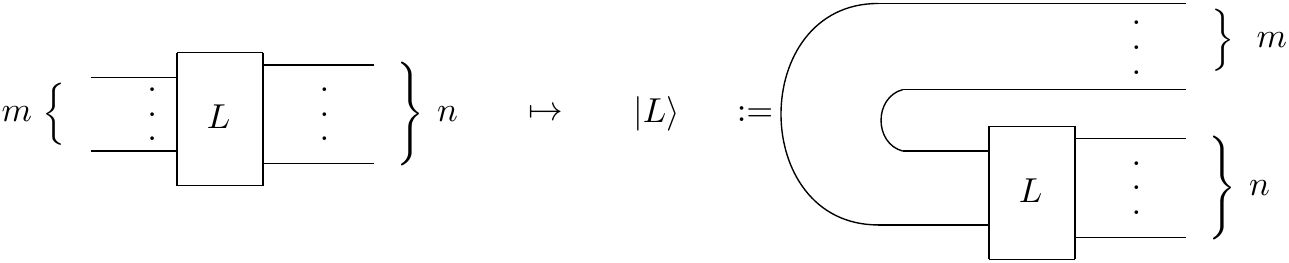}
\end{figure}

This graphical transformation is achieved through a partial transpose on the computational basis, creating a one-to-one correspondence between linear transformations \( L: (\mathbb{C}^2)^{\otimes m} \to (\mathbb{C}^2)^{\otimes n} \) and quantum states, denoted \( |L\rangle \), in the space \( (\mathbb{C}^2)^{\otimes(m+n)} \).

Finally, we will discuss the ZX-calculus representation of graph states. The \(|+\rangle\) state is represented as a $Z$ spider with a single output. We then apply the CZ operation to the $Z$ spiders according to the graph. The graph state \(|G\rangle\), associated with graph \(G\), can be represented as follows:

\begin{figure}[H]
	\centering
	\includegraphics[width=1.5in]{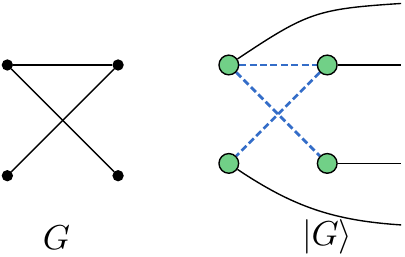}
\end{figure}

\subsection{Simplification of Graph-like ZX-diagrams}
Before delving into the analysis of graph codes, it is essential to introduce the concept of graph-like ZX-diagrams and the associated simplification strategy, as outlined in \cite{Duncan_2020}. This section reviews key concepts and results from \cite{Duncan_2020}, providing a concise overview of the simplification strategy.

The graph-like ZX-diagram is formally defined as follows:

\begin{definition}[Graph-like ZX-diagram]
   A ZX-diagram is considered graph-like if it meets the following criteria:
   \begin{enumerate}
    \item All spiders within the diagram are Z spiders.
    \item Connections between Z spiders are exclusively through Hadamard edges.
    \item The diagram does not contain parallel Hadamard edges or self-loops.
    \item Each input or output is connected to exactly one Z spider, and every Z spider is connected to at most one input or output.
   \end{enumerate}
\end{definition}

\begin{lemma}\label{lema:graph}
  Every ZX-diagram can be transformed into an equivalent graph-like ZX-diagram. 
\end{lemma}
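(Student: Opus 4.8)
The plan is to establish Lemma~\ref{lema:graph} constructively, by giving an explicit sequence of rewrite steps that converts an arbitrary ZX-diagram into one satisfying all four conditions of the definition, and to verify that each step only uses rules from the complete Clifford rule set in Figure~\ref{fig:clifford_rule} (or their obvious non-Clifford generalizations, since the lemma is stated for all ZX-diagrams). The four conditions will be dealt with one at a time, in an order chosen so that later steps do not reintroduce violations of earlier conditions.

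First I would eliminate $X$ spiders. Every $X$ spider can be turned into a $Z$ spider by the color-change rule ($h$): pushing a Hadamard box onto each of its incident edges converts it to a $Z$ spider. This establishes condition~1. Next, to get condition~2, I would use the fact that two adjacent $Z$ spiders joined by a plain (non-Hadamard) edge can be fused by spider fusion ($f$); equivalently, an edge with two Hadamard boxes on it is a plain edge by ($hh$), which lets me normalize every edge between two $Z$ spiders to carry exactly one Hadamard. Concretely: after step~1 every edge carries some number of Hadamard boxes; repeatedly apply ($hh$) to cancel them in pairs, so each edge has zero or one Hadamard; then fuse along every zero-Hadamard edge using ($f$). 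Iterating this until no plain internal edges remain gives a diagram in which all internal connections are Hadamard edges.

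For condition~3, I would remove parallel Hadamard edges and self-loops. A pair of parallel Hadamard edges between two $Z$ spiders is, up to scalar, equivalent to no edge at all: expanding one Hadamard and using spider fusion and the $\pi$-copy / identity rules shows the two Hadamards cancel. A self-loop on a $Z$ spider (Hadamard or plain) contributes only a scalar and can be deleted. After deleting all such, one must re-fuse any plain edges thereby created, but since removing parallel edges and self-loops does not create new plain edges between distinct spiders (it only deletes edges), condition~2 is preserved; a brief check handles the bookkeeping. Finally, for condition~4, I would attach to each input and each output its own fresh $Z$ spider (an identity $Z$ spider with phase $0$, inserted via the identity rule ($i$)) and join it by a Hadamard edge, using ($hh$) to absorb the extra Hadamard if needed; this guarantees each boundary wire meets exactly one $Z$ spider and each $Z$ spider meets at most one boundary. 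A subtlety is that inserting these boundary spiders may create a two-edge or parallel-edge configuration with an existing spider, so this step must be interleaved with one more pass of the condition~2 and condition~3 cleanup.

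The main obstacle is termination and the interaction between the conditions: fusing spiders to kill plain edges can merge spiders that then share parallel Hadamard edges or acquire self-loops, and cleaning those up can in principle create new plain edges. I would address this by a potential-function argument — e.g.\ order the cleanup so that the total number of spiders is nonincreasing and strictly decreases whenever a fusion happens, while the number of non-normalized edges strictly decreases in the edge-normalization phase — so that the whole procedure halts. I would also remark that all of this is already essentially proved in~\cite{Duncan_2020}; the contribution here is only to recall it, so a detailed termination proof can be deferred to that reference.
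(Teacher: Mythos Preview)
Your proposal is correct and follows essentially the same route as the paper: convert $X$ spiders to $Z$ spiders via $(h)$, cancel Hadamard pairs via $(hh)$ and fuse along plain edges via $(f)$, remove parallel Hadamard edges and self-loops (the paper invokes two derived rules $(d)$ and $(e)$ for this), and finally insert dummy identity $Z$ spiders via $(id)$ to satisfy condition~4. The paper likewise defers the details to~\cite{Duncan_2020}, so your termination discussion is more than what is needed here; one small inaccuracy is that a Hadamard self-loop contributes a phase rather than only a scalar, which is exactly what the $(e)$ rule records.
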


The proof of Lemma~\ref{lema:graph} is outlined in the publication by Duncan et al.~\cite{Duncan_2020}. To prove the lemma, two additional derivable rules are required, which are delineated  below:

\begin{figure}[ht]
	\centering
	\includegraphics[width=4in]{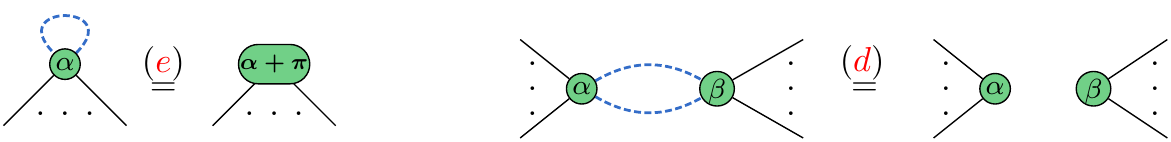}
	\label{fig:zxrule}
\end{figure}

The transformation process begins by converting all $X$ spiders into $Z$ spiders, utilizing the (\(\color{red}{h}\)) rule. Subsequently, unnecessary Hadamard edges are eliminated by applying the (\(\color{red}{hh}\)) rule. The next step involves fusing $Z$ spiders as much as possible, leading to a diagram composed of $Z$ spiders interconnected by Hadamard edges. To remove self-loop Hadamards, the (\(\color{red}{e}\)) rule is applied, and parallel Hadamards are managed with the (\(\color{red}{d}\)) rule, ensuring adherence to conditions 1-3 of the graph-like definition. To fulfill condition 4, the (\(\color{red}{id}\)) rule can be used to insert dummy $Z$ spiders.

Lemma \ref{lema:graph} gaurantee to transformed every ZX-diagram into equivalent graph-like ZX-diagram. 
The simplification process for graph-like ZX-diagrams generally involves applying rewrite rules based on local complementation~\cite{bouchet1988graphic} and pivoting~\cite{kotzig1968eulerian} techniques to eliminate as many interior spiders as possible. A spider is called interior when it is not connected to an input or an output, otherwise it is called a boundary spider.
 The detailed derivations of these simplification rules, utilizing ZX-calculus, are comprehensively explained in \cite{Duncan_2020}. 
 
 The simplification rule ($\color{red}lc_1$) resulting from local complementation is illustrated in the figure below:
\begin{figure}[H]
	\centering
	\includegraphics[width=4in]{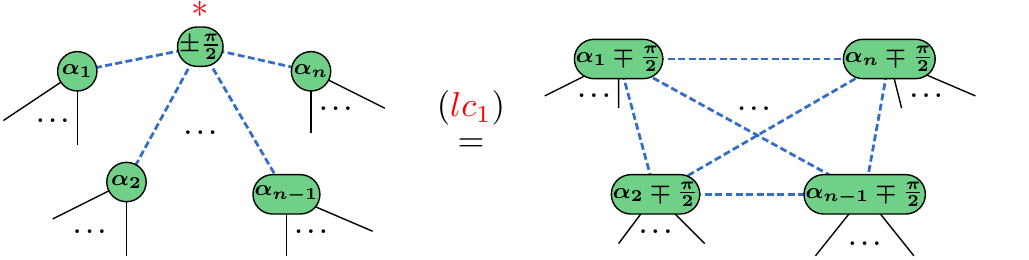}
\caption{Lemma 5.2, referenced in \cite{Duncan_2020}, is based on the application of the local complementation rule. This operation is labeled as the ($\color{red}lc_1$) rule and is executed on the node identified by the $\color{red}*$ symbol.}
  \label{fig:lc}
\end{figure}

The ($\color{red}pv_1$) rule resulting from pivoting is illustrated as:

\begin{figure}[H]
	\centering
	\includegraphics[width=4in]{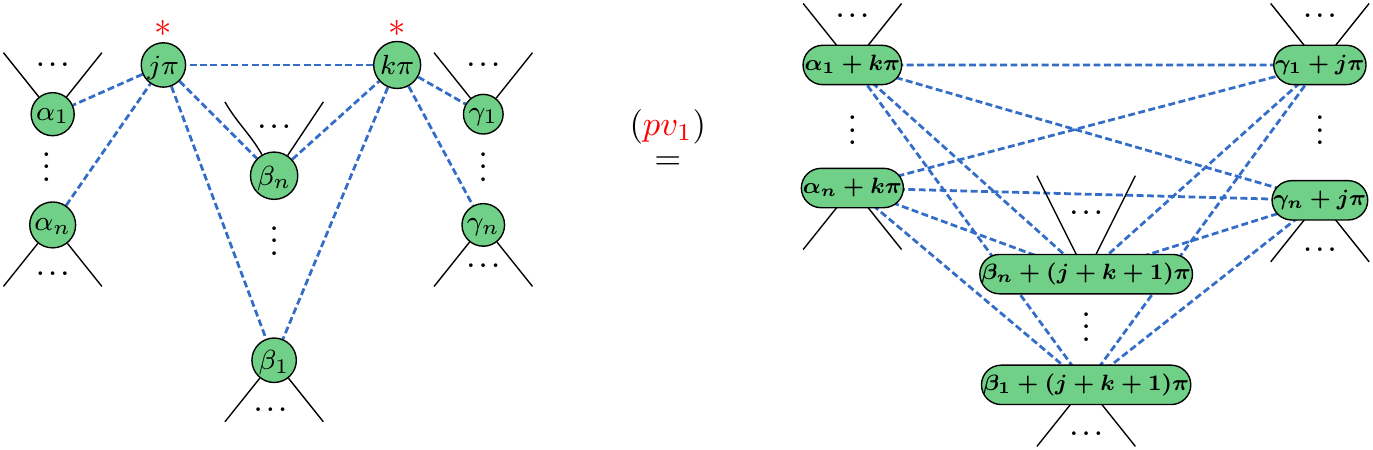}
\caption{Lemma 5.4, referenced in \cite{Duncan_2020},  is based on the application of the pivoting rule.This operation is labeled as the ($\color{red}pv_1$) rule and is executed on the two nodes identified by the $\color{red}*$ symbols.}
  \label{fig:pivot}
\end{figure}
Another generalization of the rule ($\color{red}pv_1$) is the following:
\begin{figure}[H]
	\centering
	\includegraphics[width=6in]{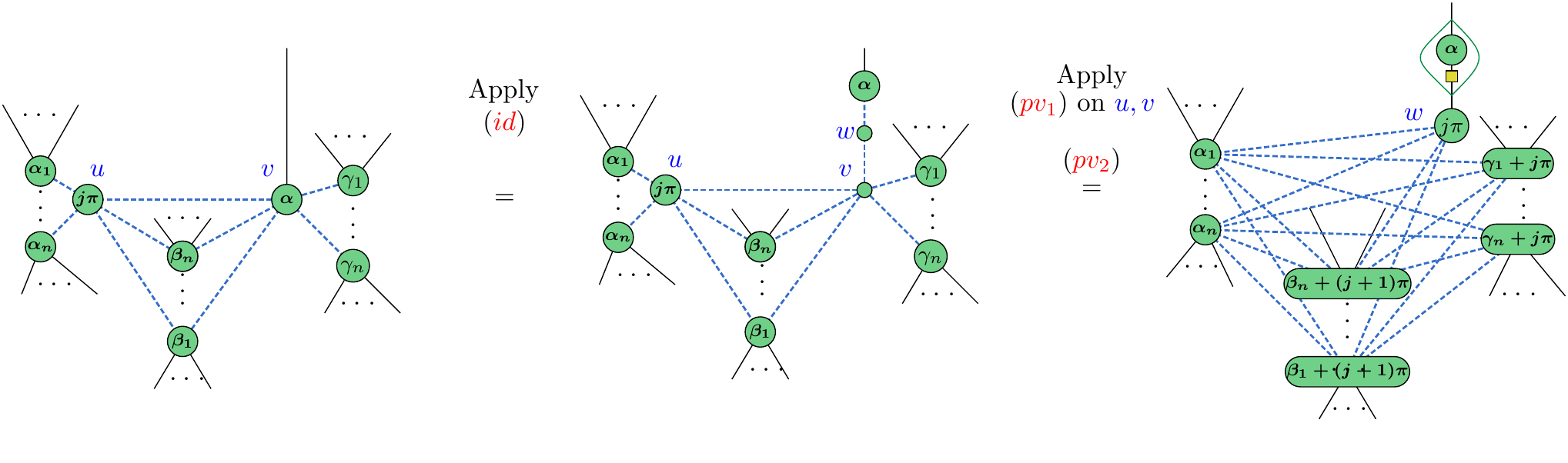}
	\caption{The rule ($\color{red}pv_2$) is a variation of ($\color{red}pv_1$), where rule ($\color{red}id$) is first applied to node $v$, followed by pivoting on nodes $u$ and $v$. This rule facilitates the removal of the interior white node $u$, treating $w$ as a boundary spider, and separately saving the single qubit unitaries (marked by green circles) to the top of $w$.}
  \label{fig:pivot2}
\end{figure}

\begin{theorem}[Simplification of Graph-like ZX-diagrams]
  \label{th:simplification}
  Given any graph-like ZX-diagram \(D\), a terminating procedure exists to transform \(D\) into a graph-like ZX-diagram \(D'\) (up to single-qubit unitaries on inputs/outputs). The resulting diagram \(D'\) will not contain:
  \begin{enumerate}
      \item Interior proper Clifford spiders,
      \item Adjacent pairs of interior Pauli spiders,
      \item Interior Pauli spiders adjacent to a boundary spider.
  \end{enumerate}
  Furthermore, if \(D\) consists solely of Clifford spiders, then \(D'\) will be devoid of interior spiders.
\end{theorem}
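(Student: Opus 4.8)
The plan is to establish termination first, then show that the resulting diagram satisfies the three exclusion properties, and finally handle the Clifford-only strengthening. The essential ingredients are the three rewrite rules $(\color{red}lc_1)$, $(\color{red}pv_1)$, and $(\color{red}pv_2)$ already recorded in the excerpt: $(\color{red}lc_1)$ deletes one interior proper Clifford spider (a spider with phase $\pm\pi/2$), $(\color{red}pv_1)$ deletes a pair of adjacent interior Pauli spiders (phases in $\{0,\pi\}$), and $(\color{red}pv_2)$ deletes an interior Pauli spider adjacent to a boundary spider while pushing a single-qubit Clifford onto that boundary. I would package these into the following procedure: while the diagram contains an interior proper Clifford spider, apply $(\color{red}lc_1)$; otherwise, if it contains two adjacent interior Pauli spiders, apply $(\color{red}pv_1)$; otherwise, if it contains an interior Pauli spider adjacent to a boundary spider, apply $(\color{red}pv_2)$; otherwise halt. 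After each step one must re-apply the bookkeeping of Lemma~\ref{lema:graph} (fuse spiders, clear self-loops and parallel Hadamard edges, re-insert dummy spiders so condition~4 holds) so that the object stays graph-like; this is routine and I would just cite it.

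For termination, the key observation is that each of the three rules strictly decreases the number of interior spiders: $(\color{red}lc_1)$ removes exactly one, $(\color{red}pv_1)$ removes exactly two, and $(\color{red}pv_2)$ removes exactly one (the node $w$ it touches is already a boundary spider, and the dummy spider introduced by the internal $(\color{red}id)$ step on $v$ is immediately consumed by the pivot). None of the rules creates new interior spiders — they only modify local connectivity among the surviving spiders and possibly attach single-qubit unitaries to boundary wires — so the interior-spider count is a non-negative integer that strictly decreases at every step. Hence the procedure terminates. Note the clean-up steps from Lemma~\ref{lema:graph} also never increase the interior count (fusion can only decrease it; dummy-spider insertion for condition~4 adds boundary spiders, not interior ones), so the monovariant survives the clean-up.

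For correctness, when the procedure halts none of the three rewrite preconditions is met, which is precisely the statement that $D'$ has no interior proper Clifford spider (property~1), no adjacent pair of interior Pauli spiders (property~2), and no interior Pauli spider adjacent to a boundary spider (property~3). Semantic equivalence up to single-qubit unitaries on the inputs/outputs is inherited: each rule is sound in the ZX-calculus up to a non-zero scalar (the excerpt's convention) and up to the single-qubit Cliffords that $(\color{red}pv_2)$ explicitly records on boundary wires, and composing finitely many such equivalences keeps the error confined to single-qubit unitaries on the boundary. For the final sentence, suppose $D$ contains only Clifford spiders (phases in $\tfrac{\pi}{2}\mathbb{Z}$), i.e. every spider is either proper Clifford or Pauli. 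At termination there are no interior proper Clifford spiders by property~1, so any remaining interior spider is Pauli; by property~3 it is not adjacent to any boundary spider, and by property~2 it is not adjacent to any interior Pauli spider — but since all interior spiders are Pauli, it would have no neighbours at all. In a graph-like diagram arising from a connected component this forces the component to be a single isolated spider, which the clean-up of Lemma~\ref{lema:graph} either absorbs into a scalar or, after inserting an identity, is removed; handling these degenerate isolated-spider components (and the scalar they contribute, which we ignore) is the one fiddly case. So $D'$ has no interior spiders at all.

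The main obstacle I anticipate is not termination or the halting characterisation — those are straightforward given the monovariant — but verifying that the clean-up steps of Lemma~\ref{lema:graph} interleave cleanly with the three rewrites without ever reintroducing a forbidden configuration or inflating the interior count, and in particular making precise the ``up to single-qubit unitaries'' clause so that the unitaries pushed out by $(\color{red}pv_2)$ (and any Hadamards relocated during clean-up) genuinely stay on the boundary wires and compose into well-defined single-qubit maps. This is essentially the content of the corresponding argument in \cite{Duncan_2020}, and I would lean on that reference for the detailed diagrammatic manipulations while keeping the monovariant argument self-contained here.
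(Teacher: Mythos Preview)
Your proposal is correct and follows essentially the same approach as the paper: iteratively apply the rules $(\color{red}lc_1)$, $(\color{red}pv_1)$, and $(\color{red}pv_2)$ until none is applicable, using the interior-spider count as a termination monovariant, and then read off the three exclusion properties from the halting condition. The paper's own proof is in fact just the single sentence ``This theorem is proven by iteratively applying the rules ($\color{red}lc_1$), ($\color{red}pv_1$), and ($\color{red}pv_2$) to the graph-like ZX-diagram,'' deferring all details to~\cite{Duncan_2020}; your write-up supplies the termination argument and the Clifford-only case analysis that the paper omits, but the underlying strategy is identical.
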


This theorem is proven by iteratively applying the rules ($\color{red}lc_1$), ($\color{red}pv_1$), and ($\color{red}pv_2$) to the graph-like ZX-diagram. As we deal with Clifford ZX-diagrams, it's noteworthy to re-emphasize the result of Theorem \ref{th:simplification}: Any graph-like Clifford ZX-diagram can be simplified to a graph-like ZX-diagram with no interior spiders (up to single-qubit unitaries on inputs/outputs). This result is instrumental for the analysis of graph codes, as will be demonstrated in the subsequent section.

\section{ZX Encoder Diagram of Graph Codes and Stabilizer Codes}
\label{sec:zxencoder}
This section explores the conversion of encoding circuits into ZX-diagrams, in alignment with Procedure \ref{pro:enc} and the encoding circuit depicted in FIG.~\ref{fig:encoder_circuit}.

The ZX-calculus is recognized as a versatile tool for representing linear maps from \( L: (\mathbb{C}^2)^{\otimes k} \to (\mathbb{C}^2)^{\otimes n} \), making it particularly suitable for depicting the encoding map of graph codes or, more generally, any binary code. The representation of the encoding map within the ZX-calculus framework is termed the ZX encoder diagram (or simply, the encoder). While the ZX-calculus can depict the encoding map, it does not inherently guarantee the map's simplicity or efficiency. Nonetheless, it will be shown that a graph-like structure within the ZX-calculus provides an optimal representation for graph codes.

Abstractly, the encoder for an \( [[n,k,d]] \) code can be visualized as follows:

\begin{figure}[H]
    \centering
    \includegraphics[width=1.5in]{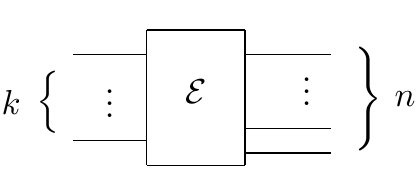}
    \caption{Abstract representation of the ZX encoder diagram for an \( [[n,k,d]] \) code.}
    \label{fig:encoder_abstract}
\end{figure}

Inputting a logical state from \( \mathcal{H}_2^{\otimes k} \) results in an encoded logical state within the codespace, which is a subspace of \( \mathcal{H}_2^{\otimes n} \). 

The encoding process of a standard form graph code can be systematically depicted in a ZX-diagram, starting with the transformation illustrated below:

\begin{figure}[H]
  \centering
  \includegraphics[width=5.4in]{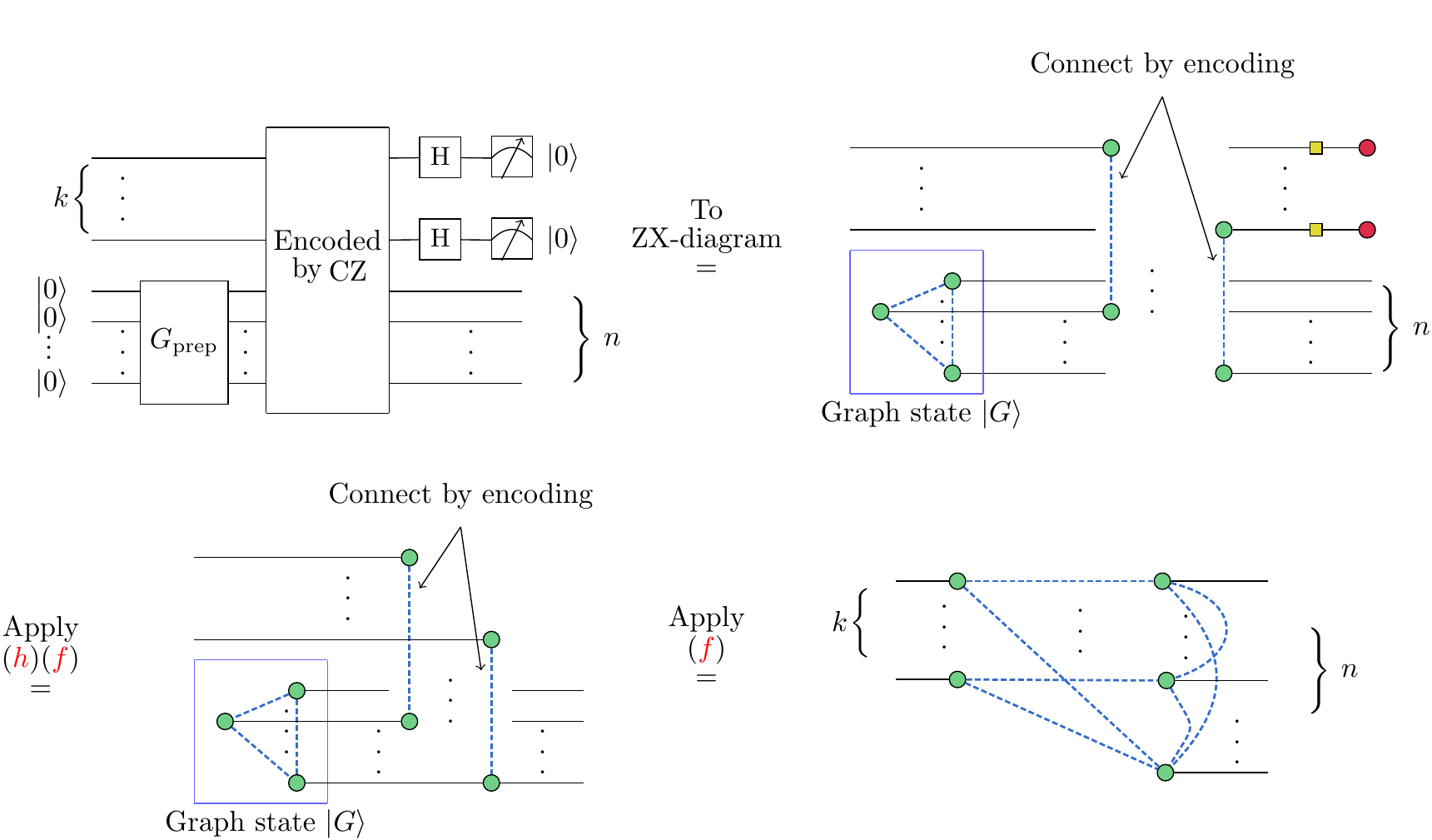}
  \caption{Transformation of the encoding circuit for a standard form graph code into a ZX-diagram.}
  \label{fig:zxmapping}
\end{figure}

This representation of the encoder for a standard form graph code distinctly embodies two pivotal elements of the graph code: the classical codewords \(\Gamma\) and the graph state \(|G\rangle\). The diagram comprises \(k\) $Z$ spiders on the left, symbolizing the input nodes, and \(n\) $Z$ spiders on the right, representing the output nodes. Each input node is linked to a single input edge, and similarly, each output node is connected to only one output edge. The input and output spiders are interconnected via Hadamard edges, with the bi-adjacency matrix between them reflecting the classical codeword \(\Gamma\). The \(n\) output nodes form a cluster interconnected by Hadamard edges in accordance with the graph \(G\), thereby specifying the stabilizer generator through the diagram's connectivity.

Expressed as a linear map, the diagram aligns precisely with the encoding circuit:

\begin{equation}
    \mathcal{E}_G = \sum_{i_1\ldots i_k}|\overline{i_1\ldots i_k}\rangle_G  \langle i_1\ldots i_k|
\end{equation}

As previously indicated in Section \ref{sec:graph_code}, a basis change with a unitary \( U \in \mathcal{U}(2^k) \) for the logical state \( |i_1\ldots i_k\rangle \) can be applied before encoding to tailor the encoding process for specific requirements. This modification does not impact the codespace, and, as a result, does not alter the stabilizer \( \mathcal{S} \) of the code. Diagrammatically, the encoder incorporating a basis change \( U \) is represented as follows:

\begin{figure}[H]
  \centering
  \includegraphics[width=2.1in]{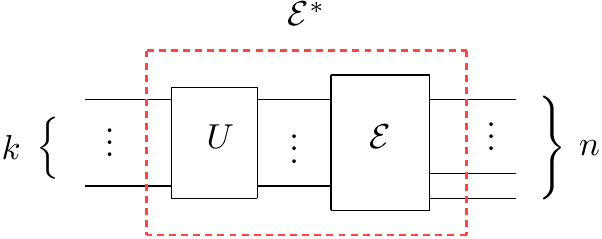}
  \caption{Representation of the encoder with a basis change \( U \).}
  \label{fig:encoder_with_basis_change}
\end{figure}

The encoding map with the basis change is denoted by \( \mathcal{E}^* = \mathcal{E}U \). While this basis change does not affect the codespace and thus preserves the code distance, it serves an important purpose. Specifically, the basis change \( U \) alters the encoded logical state, which in turn affects the physical implementation of logical gates. For instance, in the context of implementing an error-correcting quantum circuit, it is essential to translate the logical quantum circuit into a physical quantum circuit. Considering that the basic gate set available on a quantum computer typically depends on its architecture, an appropriately chosen \( U \) can significantly reduce the overhead of the physical quantum circuit if it facilitates a low-overhead or transversal implementation of the basic gate set in accordance with the architecture's constraints.

Furthermore, in the context of quantum code concatenation discussed in this work, the choice of basis change significantly influences the performance of concatenated codes. A prevalent strategy involves applying a basis change \(U' = H^{\otimes k'}\) (where the prime denotes the inner code) before encoding, thereby transitioning the encoding from the $Z$ basis to the $X$ basis. This technique is notably utilized in the Shor code\cite{Calderbank_1996}, which employs concatenation of the [[3,1,1]] repetition code with \(U' = H\). The objective of this approach is to alternate the correction of $X$ and $Z$ errors across different levels of the code, thereby enhancing the efficacy of error correction.

This discussion can be extended to the encoder of a stabilizer code:

\begin{proposition}
\label{pro:encoder_stb} 
The encoder \(\mathcal{E}\) of a stabilizer code can be depicted as a ZX-diagram that is locally Clifford equivalent to a graph state after applying map-state duality. Diagrammatically, the encoder \(\mathcal{E}\) is represented as follows:

\begin{figure}[H]
    \centering
    \includegraphics[width=2.5in]{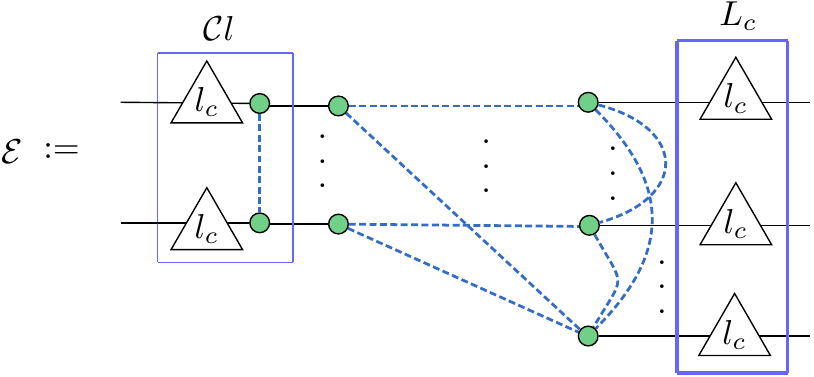}
\caption{ZX-diagram representing the stabilizer code encoder. Here, \(\mathcal{C}l\) denotes a Clifford gate, and \(L_c\) represents a local Clifford gate. The triangles indicate various single-qubit Clifford gates, which are not distinguished in this diagram.}
    \label{fig:encoder_stb}
\end{figure}

\end{proposition}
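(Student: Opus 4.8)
The plan is to start from the ZX encoder diagram of a standard form graph code established earlier in this section (Figure~\ref{fig:zxmapping}) and show that the encoder of an \emph{arbitrary} stabilizer code differs from it only by local Clifford operations on the output wires together with a Clifford basis change on the input wires, which is exactly the content of the claimed diagram. The key fact I would invoke is the well-known statement, cited in Section~\ref{sec:graph_code}, that every stabilizer code is local-Clifford equivalent to a graph code: given an \([[n,k,d]]\) stabilizer code with stabilizer group \(\mathcal{S}\), there is a product of single-qubit Clifford gates \(L_c = \bigotimes_{j=1}^n L_j\) mapping its codespace to the codespace of some standard-form graph code defined by a graph \(G\) and codeword matrix \(\Gamma\). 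Hence the encoder \(\mathcal{E}\) of the original code equals \(L_c^{\dagger}\,\mathcal{E}_G\,U\) for a suitable Clifford \(U\in\mathcal{C}_k\) on the logical inputs (the \(U\) records the freedom in choosing which orthonormal logical basis of the stabilizer codespace corresponds to \(|i_1\ldots i_k\rangle\)).

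Next I would translate this operator identity into the diagrammatic statement. Applying map-state duality (the Choi--Jamio{\l}kowski isomorphism recalled in Section~2.2) turns \(\mathcal{E}\) into a state \(|\mathcal{E}\rangle\) on \(n+k\) qubits; under the same duality \(\mathcal{E}_G\) becomes \(|\mathcal{E}_G\rangle\), which by inspection of Figure~\ref{fig:zxmapping} is a graph state (all \(Z\) spiders, connected only by Hadamard edges, each boundary attached to exactly one spider — i.e.\ it is already graph-like with no phases, so it is literally \(|G'\rangle\) for the graph \(G'\) on \(n+k\) vertices whose adjacency blocks are \(G\) and \(\Gamma\)). Since \(L_c\) acts qubit-wise and \(U\) is Clifford, \(|\mathcal{E}\rangle = (L_c^{\dagger}\otimes U^T)\,|\mathcal{E}_G\rangle\) is local-Clifford equivalent to a graph state exactly when \(U\) is allowed to be a general (non-local) Clifford on the \(k\) input legs; this is why the diagram draws a full Clifford box \(\mathcal{C}l\) on the inputs and only single-qubit Clifford triangles on the \(n\) outputs. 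I would then read the figure back the other way: the claimed ZX-diagram is precisely ``graph-like core \(|\mathcal{E}_G\rangle\), post-composed with local Cliffords \(L_c\) on outputs and a Clifford \(\mathcal{C}l\) on inputs,'' so the two descriptions coincide.

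Finally I would note the converse direction for completeness — any diagram of the depicted shape represents a valid stabilizer encoder — since local Cliffords and an input Clifford change of basis preserve the property of mapping a \(k\)-qubit space isometrically onto a stabilizer codespace; this follows because \(\mathcal{E}_G\) is an isometry onto the graph code's codespace and conjugating/post-composing by Cliffords preserves being a stabilizer code.

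The main obstacle I anticipate is purely bookkeeping rather than conceptual: being careful about \emph{where} the non-locality is allowed to sit. The map-state duality transposes the input system, so a basis change \(U\) on the inputs becomes \(U^T\) on the corresponding legs of the state, and one must check that absorbing single-qubit output Cliffords into the graph (via the \(lc_1\)/\(pv_1\) rules, or simply via the standard local-complementation characterization of LC-equivalence of graph states) genuinely leaves only single-qubit gates on the \(n\) output legs while forcing a general Clifford onto the \(k\) input legs — in other words, justifying that the asymmetry between the ``\(L_c\)'' triangles and the ``\(\mathcal{C}l\)'' box in Figure~\ref{fig:encoder_stb} is correct and not an artifact. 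I would handle this by appealing directly to the graph-code/stabilizer-code equivalence theorem (van den Nest et al., already cited), which supplies the \(L_c\) on the physical qubits, and treating the input Clifford as the residual gauge freedom in the logical basis, which need not be local.
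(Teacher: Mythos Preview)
Your approach is correct and reaches the same conclusion as the paper, but via a genuinely different route.

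The paper's proof does not go through the code-level local-Clifford equivalence theorem at all. Instead it works directly at the level of the Choi state: it writes \(|\mathcal{E}\rangle = \sum_{i_1\ldots i_k}|i_1\ldots i_k\rangle|\overline{i_1\ldots i_k}\rangle\) and exhibits \(n+k\) independent Pauli stabilizers of this state, namely \(I^{\otimes k}\otimes g_j\) for each stabilizer generator \(g_j\) together with \(X_j\otimes\bar X_j\) and \(Z_j\otimes\bar Z_j\) for each logical qubit. This shows \(|\mathcal{E}\rangle\) is a stabilizer state on \(n+k\) qubits, and the van den Nest et al.\ theorem then gives strict LC-equivalence to a graph state with single-qubit Cliffords on \emph{all} \(n+k\) legs.

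Your route --- invoke LC-equivalence of the stabilizer \emph{code} to a graph code on the \(n\) physical qubits, then record the residual logical-basis freedom as a Clifford \(U\in\mathcal{C}_k\) on the inputs --- has the advantage of explaining the asymmetry in Figure~\ref{fig:encoder_stb} directly: it makes transparent why the output side carries only single-qubit triangles \(L_c\) while the input side carries a full Clifford box \(\mathcal{C}l\). Its mild cost is the point you yourself flag: the sentence ``\(|\mathcal{E}\rangle=(L_c^\dagger\otimes U^T)|\mathcal{E}_G\rangle\) is local-Clifford equivalent to a graph state'' is not literally what you have shown when \(U\) is non-local, so to recover the exact wording of the proposition you still need the one-line observation that a Clifford acting on a graph state is a stabilizer state and hence LC-equivalent to some (possibly different) graph state. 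The paper's argument gets that for free; yours gets the diagrammatic decomposition \(\mathcal{E}=L_c\,\mathcal{E}_G\,\mathcal{C}l\) for free. Both are valid.
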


\begin{proof}
Consider an \([[n,k,d]]\) stabilizer code equipped with a stabilizer \(\mathcal{S} = \langle g_1, \ldots, g_{n-k} \rangle\), where the \(g_i\) represent \(n-k\) independent generators. The encoded logical states are denoted by \(|\overline{i_1 \ldots i_k}\rangle\).

The encoding operation for this stabilizer code is specified by:
\begin{equation}
    \mathcal{E} = \sum_{i_1 \ldots i_k}|\overline{i_1 \ldots i_k}\rangle \langle i_1 \ldots i_k|,
\end{equation}
which, utilizing map-state duality, can be reformulated as:
\begin{equation}
    |\mathcal{E}\rangle = \sum_{i_1 \ldots i_k} |i_1 \ldots i_k\rangle|\overline{i_1 \ldots i_k}\rangle .
\end{equation}
The (unnormalized) state \(|\mathcal{E}\rangle\) is stabilized by:
\begin{equation}
    \langle I^{\otimes k} \otimes g_1, \ldots, I^{\otimes k} \otimes g_{n-k}, X_1 \otimes \overline{X}_1, Z_1 \otimes \overline{Z}_1, \ldots, X_k \otimes \overline{X}_k, Z_k \otimes \overline{Z}_k \rangle,
\end{equation}
comprising \(n+k\) independent stabilizer generators. The logical Pauli operators of the stabilizer code are typically selected from the Pauli group \cite{gottesman1997stabilizer}, which confirms that \(|\mathcal{E}\rangle\) qualifies as a stabilizer state. This state is locally Clifford equivalent to a graph state across \(n+k\) qubits \cite{van2004graphical}, allowing for its representation within a ZX-diagram. The transition to the ZX encoder diagram for the stabilizer code is finalized by converting the kets to bras.
\end{proof}


The evaluation of the ZX encoder diagram from dual perspectives yields valuable insights. Firstly, the elimination of Clifford \(\mathcal{C}l\) and local Clifford \(L_c\) from the diagram converts it into a standard form graph code while preserving the original code distance of the stabilizer code. This preservation is attributable to \(\mathcal{C}l\), which, although it modifies the encoded basis, does not alter the codespace. Similarly, \(L_c\), consisting of local Clifford operations, does not influence the code distance.

Secondly, when constructing large stabilizer codes from smaller ones by stacking the encoder in a modular fashion, such as with concatenated stabilizer codes and holographic codes, the diagrammatic approach facilitates the analysis of more complex coding structures. The encoder for a larger code can be derived by executing the simplification process, as detailed in Theorem~\ref{th:simplification}. Once the fully simplified encoder is obtained, the framework described herein becomes readily applicable to analyzing the larger code. Given the encoder diagram, the stabilizer, logical Pauli operators, and the encoding circuit can be efficiently constructed, as detailed in Appendices \ref{ap:encoder_circuit} and \ref{ap:stab}.

It is important to note that the ZX encoder diagram, as defined, is not inherently unique. A graphically distinct but mathematically equivalent encoder can be generated through any sequence of pivoting and local complementation operations. In situations where only the stabilizer of a stabilizer code is known and the logical operators are unspecified, Khesin et al. \cite{khesin2023graphical} have proposed a canonical form for the encoder and provided a methodology for transitioning to this canonical form.

\begin{proposition}
\label{pro:enc_valid}
  An encoder, as described in Proposition \ref{pro:encoder_stb}, is a valid stabilizer encoder if and only if the \( k \times n \) bi-adjacency matrix \(\Gamma\) between input nodes and output nodes has rank \( k \), where \( n > k \).
\end{proposition}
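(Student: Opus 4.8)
The plan is to characterize exactly when the ZX-diagram of Figure~\ref{fig:encoder_stb} represents a genuine isometric encoder, i.e.\ a linear map $\mathcal{E}\colon (\mathbb{C}^2)^{\otimes k}\to(\mathbb{C}^2)^{\otimes n}$ whose image is a $2^k$-dimensional stabilizer codespace (equivalently, after map-state duality, a state $|\mathcal{E}\rangle$ stabilized by $n+k$ independent Pauli generators in the form displayed in the proof of Proposition~\ref{pro:encoder_stb}). Since $\mathcal{C}l$ and $L_c$ are (local) Clifford unitaries, they are invertible and affect neither the rank of $\mathcal{E}$ nor the dimension of its image; hence it suffices to analyze the ``bare'' graph-code core of the diagram, which — up to the single-qubit Cliffords we have stripped — is exactly the standard form graph-code encoder built from a graph $G$ on $n$ output spiders and a $k\times n$ bi-adjacency matrix $\Gamma$ of Hadamard edges connecting inputs to outputs.

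First I would unpack what the diagram computes. Reading it as a linear map, each of the $k$ input spiders is a $Z$-spider of arity $1+(\text{row weight of }\Gamma)$ and each output spider carries one output wire; contracting the diagram gives, up to scalar, $\mathcal{E}_G=\sum_{i_1\ldots i_k} Z^{i_1\boldsymbol{\alpha}_1+\cdots+i_k\boldsymbol{\alpha}_k}|G\rangle\,\langle i_1\ldots i_k|$ where $\boldsymbol{\alpha}_j$ is the $j$-th row of $\Gamma$, as already recorded in Eq.~(for $\mathcal{E}_G$) above. The key observation is that $\{|G\rangle\}$ together with the $Z^{\boldsymbol{\alpha}_j}$ generate orthogonal states precisely controlled by $\Gamma$: two basis inputs $\mathbf{i},\mathbf{i}'\in\mathbb{F}_2^k$ are sent to the same output state iff $(\mathbf{i}+\mathbf{i}')\Gamma=0$, because $Z^{\mathbf{c}}|G\rangle=Z^{\mathbf{c}'}|G\rangle$ iff $\mathbf{c}=\mathbf{c}'$ (distinct $Z$-strings act distinctly on the graph state). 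Thus $\mathcal{E}_G$ has rank equal to $2^{\rank_{\mathbb{F}_2}\Gamma}$, and the image has the full dimension $2^k$ — so $\mathcal{E}$ is a valid encoder onto a $2^k$-dimensional codespace — if and only if $\rank_{\mathbb{F}_2}\Gamma=k$. For the converse direction, if $\rank\Gamma<k$ then there is a nonzero $\mathbf{v}\in\ker(\Gamma^T$-action$)$ with $\mathbf{v}\Gamma=0$, forcing $\mathcal{E}_G(|\mathbf{0}\rangle-|\mathbf{v}\rangle)=0$, so $\mathcal{E}$ is not injective and cannot encode $k$ logical qubits faithfully.

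I would then lift this back through the Cliffords: since $\mathcal{E}=\mathcal{C}l^{(n)}\,\mathcal{E}_G\,\mathcal{C}l^{(k)}$ with invertible $\mathcal{C}l$'s (and $L_c$ absorbed into $\mathcal{C}l^{(n)}$), $\rank\mathcal{E}=\rank\mathcal{E}_G$ and the image dimension is preserved, so the rank-$k$ criterion on $\Gamma$ is exactly the condition for $\mathcal{E}$ to be a valid stabilizer encoder. It is worth phrasing the ``valid'' direction in the stabilizer language too: when $\rank\Gamma=k$, the $n+k$ operators $\{I^{\otimes k}\otimes g_i\}\cup\{X_j\otimes\overline{X}_j, Z_j\otimes\overline{Z}_j\}$ derived from the diagram's connectivity are independent (one checks the associated $2(n+k)\times(n+k)$ symplectic check matrix has full column rank, with the $\Gamma$-block being the only place a rank deficiency could enter), so $|\mathcal{E}\rangle$ is a bona fide stabilizer state and $\mathcal{E}$ an isometry onto the codespace; conversely a rank drop in $\Gamma$ makes one of these generators dependent, collapsing the stabilized subspace.

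The main obstacle I anticipate is being careful about \emph{what} notion of ``valid stabilizer encoder'' is meant and making the two translations — linear-map picture versus stabilizer-state picture via Choi–Jamio\l kowski — line up cleanly; in particular one must confirm that the row space of $\Gamma$ (not $\Gamma$ together with a block of $G$) is what governs injectivity, which uses that the output spiders form a graph \emph{state} so the $Z^{\mathbf{c}}$ action is free. Everything else — invertibility of the stripped Cliffords, the count of independent generators — is routine linear algebra over $\mathbb{F}_2$. The hypothesis $n>k$ is needed only to ensure the codespace is a proper nontrivial subspace (so that ``encoding'' is meaningful) and plays no role in the rank argument itself.
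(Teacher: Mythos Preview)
Your proposal is correct and follows essentially the same approach as the paper: strip off the invertible Cliffords $\mathcal{C}l$ and $L_c$, reduce to the bare graph-code encoder $\mathcal{E}_G$, and observe that injectivity holds iff no nonzero $\mathbf{v}\in\mathbb{F}_2^k$ satisfies $\mathbf{v}\Gamma=0$, i.e.\ iff $\rank\Gamma=k$. Your write-up is more careful than the paper's (you make explicit why $Z^{\mathbf{c}}|G\rangle=Z^{\mathbf{c}'}|G\rangle$ forces $\mathbf{c}=\mathbf{c}'$, and you add the stabilizer-state/independence-of-generators perspective as a cross-check), but the underlying argument is the same.
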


\begin{proof}
The sufficiency of this condition is evident when considering the encoder as \(\mathcal{E} = L_c \mathcal{E}_G \mathcal{C}l\). If \(\Gamma\) has rank \(k\), then \(\mathcal{E}_G\) serves as a valid graph code encoder. As a result, \(\mathcal{E}\) is also valid, since it essentially performs a basis change on the logical state \(|i_1 \ldots i_k\rangle\) and applies a transformation to the codespace \(C_s\) via \(L_cC_sL_c^\dagger\).

The necessity of this condition stems from the intrinsic structure of the codespace. For the encoding to be effective, the bi-adjacency matrix \(\Gamma\) must have full rank to guarantee that each input state is encoded into a distinct codeword. If \(\Gamma\) lacks full rank, then there exist two distinct binary vectors \(\boldsymbol{x}\) and \(\boldsymbol{y}\) such that their respective linear combinations with the rows of the matrix are identical, i.e., \(x_0\boldsymbol{\alpha}_1 + \ldots + x_k\boldsymbol{\alpha}_k = y_0\boldsymbol{\alpha}_1 + \ldots + y_k\boldsymbol{\alpha}_k\). Consequently, the input states \(|x_0 \ldots x_k\rangle\) and \(|y_0 \ldots y_k\rangle\) would be mapped to the same encoded state, contravening the essential principle of stabilizer codes: each unique input state must correspond to a unique encoded state.
\end{proof}

This proposition becomes helpful when we are building a small stabilizer code encoder into a larger one since it can be used to determine whether an encoder is valid.
To conclude this subsection, we will present an example that illustrates the \( [[7,1,3]] \) Steane code. Beginning with its stabilizer, we can systematically construct the logical Pauli operators and the encoding circuit, as detailed in \cite{gottesman1997stabilizer,grassl2011variations}. The encoding circuit is then converted into a Clifford ZX-diagram, which is subsequently fully simplified to yield the ZX encoder diagram.

\begin{figure}[H]
  \centering
  \includegraphics[width=6.3in]{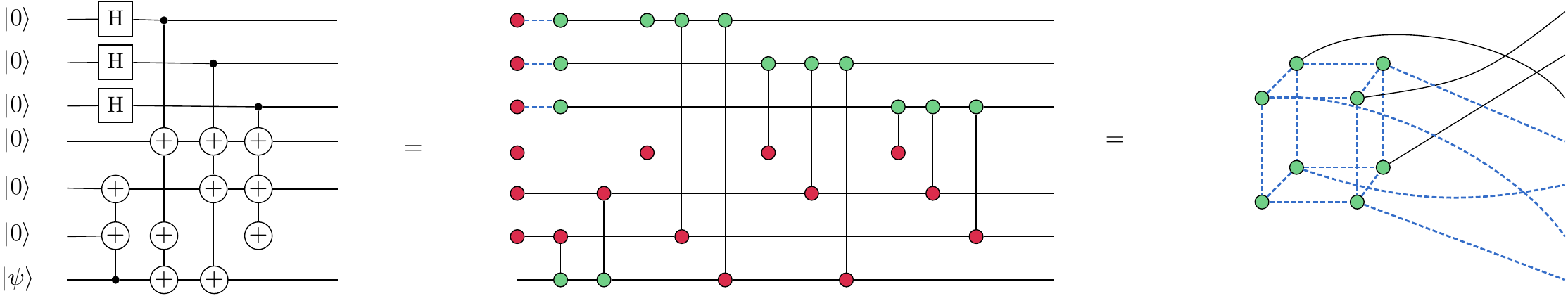}
  \caption{Converting the encoding circuit of \( [[7,1,3]] \) Steane code into a ZX encoder diagram. The Steane code is locally Clifford equivalent to a graph code through a partial Hadamard operation.}
  \label{fig:[7,1,3]}
\end{figure}

With the ZX encoder diagram established, we can concatenate graph codes (and more generally, stabilizer codes) and derive the graphical representation of the concatenated code. Once we perform the simplification and derive the ZX encoder diagram of the concatenated code, the entire code is specified; that is, the stabilizer group and logical Pauli operations. This purely diagrammatic representation is extremely useful for the analysis of concatenated codes and offers advantages over the tableau analysis of stabilizer concatenation.

\textbf{Remark:} This work primarily employs graph-like ZX-diagrams to depict the encoders of stabilizer codes, extending the foundational principles of graph codes. The underlying motivation for this approach is anchored in the simplification capabilities of Clifford ZX-diagrams, as articulated in Theorem~\ref{th:simplification}. Although this methodology proves effective for analyzing stabilizer codes, it may not invariably be the optimal strategy for all code types. For example, in the case of codes with dual structures, such as CSS codes, a phase-free ZX-diagram might present a more fitting representation, as suggested by \cite{kissinger2022phase}. However, considering that every stabilizer code is locally Clifford equivalent to some graph code, the investigation of graph-like encoders continues to be a worthwhile pursuit.

\section{Concatenated Codes}
\label{sec:Concatenation}

Concatenated quantum error-correcting codes are essential for protecting quantum information against decoherence and operational errors. These codes utilize a multi-tiered framework of quantum error correction, which is key to mitigating error propagation across qubits. The structure, featuring nested layers of quantum error correction, is critical for the scalable and fault-tolerant architectures required in quantum computing and communication.

In this section, we delve into the domain of concatenated graph codes, beginning our exploration with an introduction to the fundamentals of concatenated codes. We proceed to leverage the ZX encoder diagram as a vital tool for the examination of graph code concatenation. Our analysis includes re-proving the Generalized Local Complementation Rule, specifically tailored to concatenated graph codes.

The methodology showcased here illustrates the ZX-calculus's capability to manage arbitrary concatenations of graph codes under various bases, that is, for any Clifford operation \( U' \in \mathcal{C}_n \). Moreover, the insights obtained from our analysis of graph codes can be seamlessly extended to the graphical representations of concatenated stabilizer codes. This adaptability emphasizes the versatility and efficacy of the ZX encoder diagram, highlighting its substantial utility in the representation and scrutiny of complex concatenated structures within the field of quantum error correction.




\textbf{Concatenated Codes}: A concatenated quantum code combines multiple quantum codes in a hierarchical manner to improve error correction. It consists of an outer quantum code encoding \( k \) logical qubits into \( n \) physical qubits with a minimum distance \( d \), and an inner quantum code encoding \( k' \) logical qubits into \( n' \) physical qubits with a minimum distance \( d' \). Each physical qubit of the outer code is encoded using the inner code, resulting in a concatenated code that encodes \( k \times k' \) logical qubits into \( n \times n' \) physical qubits, with a minimum distance at least equal to \( d \times d' \). 

In the construction of concatenated quantum codes, \( k \times k' \) logical qubits are first partitioned into \( k \) subsets, each comprising \( k' \) qubits. Each subset is encoded into \( n' \) qubits via the inner quantum error-correcting code. These encoded subsets are then treated as new logical units and further encoded using an outer code, expanding each to \( n \) physical qubits. The resultant encoding yields a robust \( n \times n' \) physical qubit array, with each original logical qubit doubly protected. This hierarchical encoding scheme can be recursively applied, creating multiple layers of error correction to improve the quantum computer's resilience against errors\cite{knill1996concatenated}.

\begin{proposition}
  In ZX encoder diagrams, the concatenation of a quantum code is represented by connecting the output of the outer code to the input of the inner code.
\end{proposition}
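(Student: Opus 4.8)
The plan is to read the proposition as a dictionary entry between the operational definition of code concatenation and the compositional structure of the ZX-calculus, so that the proof amounts to checking that ``concatenate two codes'' and ``compose their encoders'' denote the same operation. First I would recall from Proposition~\ref{pro:encoder_stb} that the encoder of any stabilizer code is the linear map $\mathcal{E} = \sum_{i}|\overline{i}\rangle\langle i|$ and is therefore captured by a ZX-diagram whose inputs are the logical wires and whose outputs are the physical wires. Write $\mathcal{E}_{\mathrm{out}}$ for the encoder of the outer $[[n,k,d]]$ code ($k$ input wires, $n$ output wires) and $\mathcal{E}_{\mathrm{in}}$ for the encoder of the inner $[[n',k',d']]$ code ($k'$ input wires, $n'$ output wires), the latter already incorporating any inner basis change $U'$ (e.g.\ $U' = H^{\otimes k'}$ as in the Shor code) via $\mathcal{E}_{\mathrm{in}}^{*} = \mathcal{E}_{\mathrm{in}}U'$, so that no separate treatment of $U'$ is needed.

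Next I would unwind the definition of concatenation recalled above: each physical qubit of the outer code (more precisely, the $k'$ logical slots assigned to it) is re-encoded by the inner code, which at the level of linear maps means first applying $\mathcal{E}_{\mathrm{out}}$ and then applying a copy of $\mathcal{E}_{\mathrm{in}}$ to each group of outer output wires, i.e.\ $\mathcal{E}_{\mathrm{concat}} = \mathcal{E}_{\mathrm{in}}^{\otimes n}\circ P\circ \mathcal{E}_{\mathrm{out}}^{\otimes k'}$ for a suitable wire permutation $P$ that regroups the $nk'$ intermediate wires into $n$ blocks of $k'$. I would then invoke the basic compositional rule stated in the ZX preliminaries---composing two diagrams means connecting the outputs of the first to the inputs of the second---together with the ``only connectivity matters'' principle, which makes the permutation $P$ (a product of SWAPs) free to absorb. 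This produces exactly the claimed picture: the ZX encoder diagram of the concatenated code is the outer encoder diagram with its $n$ output wires plugged into the input wires of the inner encoder diagram(s), the wire counts matching because the total number of inner inputs equals $n$.

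Finally I would confirm that the composed diagram really is the concatenated encoder, by evaluating it on a computational-basis logical input $|i_1\ldots i_{kk'}\rangle$ and checking it yields the concatenated codeword, or equivalently by checking that its stabilizer group is the one given by the standard concatenation rule---the outer stabilizer generators with each single-qubit Pauli replaced by its inner logical Pauli, together with $n$ copies of the inner stabilizer generators---which can be read off from the connectivity of the simplified graph-like form as in Appendix~\ref{ap:stab}. I do not anticipate a genuine obstacle: the statement is essentially a functoriality observation, and the only care required is bookkeeping about wire counts and the partition/permutation structure relating outer output qubits to inner logical registers.
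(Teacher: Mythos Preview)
Your proposal is correct and follows essentially the same approach as the paper: both arguments rest on the observation that concatenation is, by definition, composition of encoding maps, and that composition in the ZX-calculus is realised by wiring outputs to inputs. The paper's justification is terser---it simply points to the conversion of the concatenated encoding circuit into a ZX-diagram (Fig.~\ref{fig:concatenate_zx}) rather than writing out the linear-map factorisation $\mathcal{E}_{\mathrm{in}}^{\otimes n}\circ P\circ \mathcal{E}_{\mathrm{out}}^{\otimes k'}$ explicitly---but your more formal unpacking of the wire bookkeeping and the final stabilizer check are sound elaborations of the same idea, not a different route.
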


The encoding circuit for concatenated codes can be systematically converted to a ZX-diagram by regarding the encoding circuit as a ZX encoder diagram.

\begin{figure}[H]
  \centering
  \includegraphics[width=5.5in]{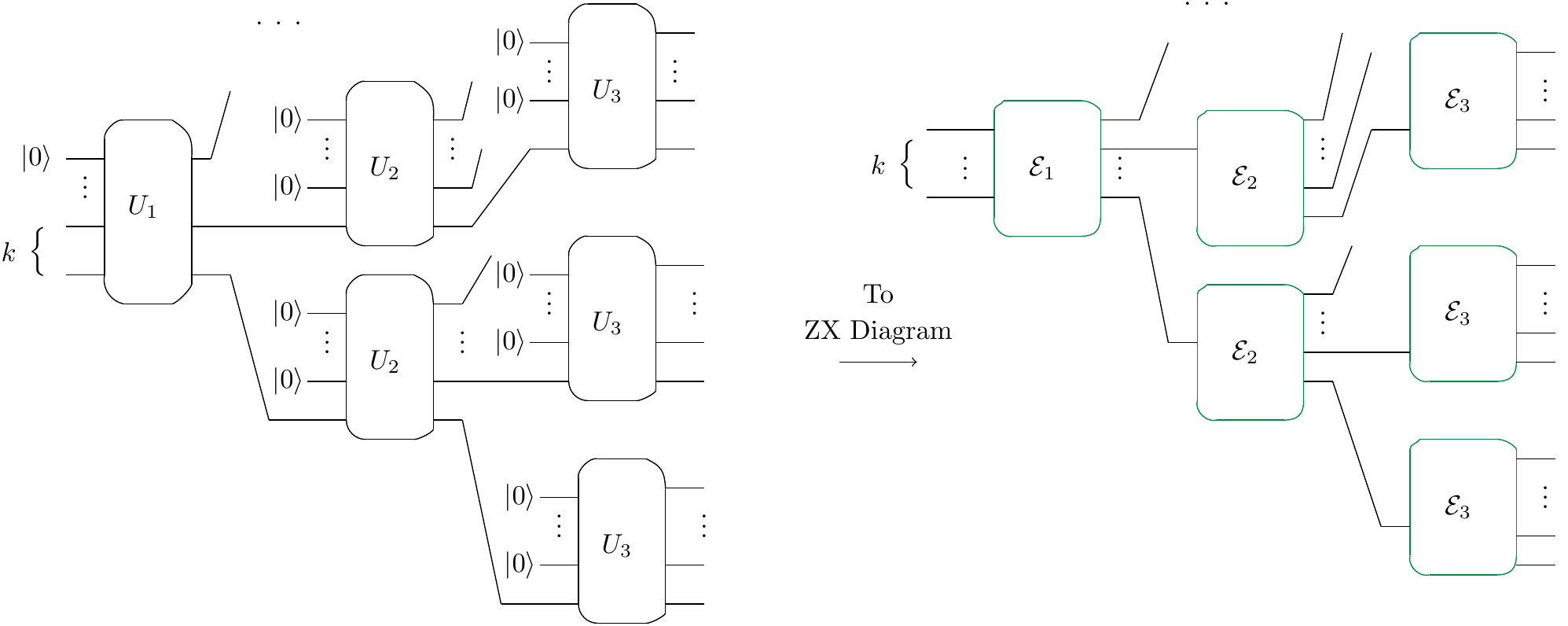}
  \caption{This diagram converts the encoding circuit for two-level concatenated codes into a ZX encoder diagram. For simplicity, only a portion of the qubits being concatenated to the inner code is depicted.}
  \label{fig:concatenate_zx}
\end{figure}

In ZX encoder diagrams, concatenating codes is efficiently realized by stacking encoders, which transform the outer code's output nodes and the inner code's input nodes into interior nodes within the graph-like ZX-diagram. For stabilizer codes, simplification rules applied to these diagrams allow for the removal of these interior nodes, yielding a streamlined representation of the concatenated code. This simplified ZX encoder diagram defines the concatenated code, facilitating the clear tracing of information flow and the derivation of logical operations. The approach simplifies the construction of complex concatenated stabilizer codes and enhances the development of new error correction strategies. Thus, the ZX encoder diagram emerges as a vital tool in quantum error correction, offering both clarity and utility in navigating the complexities of quantum code analysis.

\subsection{Concatenated Graph Codes}
The landscape of concatenated graph codes research includes various results, one of which addresses the graphical representation of concatenated graph codes through the addition of auxiliary vertices. Hein et al.~\cite{hein2006entanglement} explored this notion, but the specific structure of the resultant graph without these auxiliary vertices remains an open question. Generalized concatenated codes have been graphically characterized by Grassl et al.~\cite{grassl2009generalized}, but only for cases where the outer code adheres to a particular structure.

Building upon these foundations, Beigi et al.~\cite{beigi2011graph} developed a systematic method for constructing concatenated graph codes, specifically when the inner code utilizes a basis change unitary of the form \( U' = H^{\otimes k'} \). Their technique employs "generalized local complementation" (GLC) to merge the graphs of the inner and outer codes, establishing the graph of the concatenated code. Despite its utility, the GLC-based method has its constraints. It is not universally applicable, particularly when GLC is not viable, resulting in stabilizer codes that diverge from the typical graph code paradigm. 

To address these limitations, this discussion introduces the ZX encoder diagram as an analytical tool for concatenated codes. The ZX encoder diagram enhances the examination of concatenated codes beyond the scope of Beigi et al.'s method, offering a more versatile approach to proofs and applications.

The subsequent sections will revisit the contributions of Beigi et al.~\cite{beigi2011graph} and demonstrate how the ZX encoder diagram can be effectively applied to the study of concatenated graph codes. Special attention will be given to GLC and its role in this context.

\begin{definition}\textbf{Generalized Local Complementation (GLC)}
Consider a graph \( G \) consisting of two clusters of nodes, denoted as \( N_1 \) and \( N_2 \). The GLC on \( G \) results in a new graph \( G' \), characterized by the following properties:
\begin{itemize}
    \item The subgraphs induced by \( N_1 \) and \( N_2 \) in \( G \) are preserved in \( G' \), meaning the internal structures of these clusters remain unchanged.
    \item The inter-cluster edges between nodes in \( N_1 \) and \( N_2 \) in \( G' \) are the complements of those in \( G \). Specifically, an edge exists between nodes in different clusters in \( G' \) if and only if no corresponding edge exists in \( G \).
\end{itemize}
In essence, GLC maintains the individual cluster structures within \( G \) while inverting the connectivity between them. Figure \ref{fig:glc_def} illustrates the GLC operation applied to clusters \( N_1 \) and \( N_2 \) in graph \( G \).
\end{definition}

\begin{figure}[H]
\centering
\includegraphics[width=3.8in]{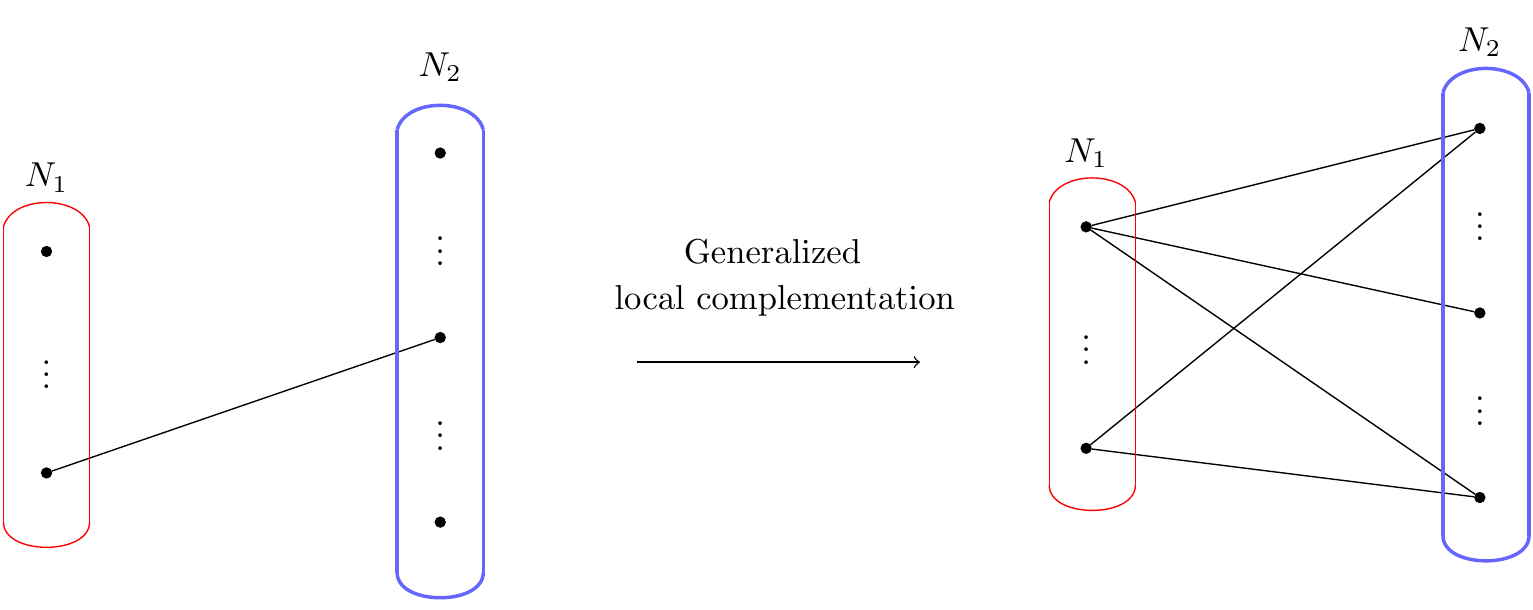}
\caption{Generalized local complementation between clusters \( N_1 \) and \( N_2 \) in graph \( G \).}
\label{fig:glc_def}
\end{figure}

The term ``generalized'' differentiates this operation from the standard local complementation, which typically involves a single node \( i \) and its neighborhood \( N(i) \). In the context of graph code concatenation, as discussed in \cite{beigi2011graph}, the graph of the concatenated code is obtained by sequentially applying the GLC rule between each pair of nodes from the inner and outer codes. Thus, GLC acts as a pivotal operation in creating complex graph structures from simpler components.

With the concept of GLC defined, we now introduce the result obtained by Beigi et al.~\cite{beigi2011graph}.

\begin{theorem}[GLC for Graph Codes]
\label{thm:glc}
Let \( C \) and \( C' \) be outer and inner graph codes in standard form, respectively. Assume \( C' \) undergoes a basis change \( U' = H^{\otimes k'} \). To construct the ZX encoder diagram for the concatenated code, one applies the GLC rule to each pair of Hadamard-connected nodes \( (i, i') \), where \( i \) belongs to \( C \) and \( i' \) to \( C' \). Specifically, the GLC rule is applied to the sets \( N(i) \setminus \{i'\} \) and \( N(i') \setminus \{i\} \), followed by the removal of the nodes \( (i, i') \). This procedure is repeated for all corresponding node pairs in the graph.
\end{theorem}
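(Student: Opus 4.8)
The plan is to verify Theorem~\ref{thm:glc} by translating the concatenation operation into the graph-like ZX-calculus and then showing that the prescribed sequence of GLC moves is precisely what the simplification rules of Theorem~\ref{th:simplification} produce. First I would set up the ZX encoder diagram for the concatenated code using the stacking construction of the preceding proposition: the outer code $C$ contributes a graph-like diagram whose $n$ output $Z$-spiders are connected (via Hadamard edges encoding $\Gamma$) to its $k$ input spiders and (via Hadamard edges encoding $G_{\mathrm{outer}}$) among themselves; the inner code $C'$, carrying the basis change $U' = H^{\otimes k'}$, contributes $n/n' = n$ copies of its own graph-like diagram. Concatenation identifies each outer output spider $i$ with an inner input spider $i'$ — equivalently, it places a Hadamard edge between them and fuses — turning both into \emph{interior} spiders of the combined diagram. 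The key point is that, because $U' = H^{\otimes k'}$ was absorbed, the edge between $i$ and $i'$ becomes a plain (non-Hadamard) connection after fusing, so $i$ and $i'$ merge into a single interior Pauli ($0$-phase) spider, or an adjacent pair thereof, which Theorem~\ref{th:simplification} instructs us to eliminate.

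The second step is to identify the elimination move. Since $i$ and $i'$ (after the Hadamard cancellation coming from $U'$) are interior Pauli spiders, I would apply the pivot rule ($\color{red}pv_1$), or its variant ($\color{red}pv_2$) when one of the relevant neighbours is a boundary spider, on the pair $(i,i')$. Pivoting on an edge $(u,v)$ in a graph-like diagram is exactly the operation that complements the bipartite connectivity between the three neighbourhood sets $N(u)\setminus\{v\}$, $N(v)\setminus\{u\}$, and $N(u)\cap N(v)$, and then deletes $u$ and $v$. Here $N(i)\setminus\{i'\}$ consists of the remaining outer-code neighbours of $i$ (other outer spiders, and the outer input spiders linked through $\Gamma$), while $N(i')\setminus\{i\}$ consists of the remaining inner-code neighbours of $i'$ (inner-code graph spiders and inner output spiders). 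Because in the standard form $i'$ is an inner \emph{input}, its only neighbours are inner-code spiders, so $N(i)\cap N(i')$ is empty and the pivot reduces to complementing the complete bipartite graph between $N(i)\setminus\{i'\}$ and $N(i')\setminus\{i\}$ and removing $(i,i')$ — which is verbatim the GLC rule of the theorem statement. Iterating over all $k'$ such pairs for each of the $n$ blocks (and handling the order-independence, since distinct pairs live in essentially disjoint neighbourhood regions) gives the full concatenated graph.

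The main obstacle I anticipate is bookkeeping the interaction between the $\Gamma$-edges of the outer code and the $U' = H^{\otimes k'}$ basis change: one has to check carefully that, after absorbing $U'$ and after the first batch of pivots, the spiders that were formerly inner inputs and are now being contracted really do present as Pauli spiders with the right adjacency, and that no stray Hadamard edges, parallel edges, or self-loops survive (rules ($\color{red}d$), ($\color{red}e$) must be reapplied as needed to stay graph-like). A secondary subtlety is that ($\color{red}pv_2$) deposits single-qubit Clifford unitaries on boundary spiders; I would argue these are harmless for the theorem's claim, since Theorem~\ref{th:simplification} is already stated "up to single-qubit unitaries on inputs/outputs" and such local Cliffords leave the codespace — hence the code — unchanged. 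Once these points are pinned down, the theorem follows because the fully simplified graph-like encoder is, by Proposition~\ref{pro:encoder_stb}, a standard-form graph code whose graph is exactly the one produced by the iterated GLC.
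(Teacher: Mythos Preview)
Your strategy matches the paper's: stack the two graph-like encoder diagrams, recognise the outer output spider $i$ and the inner input spider $i'$ as an adjacent pair of interior Pauli spiders, and eliminate them with the pivot rule ($\color{red}pv_1$), observing that a pivot on $(i,i')$ with $N(i)\cap N(i')=\emptyset$ is exactly the GLC between $N(i)\setminus\{i'\}$ and $N(i')\setminus\{i\}$. The paper argues in precisely this way, first for one pair and then for two.

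Two points need tightening. First, your description of the role of $U'=H^{\otimes k'}$ is reversed. Composing the outer output wire with the inner input wire yields a \emph{plain} wire, which would fuse $i$ and $i'$ into a single $Z$-spider; it is the extra Hadamard contributed by $U'$ that keeps them as two \emph{adjacent} interior Pauli spiders joined by a Hadamard edge, and this adjacency is what makes the pair eligible for ($\color{red}pv_1$). Without $U'$ the argument as written does not apply (the paper in fact lists this as a limitation of GLC).

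Second, the iteration is not automatic. You correctly note $N(i)\cap N(i')=\emptyset$ for a single pair, but when two outer outputs $i,j$ feed the same inner copy at inputs $i',j'$, the paper explicitly invokes the additional hypothesis $N(i')\cap N(j')=\emptyset$ (disjoint neighbour sets \emph{within} the inner encoder). After pivoting on $(i,i')$, every inner neighbour of $i'$ becomes a neighbour of $j$; if any of them were already a neighbour of $j'$ you would now have $N(j)\cap N(j')\neq\emptyset$, and the next pivot would no longer be a pure GLC but would produce extra inter-cluster edges and $\pi$ phases (the paper treats this case separately under ``Failure of GLC''). Your phrase ``distinct pairs live in essentially disjoint neighbourhood regions'' is the right intuition but must be stated as this explicit disjointness condition on the inner inputs and then checked to persist after each pivot. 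With these two corrections your argument coincides with the paper's; the ($\color{red}pv_2$) rule and the local-Clifford bookkeeping you anticipate do not actually enter this proof.
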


As an example, consider the self-concatenation of a \([[7,1,3]]\) graph code. The resulting concatenated code is a \([[49, 1, 9]]\) graph code. Figure \ref{fig:713_concate} offers a visual representation of this process:

\begin{figure}[H]
    \centering
    \includegraphics[width=5in]{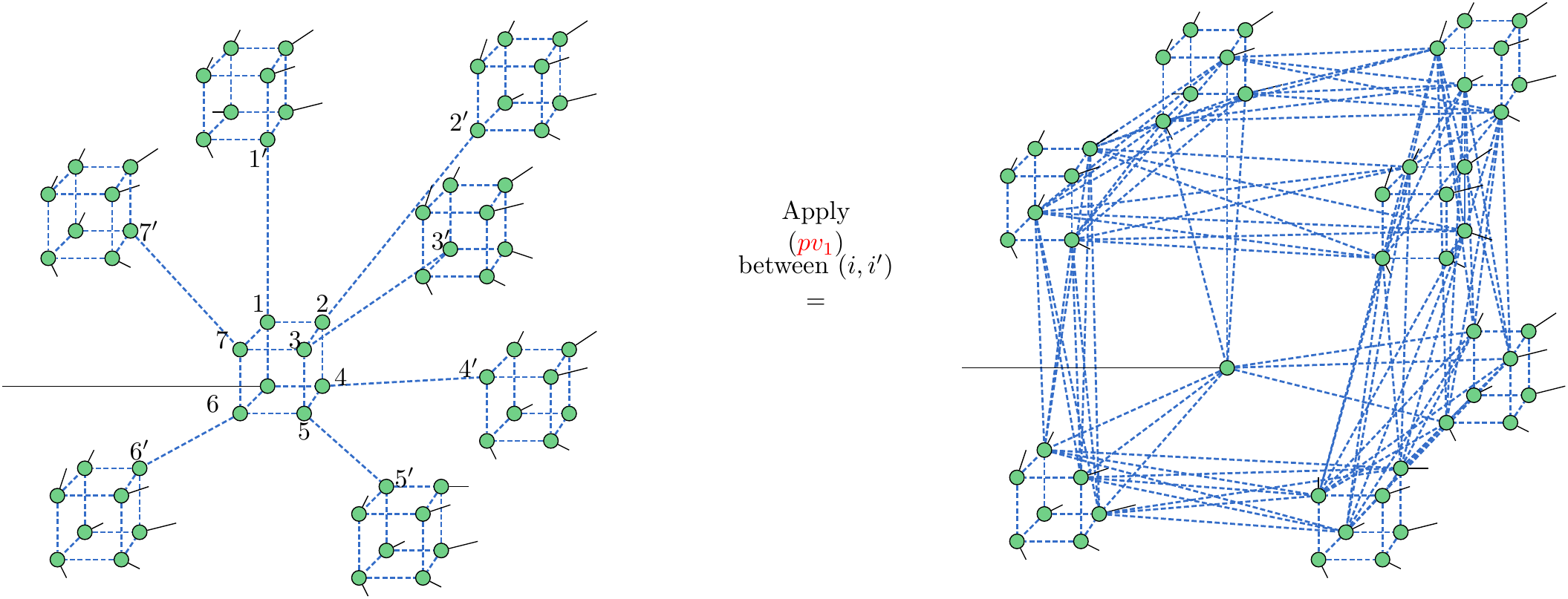}
    \caption{Self-concatenation of the \([[7, 1, 3]]\) graph code, resulting in a \([[49, 1, 9]]\) code. The step-by-step derivation using ZX-calculus is provided in Appendix \ref{ap:713}.}
    \label{fig:713_concate}
\end{figure}

In this illustration, the output nodes of the inner code, denoted by \( i = 1 \ldots 7 \), is connected to the corresponding input nodes of the outer code, \( i' = 1 \ldots 7 \). By sequentially applying the GLC—which is a specific instance of the (\(\color{red}{pv}_1\)) rule—between pairs \( (i, i') \), we can derive the encoder diagram for the concatenated graph code. The detailed steps of this process are elaborated in Appendix \ref{ap:713}.

We will first employ the ZX encoder diagram to re-examine and affirm the proposition outlined in Theorem \ref{thm:glc}. Subsequently, we will showcase scenarios where the GLC might not suffice, yet the ZX encoder diagram proves to be an effective tool for analyzing concatenated graph codes.

\subsection{Re-proof of Theorem \ref{thm:glc} Utilizing the ZX Encoder Diagram}

To corroborate Theorem \ref{thm:glc}, we initiate with a rudimentary case involving a single qubit \(i\) from the outer code concatenated to qubit \(i'\) from the inner code. It is presupposed that initially, \(i\) and \(i'\) lack common neighbors, a premise based on the stacking of encoders. (It's noteworthy that the absence of edges between \(N(i) \setminus \{i'\}\) and \(N(i') \setminus \{i\}\) is presumed, although its inclusion doesn't impinge on the proof.) The pivot rule (\(\color{red}{pv}_1\)) is then applied between \(i\) and \(j\). This action induces a GLC between \(N(i) \setminus \{i'\}\) and \(N(i') \setminus \{i\}\).
This procedure is graphically represented in Fig. \ref{fig:glc_proof}.

\begin{figure}[H]
\centering
  \includegraphics[width=3.5in]{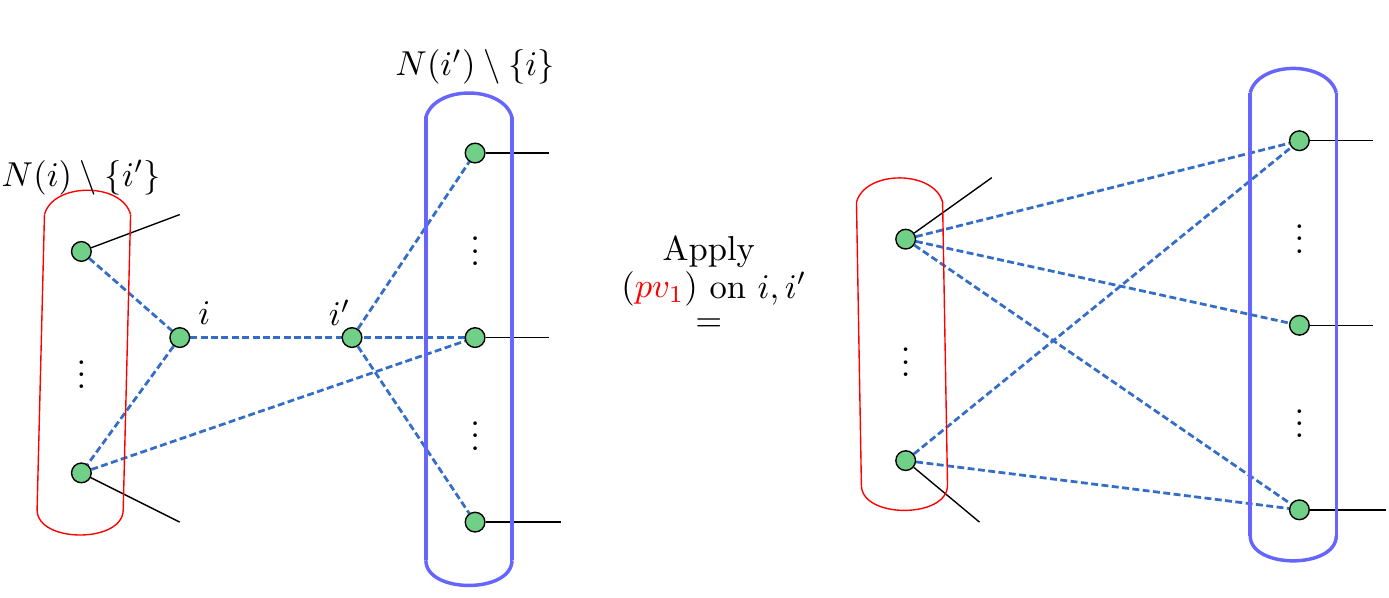}
  \caption{Visualization of GLC through the application of the pivot rule ($\color{red}pv_1$) between $i$ and $i'$.}
  \label{fig:glc_proof}
\end{figure}

This principle extends to scenarios with multiple qubits from the outer code concatenated to the inner code. Consider, for example, two qubits, \(i\) and \(j\) from the outer code, concatenated to qubits \(i'\) and \(j'\) in the inner code, respectively. We assume \(i\) and \(j\) are interconnected by a Hadamard edge, as the absence of such a connection simplifies the case to that of a single qubit discussed previously. Crucially, the neighbor sets within the inner code for these qubits, \(N(i')\) and \(N(j')\), must be disjoint, i.e., \(N(i') \cap N(j') = \emptyset\).

\begin{figure}[H]
  \centering
  \includegraphics[width=4.5in]{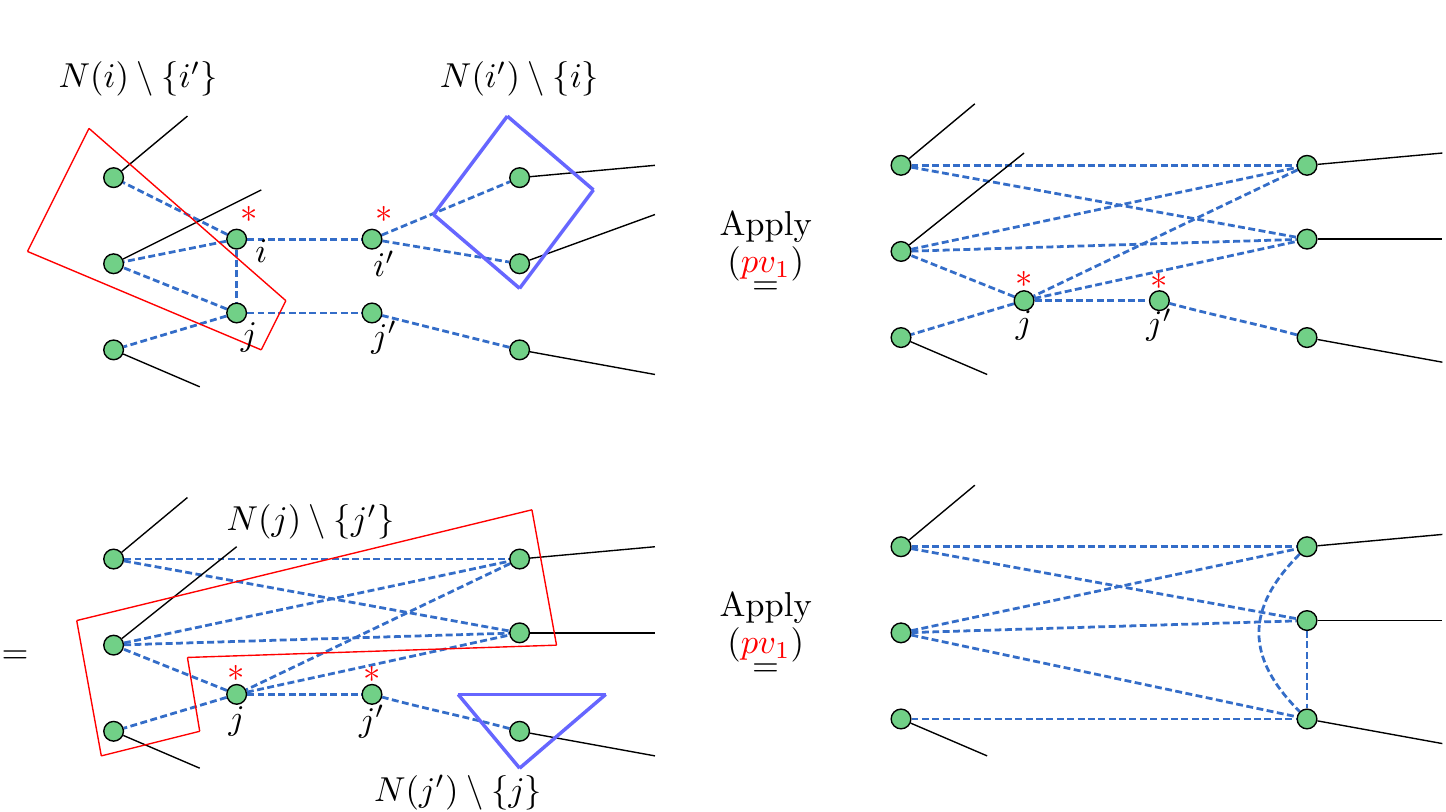}
  \caption{GLC analysis for concatenated graph codes with two qubits \(i\) and \(j\) from the outer code concatenated to \(i'\) and \(j'\) in the inner code, assuming disjoint neighbor sets \(N(i')\) and \(N(j')\). The figure demonstrates the sequential pivotting rule application and resultant GLC operations.}
  \label{fig:glc_analysis}
\end{figure}

Initially, the rule (\(\color{red}{pv}_1\)) is applied to the pair \((i, i')\), effectuating a GLC between \(N(i) \setminus \{i'\}\) and \(N(i') \setminus \{i\}\). Because at the beginning, \(N(i')\cap N(j') =\emptyset\), after pivoting on \((i, i')\), the sets \(N(j)\cap N(j')\) remain disjoint as well. Consequently, we can apply the pivot rule (\(\color{red}{pv}_1\)) to the pair \((j, j')\), resulting in a GLC between \(N(j)\setminus\{j'\}\) and \(N(j')\setminus\{j\}\). Figure \ref{fig:glc_analysis} illustrates this process.
The described methodology can be generalized to any number of qubits connected to the inner code, provided that the condition \( N(i') \cap N(j') = \emptyset \) holds, thus completing the proof of Theorem \ref{thm:glc} for concatenated graph codes.

From the perspective of simplifying ZX-diagrams, the GLC provides guidance by identifying appropriate node pairs for applying the pivot rule ($\color{red}pv_1$). This helps to maintain the symmetry of the diagram, as demonstrated in Fig.\ref{fig:713_concate}. In comparison, non-targeted simplification can lead to an encoder that is complex in terms of its edges and connectivity. Such complexity may obscure the analysis of the encoder's symmetry and make the extraction of the encoding circuit more complicated.

The limitations of GLC are also apparent. It is applicable only in the case where the inner code is concatenated with a basis change $U'=H^{\otimes k'}$, where $i$ from the outer code and $i'$ from the inner code are connected by a Hadamard edge. In comparison, the ZX-calculus guarantees simplification of the diagram as long as $U'$ is a Clifford operation. Moreover, as we will see in the subsequent subsection, even when we choose $U'=H^{\otimes k'} $, there can be situations where \( N(i') \cap N(j') \neq \emptyset \), rendering Theorem \ref{thm:glc} invalid. Such situations can also be addressed by the ZX-calculus approach.

\subsection{Failure of GLC}
The ZX encoder diagram proves particularly useful in instances where the GLC rule fails. This failure occurs notably when \(N(i') \cap N(j') \neq \emptyset\). Let's consider the process step-by-step:

Initially, we apply the pivot rule (\(\color{red}pv_1\)) to qubits \(i\) and \(i'\), resulting in a GLC between \(N(i)\setminus\{i'\}\) and \(N(i')\setminus\{i\}\). However, since \(N(i') \cap N(j') \neq \emptyset\) and \(j \in N(i)\), the initial pivoting leads to the intersection \(N(i') \cap N(j')\) becoming the common neighbor set \(N(j) \cap N(j')\) for both \(j\) and \(j'\). This intersection disrupts the standard GLC process because it introduces an overlap between the two clusters that should remain separate to achieve a successful GLC.

\begin{figure}[H]
  \centering
  \includegraphics[width=4.5in]{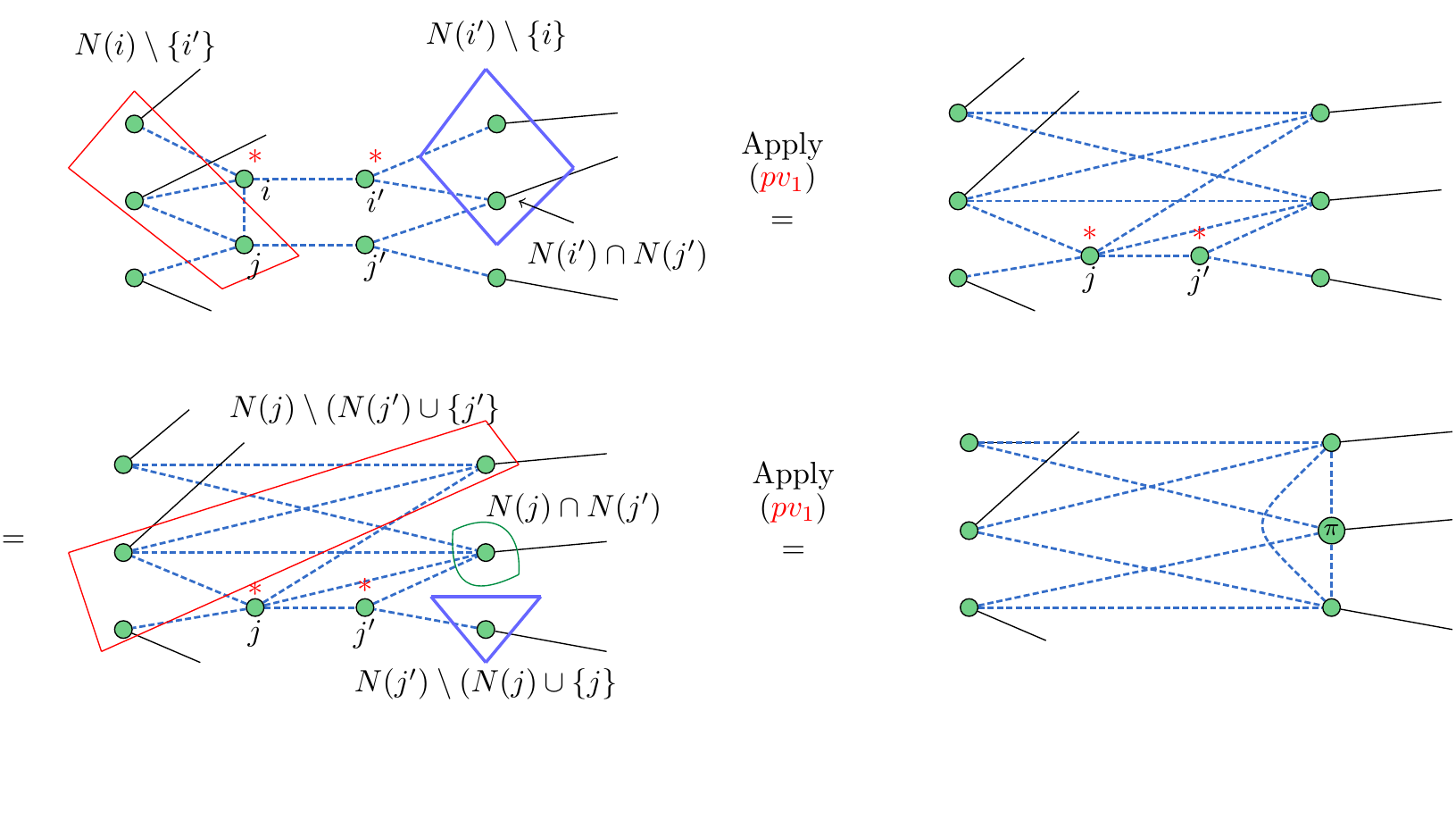}
  \caption{Depiction of GLC failure in concatenated graph codes with overlapping neighbor sets \(N(i') \cap N(j')\), necessitating additional GLC operations and \(\pi\) phase adjustments for the overlapping nodes.}
  \label{fig:glc_analysis_2}
\end{figure}

Addressing the overlap of neighbor sets, the rule (\(\color{red}pv_1\)) is applied between nodes \(j\) and \(j'\). 
In this context, for the application of the (\( \color{red}{pv_1} \)) rule as illustrated in Fig.~\ref{fig:pivot}, the sets \(N(j) \setminus (N(j') \cup \{j'\})  \) and \(N(j') \setminus (N(j) \cup \{j\})  \) correspond to the \( \alpha \) and \( \gamma \) clusters, respectively. Meanwhile, the intersection \( N(j) \cap N(j') \) is associated with the \( \beta \) cluster. The effects of applying the (\( \color{red}{pv_1} \)) rule are as follows:
\begin{itemize}
  \item GLC between \(N(j) \setminus (N(j') \cup \{j'\}) \) and \(N(j') \setminus (N(j) \cup \{j\}) )\)
  \item GLC between  \(N(j) \setminus (N(j') \cup \{j'\}) \) and \(N(j) \cap N(j')\)
  \item GLC between \(N(j') \setminus (N(j) \cup \{j\}) \) and \(N(j) \cap N(j')\)
  \item Addition of \(\pi\) phase to $Z$ spiders in \(N(j) \cap N(j')\)
\end{itemize}
The flexibility of the ZX encoder diagram is pivotal, particularly when traditional GLC is not suitable. This versatility accentuates the diagram's capacity to dissect and streamline the structure of concatenated graph codes. In instances where GLC is constrained, the ZX-calculus proves to be a formidable and perceptive substitute for examining the intricacies of graph code structures.

In summary, our study has highlighted the effectiveness of the ZX encoder diagram as an indispensable tool for analyzing concatenated graph codes, offering a more intuitive and visually guided approach compared to traditional algebraic methods. By applying the principles of Theorem \ref{th:simplification}, we are able to refine encoder diagrams, eliminating interior nodes to yield a concise representation that directly informs the structure of the resulting stabilizer code and the associated logical operations. This not only facilitates the construction of graph codes but also enhances our understanding of their inherent complexities, effectively marrying the abstract algebraic concepts with their graphical counterparts.

\section{Constructing Stabilizer Codes via Encoder Contraction}
\label{sec:fusing}

This section elucidates the utility of the ZX encoder diagram in the domain of QECCs construction. It highlights the innovative application of the ZX encoder diagram for constructing larger stabilizer codes from smaller foundational units, offering a novel approach beyond traditional concatenation methods.

At the core of this methodology is the conceptualization of the encoder diagram as an essential building block. By systematically layering these smaller units, the construction of more complex codes becomes achievable. This approach shares conceptual similarities with the contraction processes observed in quantum lego codes~\cite{cao2022quantum} and tensor network codes~\cite{farrelly2021tensor}. These methodologies provide a graphically intuitive and flexible framework for assembling sophisticated quantum error correction codes by combining simpler codes or states in a modular fashion.


Despite the foundational similarities with other graph-based codes, the employment of ZX encoder diagrams is distinguished by its computational efficiency and the clarity it brings to the construction process. This distinction underscores the ZX encoder diagrams' potential to significantly advance the development and understanding of quantum error correction codes.

As delineated previously, the encoder \(\mathcal{E}\) of a stabilizer code can be conceptualized as a stabilizer state \(|\mathcal{E}\rangle\), which emerges from applying the map-state duality transformation. To initiate with a generic representation, consider an \(m\)-qubit stabilizer state \(|\psi\rangle\), expressible as:
\begin{equation}
    |\psi\rangle = \sum_{i_1\ldots i_m} \psi_{i_1\ldots i_m}|i_1\ldots i_m\rangle,
\end{equation}
where \(\psi_{i_1\ldots i_m}\) denotes a rank-\(m\) tensor. The contraction of two tensor indices (illustratively, the first two indices) equates to projecting the corresponding qubits onto the unnormalized Bell state \(|\Phi^+\rangle = |00\rangle + |11\rangle\). The associated projection operator is defined as \(P^+ = |\Phi^+\rangle \langle \Phi^+| \otimes I^{\otimes (m-2)}\). Post-projection, the state transforms as follows:
\begin{eqnarray}
P^+|\psi\rangle &=& \sum_{j, k=0}^1 \sum_{i_1, \ldots, i_m=0}^1 |jj\rangle \otimes \langle kk|i_1 i_2\rangle \psi_{i_1 \ldots i_m}|i_3 \ldots i_m\rangle \\ 
&=& |\Phi^+\rangle \otimes \sum_{i_3, \ldots, i_m, k=0}^1 \psi_{kk i_3 \ldots i_m}|i_3 \ldots i_m\rangle \\ 
&=& |\Phi^+\rangle \otimes |\psi^*\rangle,
\end{eqnarray}
where \(|\psi^*\rangle\) denotes the resultant state following the contraction of indices. Notably, the physical realization of projecting two qubits into the Bell state involves post-selecting the measurement outcomes of $XX$ and $ZZ$ operators to be $+1$ for the corresponding qubits in \(|\psi\rangle\). This ensures \(|\psi^*\rangle\) remains a stabilizer state, albeit with updated stabilizers, adhering to the modifications delineated in Chapter 10.5 of \cite{nielsen2002quantum}.

The conventional representation of a stabilizer state using a tensor \(\psi_{i_1\ldots i_m}\), which requires \(2^m\) complex numbers, is not the most efficient method.A more effective approach involves representing the stabilizer state (encoder) through its stabilizer group, as referred to in the methodology employed in tensor network codes \cite{farrelly2021tensor}. For a \([[n,k,d]]\) stabilizer code, the representation is refined by defining a rank-\(n\) tensor to "one-hot" encode the elements of the stabilizer group \(\mathcal{S}\):
\begin{equation}
    T_{g_1\ldots g_n}(\overline{P}) =  \begin{cases}1 & \text{if } \sigma^{g_1} \otimes \ldots \otimes \sigma^{g_n} \in \mathcal{S} \overline{P} \\ 0 & \text{otherwise}\end{cases},
\end{equation}
for each logical Pauli \(\overline{P}\), where \(g_i \in \{0,1,2,3\}\), and \(\sigma^0= I, \sigma^1= X, \sigma^2= Y, \sigma^3= Z\). The contraction of \(T_{g_1\ldots g_n}\) also corresponds to projecting the associated qubit into the unnormalized Bell state.

The discussion extends to the ZX encoder diagram, which offers a valid tensor network representation of the stabilizer state. Crucially, specifying an \(m\)-qubit stabilizer state necessitates only \(m\) stabilizer generators due to the Abelian nature of the stabilizer group, suggesting that the ZX encoder diagram is an "efficient" tensor network for representing these generators.
\begin{figure}[H]
\centering
\includegraphics[width=2.4in]{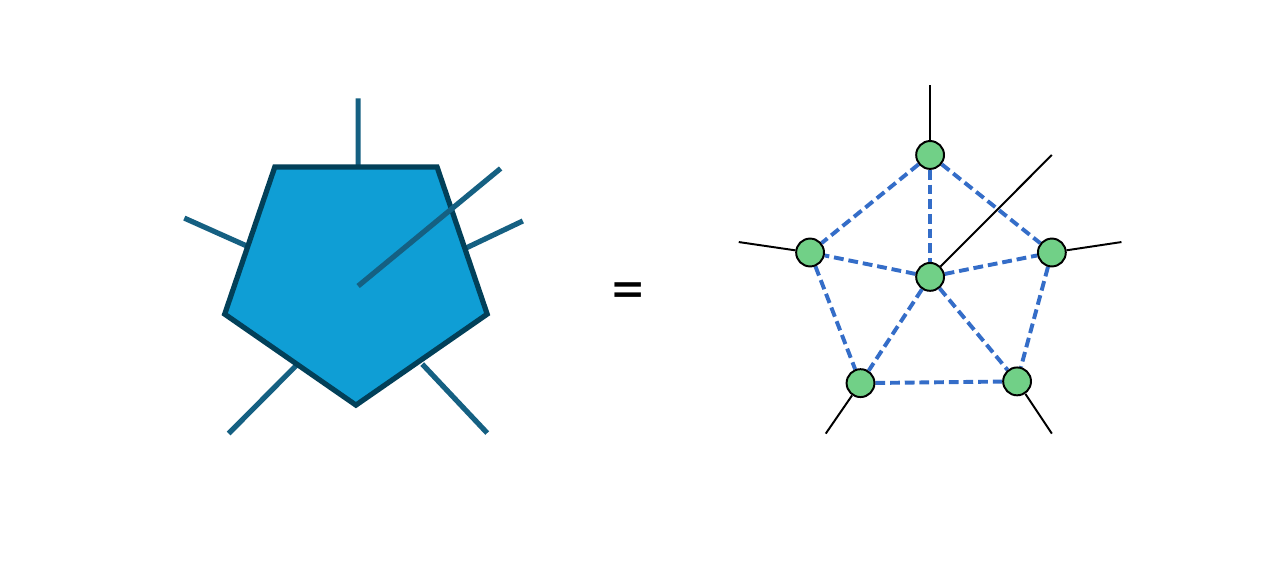}
 \caption{Illustration of the equivalence between tensor and ZX-diagrams, representing the 6-qubit Absolutely Maximally Entangled (AME) state\cite{helwig2013absolutely} or the encoder of the \([[5,1,3]]\) graph code via map-state duality.}
\label{fig:tensor_equivalent}
\end{figure}
Considering the encoder \(\mathcal{E}_G\) for a standard form \([[n,k,d]]\) graph code, the stabilizer generators \(S_{\mathcal{E}_G}\) for the corresponding graph state \(|\mathcal{E}_G\rangle\) are concisely represented as:
\begin{equation}
  S_{\mathcal{E}_G} = \left[\begin{array}{cc|cc}
    I_k & 0 & 0 & \Gamma  \\
    0 & I_n & \Gamma^T & G  \\
    \end{array}\right],
\end{equation}
where the qubits are indexed such that the first \(k\) function as input nodes, and the remaining \(n\) as output nodes. 
Thus, the encoder diagram employs the graph \(G\) and classical codeword \(\Gamma\) to "store" the stabilizer generators of the stabilizer state.

This framework can be readily generalized to stabilizer codes with minor adjustments. The stabilizer generators for the stabilizer code \(|\mathcal{E}\rangle\) must be conjugated by the Clifford \(\mathcal{C}l\) and local Clifford operations \(L_c\), as outlined in Proposition \ref{pro:encoder_stb}. Consequently, the ZX encoder diagram can be viewed as utilizing minimal information to characterize the stabilizer state while maintaining the graph's structure in a tensor-network representation.

We demonstrate that the contraction of the ZX encoder diagram is equivalent to the contraction of a stabilizer state. This equivalence is straightforward, as both contractions project corresponding qubits onto the Bell state, symbolized by a cup in the ZX-calculus.

\begin{figure}[H]
\centering
\includegraphics[height=2.4in]{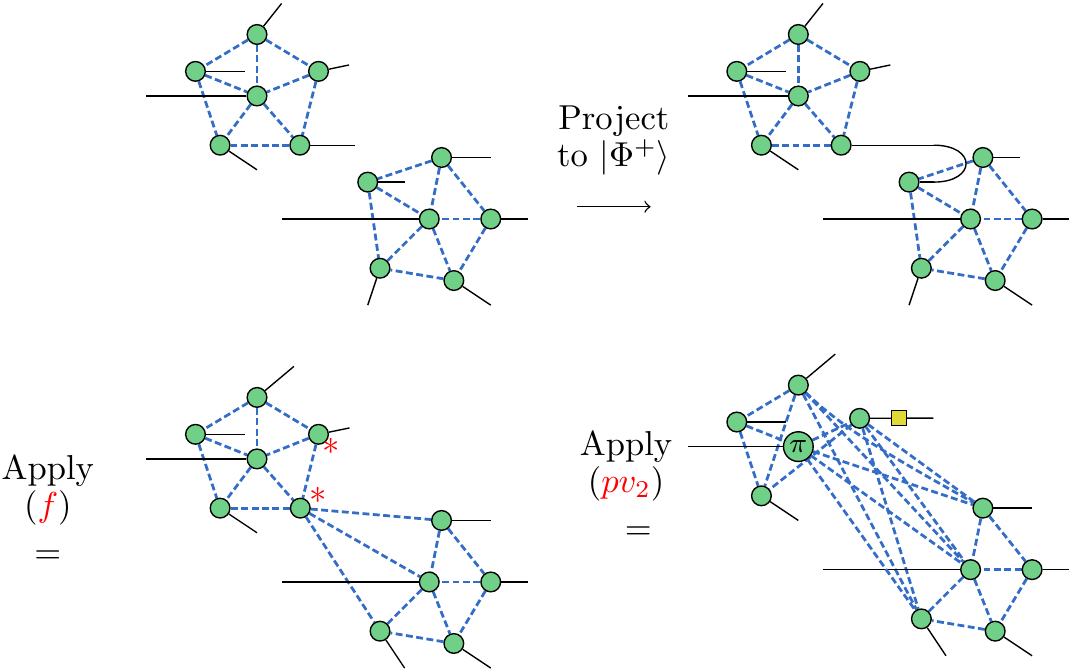}
\caption{Equivalence of tensor network and ZX encoder diagram contraction, illustrating the merging of two stabilizer encoders into a larger encoder using the [[5,1,3]] code as a concrete example. After simplification, the stabilizer and logical Pauli operators of the contracted code are obtained.}
\label{fig:fusing}
\end{figure}
After projection, to extract the stabilizer and logical Pauli operators of the contracted code from the ZX encoder diagram, we apply the simplification process outlined in Theorem \ref{th:simplification}, which involves the removal of all interior $Z$ spiders. The resultant diagram is locally Clifford equivalent to a graph code, from which stabilizers can be readily obtained. The validity of the newly constructed encoder as a stabilizer code encoder can be verified using Proposition \ref{pro:enc_valid}. 
The operational parallels of the ZX encoder diagram to tensor network contractions reduce computational complexity and enhance the graphical intuitiveness in the construction of stabilizer codes. The diagram encapsulates all relevant data for stabilizer codes, allowing for direct inference of logical states and operations, thereby streamlining code analysis and synthesis.

The contraction process within the ZX-calculus can be efficiently performed. When two legs in the diagram are connected by a solid line, it signifies either the application of a single qubit Clifford operation (potentially a \( Z \) spider with a \( \pi/2 \) phase, accompanied by an optional Hadamard gate) or a straightforward connection. Each scenario requires a constant number \( O(1) \) of pivoting or local complementation operations, enabling the contraction of the diagram in polynomial time.

It is noteworthy that the representation of an encoder in the ZX-calculus is not unique, owing to the non-uniqueness of the stabilizer group's generator set. The choice of pivot order and pivot pairs significantly influences the final encoder diagram, although all representations are mathematically equivalent. A "nicer" encoder diagram can substantially reduce the complexity of the resulting quantum circuit.

An intriguing application of tensor networks in quantum error correction is the construction of holographic codes. We elucidate how holographic codes can be constructed through ZX encoder contraction and interpreted within the graph code framework. This approach highlights the synergistic relationship between holographic codes, graph codes, and tensor network codes, enriching our understanding of their interconnectedness.

\subsection{Construction of HaPPY Codes}
Holographic codes, inspired by the AdS/CFT holographic duality in theoretical physics, represent a unique class of quantum error-correcting codes that emulate radial time evolution in Anti-de Sitter space~\cite{pastawski2015holographic,almheiri2015bulk}. These codes are crafted from a lattice of perfect tensors that tessellate hyperbolic space, establishing a correspondence between logical qudits within the AdS bulk and physical qudits at the boundary of the associated Conformal Field Theory (CFT). In this framework, physical qubits at the boundary are denoted by uncontracted tensor legs, while logical qubits within the bulk are signified by uncontracted legs in the bulk, offering a novel perspective on the intricate structure of holographic codes.

A conventional approach to constructing holographic codes involves utilizing tensor networks that incorporate absolutely maximally entangled (AME) states as fundamental building blocks. A notable example is the hyperbolic pentagon (HaPPY) code, assembled through the contraction of 6-qubit AME states. Specifically, the encoding mechanism of HaPPY codes employs a tensor network characterized by five-qubit encoding isometries. These isometries manifest as six-legged perfect tensors, with five legs corresponding to physical qubits and one leg dedicated to the encoded logical qubit. This configuration serves as a foundational model for the AdS/CFT holographic duality, illustrating the intricate relationship between quantum error correction and theoretical physics.

Recent work by Munne et al.~\cite{munne2022engineering} has illuminated the representation of the HaPPY code as a graph code, offering a new perspective on the structure of holographic codes. This study begins with an algebraic approach to derive the stabilizers of the HaPPY code and then performs local Clifford operations to construct the corresponding graph code. It leverages the algorithm proposed in Ref.~\cite{Adcock_2020} to enumerate locally Clifford-equivalent graph codes on a manageable scale of qubits. By exploring the local Clifford orbit, they identified an optimized graph characterized by a minimal number of edges and predominantly short-range interactions, which enhances the practicality of the code's implementation. However, the scalability of this method is limited by the exponential growth of the local Clifford orbit as the number of qubits increases, making it less feasible for larger code instances.
\begin{figure}[H]
  \centering
  \includegraphics[height=4.2in]{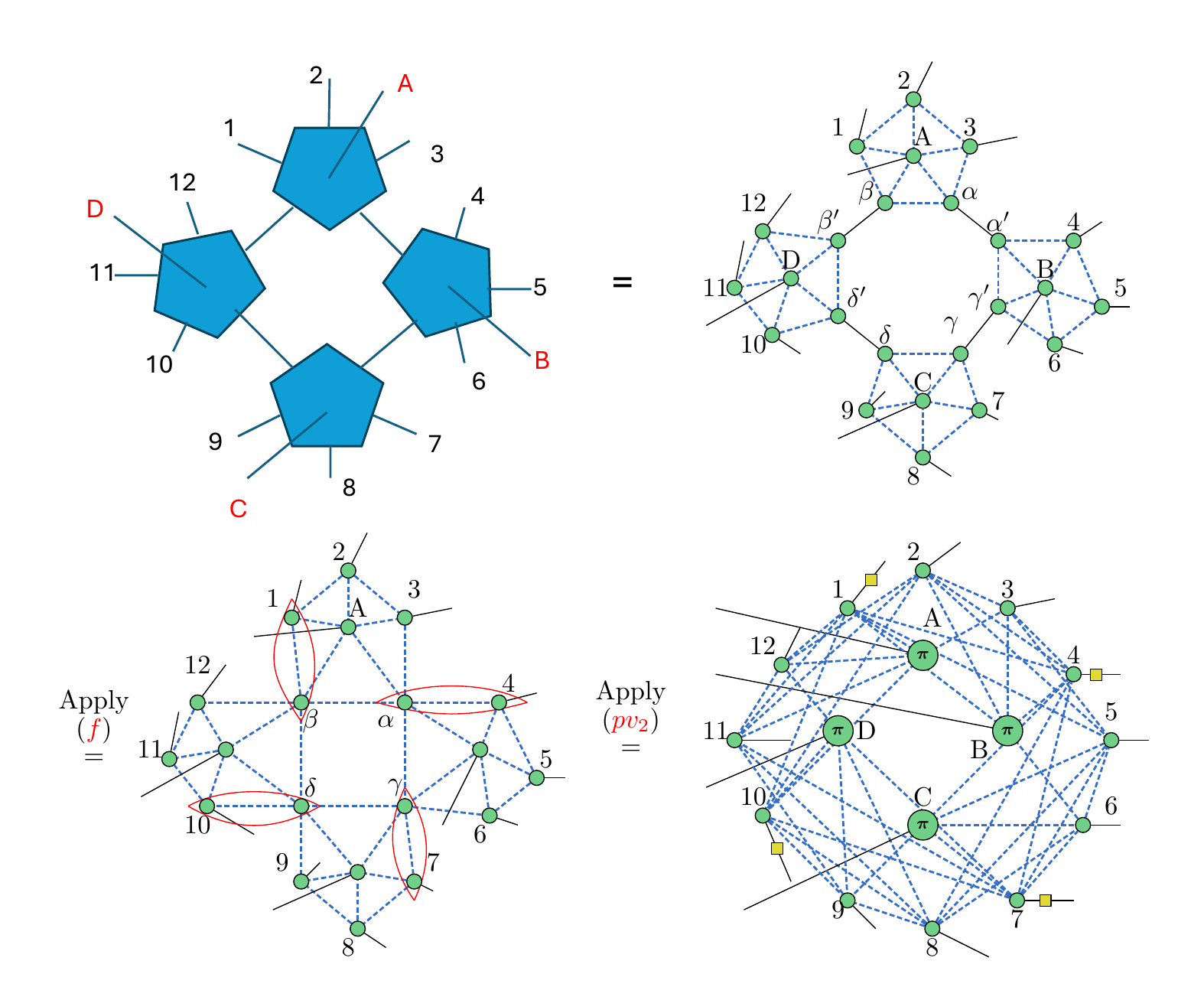}
  \caption{Construction of the HaPPY code by contracting four AME tensors. In the language of the ZX-calculus encoder diagram, it encodes 4 qubits (\(ABCD\)) into 12 qubits (\(1-12\)). Red loops denote pairs of $Z$ spiders where the ($\color{red}pv_2$) rule is applied. The step-by-step derivation is given in Appendix \ref{ap:Happy}.}
  \label{fig:holo_step}
\end{figure}
To circumvent this limitation, we explore the application of the ZX encoder diagram, employing the ZX-calculus as a powerful tool for this context. Specifically, we examine an encoding scenario of 4 bulk qubits into 12 boundary qubits, the same as in Ref.~\cite{Adcock_2020}. By adopting a symmetric pivot sequence, with the ($\color{red}pv_2$) rule, we successfully derive an encoder diagram that exhibits optimization comparable to that achieved by \cite{munne2022engineering}.
The process of contracting six-legged perfect tensors unfolds as follows:
\begin{itemize}
    \item Encode qubit $A$ into a five-qubit graph code.
    \item Encode qubit $\alpha$ alongside qubit $B$ into qubits $4, 5, 6$, and $\gamma'$.
    \item Encode qubit $\gamma'$ together with qubit $C$ into qubits $7, 8, 9$, and $\delta$.
    \item Encode qubits $\delta$ and $\beta$, along with qubit $D$, into qubits $10, 11, 12$, culminating in a fully encoded holographic code.
\end{itemize}
The ZX encoder simplification process, depicted in FIG.~\ref{fig:holo_step}, introduces only four interior nodes during the contraction. These nodes can be effectively eliminated by employing ($\color{red}pv_2$).

With the ZX encoder diagram established, the encoding circuit, stabilizers, and logical Pauli operators can be easily deduced. This process involves a methodical approach where we initially focus on the encoder diagram, which visually encapsulates the functionalities of the encoding circuit. 

To derive a local Clifford equivalent graph code, we remove the \(Z\) gates on input nodes \(A\), \(B\), \(C\), and \(D\), and the Hadamard gates on the output nodes \(1\), \(4\), \(7\), and \(10\).
The graph state $|\overline{0\ldots0}\rangle_G$ of the graph code, as shown in FIG.~\ref{fig:holo2}, is locally Clifford equivalent to the results reported in Ref.~\cite{munne2022engineering}. 
\begin{figure}[H]
    \centering
      \includegraphics[width=4.3in]{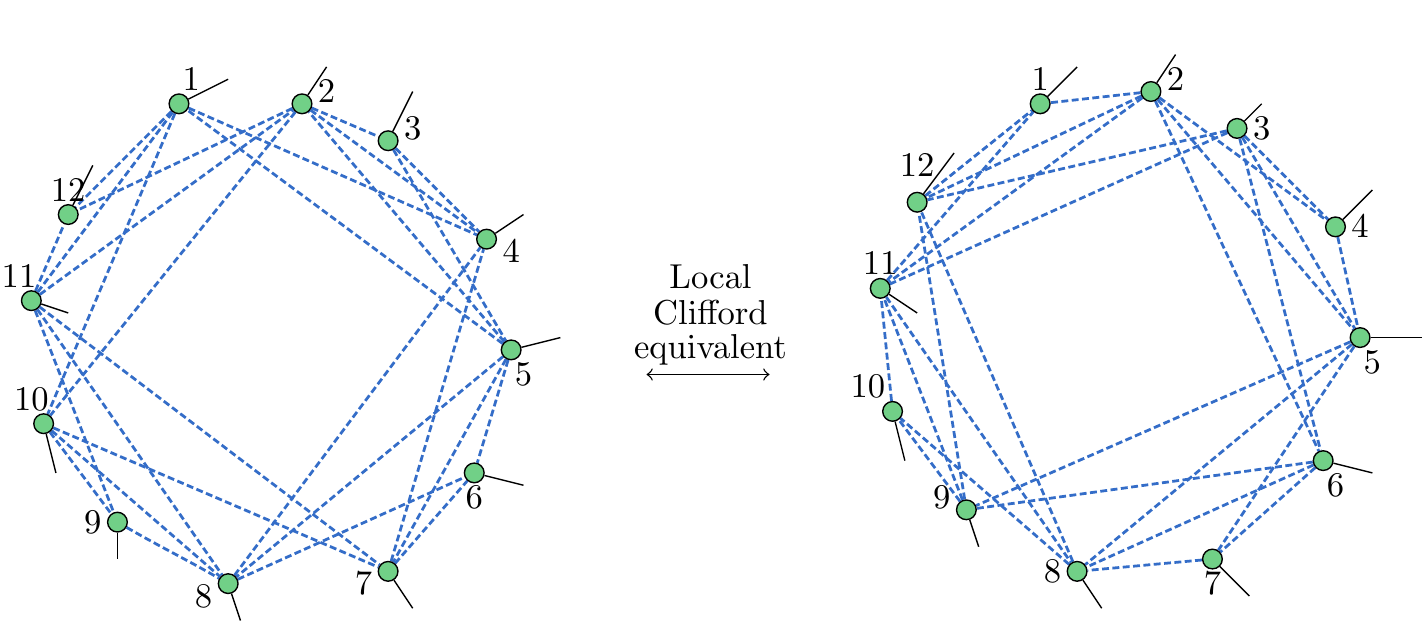}
      \caption{The graph state $|\overline{0\ldots0}\rangle_G$ of the holographic graph code is depicted. The left image presents the graph state derived from the ZX encoder diagram after removing $\mathcal{C}l$ and $L_c$, showcasing the graph code. The right image shows the optimized graph state as reported in Ref.~\cite{munne2022engineering}. These two graph states are locally Clifford equivalent.}
      \label{fig:holo2}
\end{figure}

The ZX-calculus provides a suitable language for this study, particularly when analyzing the role of symmetry in graph codes. As we previously discussed, the encoder has exponentially many equivalent representations, most of which do not exhibit symmetry. The symmetric encoder diagram, obtained through a sequence of symmetric pivoting as shown in Figure~\ref{fig:holo_step}, is likely to correspond to "nice" encoder that require short-range interactions and have few connections. However, the impact of this symmetry on the efficiency of the encoding requires further investigation.

Transitioning from the specifics of the encoder diagram to the broader context of circuit simplification, it is important to recognize the underlying efficiencies brought forth by the ZX-calculus in this domain. Moreover, the simplification process, as defined by Theorem~\ref{th:simplification}, is more efficient than tensor network contraction\cite{farrelly2021tensor} and as efficient as the operator pushing in\cite{cao2022quantum} since it only deals with the stabilizer generator. The advantages of ZX-calculus include the ability to represent any quantum circuit, with the extraction of diagrams to circuits being well-developed. Circuit identities are already covered by the rules of ZX-calculus, showcasing its power in simplifying Clifford\cite{Duncan_2020} and universal quantum circuits (reducing T gates)\cite{Kissinger_2020}. The transition from the specific case of the encoder diagram to these broader applications demonstrates the versatility and comprehensive nature of the ZX-calculus as a tool for optimizing quantum computations across various scenarios.
 Simplification of encoding and decoding circuits can be readily studied using this powerful diagram representation to satisfy desired metrics. The future work in this domain will likely further harness the potential of ZX-calculus, especially in understanding the relationship between symmetry and circuit efficiency.

 %
 
\section{Conclusion and discussion}

In conclusion, our work effectively leverages the ZX-calculus to advance our understanding and construction of quantum error-correcting codes. Through the development of the ZX encoder diagram, as discussed in Section~\ref{sec:zxencoder}, we've provided a unified representation of both graph and stabilizer codes. This has enabled us to simplify their complexity and extend their applicability in quantum computing. In Section~\ref{sec:Concatenation}, we revisited and expanded the principles of concatenated graph codes, and in Section~\ref{sec:fusing}, we demonstrated the practicality of this approach in constructing complex codes, exemplified by the HaPPY code.

The insights gained from this research not only enhance our understanding of the complex combinatorial structures of graph codes and their concatenations but also suggest potential areas for future exploration. The investigation of symmetries within constructed codes, as observed in our studies, hints at more efficient error detection and correction techniques. Additionally, understanding these symmetries could unveil connections between seemingly distinct codes, enriching our comprehension of the quantum error-correcting code space.

Our research thus underscores the significant role of ZX-calculus as both a theoretical and practical tool in quantum error correction. By continuing to leverage this graphical language, we aim to contribute further to the development of robust and efficient quantum computing systems.
\section*{Acknowledgement} We thank Markus Grassl ,Chao Zhang and Sarah Meng Li for helpful discussions. This work is support by GRF (grant no. 16305121) and the National Science Foundation of China (Grants No. 12004205).

\appendix

\section{Circuit Extraction from Encoder Diagram}
\label{ap:encoder_circuit}

The process of extracting a quantum circuit from an encoder diagram involves the application of specific ZX-calculus rules pertinent to encoder transformations. One such rule is the ($\color{red}cx$) rule, which facilitates the manipulation of the bi-adjacency matrix \(\Gamma\) between input and output nodes. This rule, as delineated in Proposition 7.1 by \cite{Duncan_2020}, is instrumental in describing the effect of applying a CX gate within the graph structure.
\begin{figure}[H]
    \centering
    \includegraphics[width=2.6in]{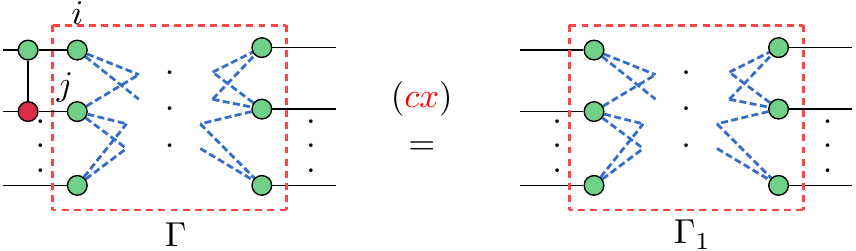}
    \label{fig:cx_rule}
\end{figure}
When a CX gate is applied with qubit \(i\) as the control and qubit \(j\) as the target, the resultant bi-adjacency matrix, denoted \(\Gamma_1\), evolves from \(\Gamma\) by adding row \(i\) to row \(j\). This operation reflects the modifications induced in the bipartite graph's structure. Conversely, applying a CX gate on the right side of the encoder affects the columns, resulting in a modified bi-adjacency matrix \(\Gamma_2\), obtained by adding column \(i\) to column \(j\).

\begin{proof}
\begin{figure}[H]
    \centering
    \includegraphics[width=4.4in]{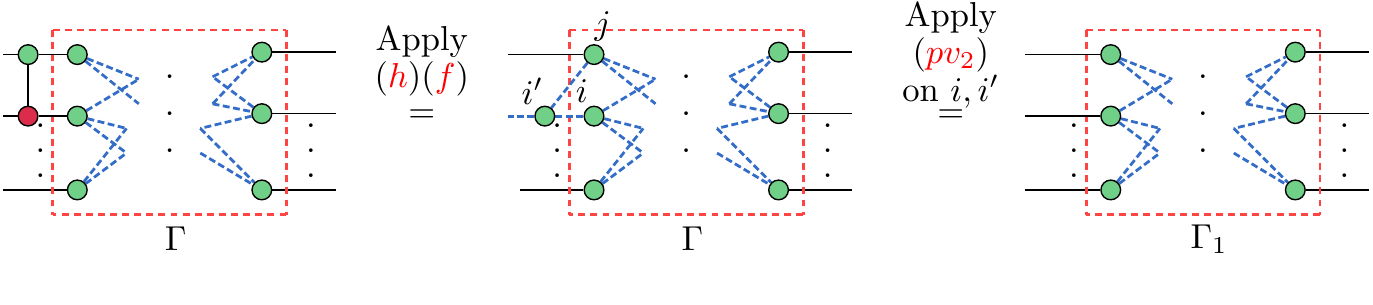}
    \caption{Proof of the ($\color{red}cx$) rule's effect on the encoder diagram.}
    \label{fig:cx_rule_proof}
\end{figure}
The application of the (\( \color{red}{pv_2} \)) rule to nodes \( i \) and \( i' \) introduces a Hadamard edge between every node in \( N(i) \setminus{i'}\) and node \( j \). This operation, in the context of the bi-adjacency matrix, results in \(\Gamma_1\), which is derived from \(\Gamma\) by adding row \(i\) to row \(j\). The application of the ($\color{red}cx$) rule on the encoder's right side analogously modifies the columns.
\end{proof}

By executing column operations (applying ($\color{red}cx$) on the right hand side),  the matrix $\Gamma$ can be transformed into the form $\left[I_k \ \ 0\right]$ (up to row swapping). This transformation systematically facilitates the extraction of the corresponding quantum circuit from the encoder diagram, as illustrated below:
\begin{figure}[H]
    \centering
    \includegraphics[height=2.3in]{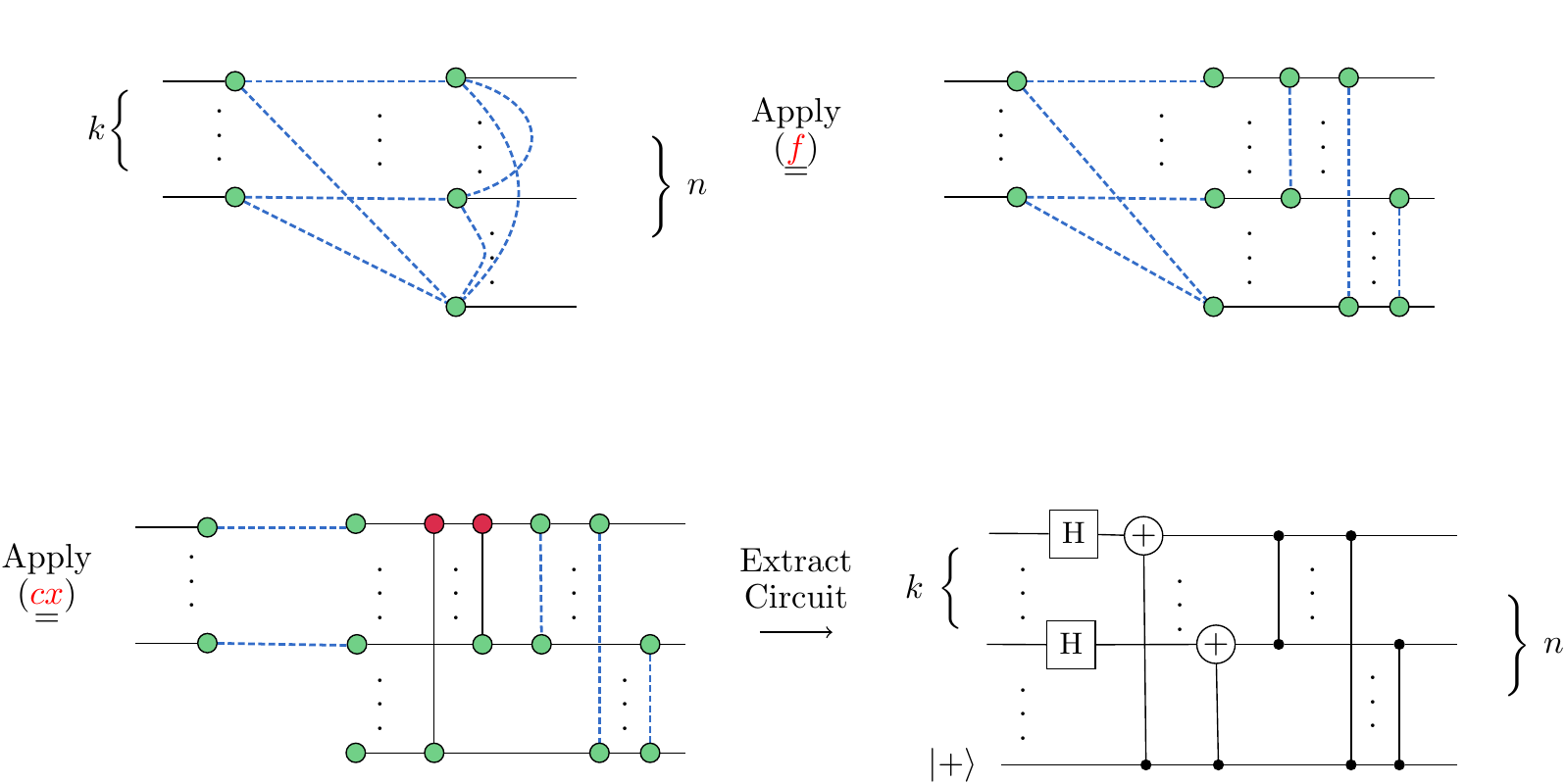}
    \caption{Systematic extraction of the quantum circuit from the encoder diagram.}
    \label{fig:zx_to_circuit}
\end{figure}

\section{Extraction of  Stabilizers and Logical Pauli from ZX Encoder Diagrams}
\label{ap:stab}

In this appendix, we delineate the methodology for deriving the logical Pauli operators and stabilizers from a given ZX encoder diagram. For a standard form graph code encoder $\mathcal{E}_G$, characterized by \(\Gamma\) in RREF and graph \(G\), the stabilizers \(\mathcal{S}_G\) and logical Pauli operators \(\bar{X}_G\) and \(\bar{Z}_G\) are determined as outlined in Eq.(\ref{eqn:check_matrix}).

To ascertain the stabilizer $\mathcal{S}$ of a stabilizer code correspond to encoder $\mathcal{E}$, the encoder can initially be transformed into the form \(\mathcal{E} = L_c \mathcal{E}_G \mathcal{C}l_*\) by applying CX to the left side of the encoder. This row operation modifies \(\Gamma\) into its RREF (assuming the qubits are properly indexed). Diagrammatically, this transformation is depicted as follows:

\begin{figure}[H]
    \centering
    \includegraphics[width=5in]{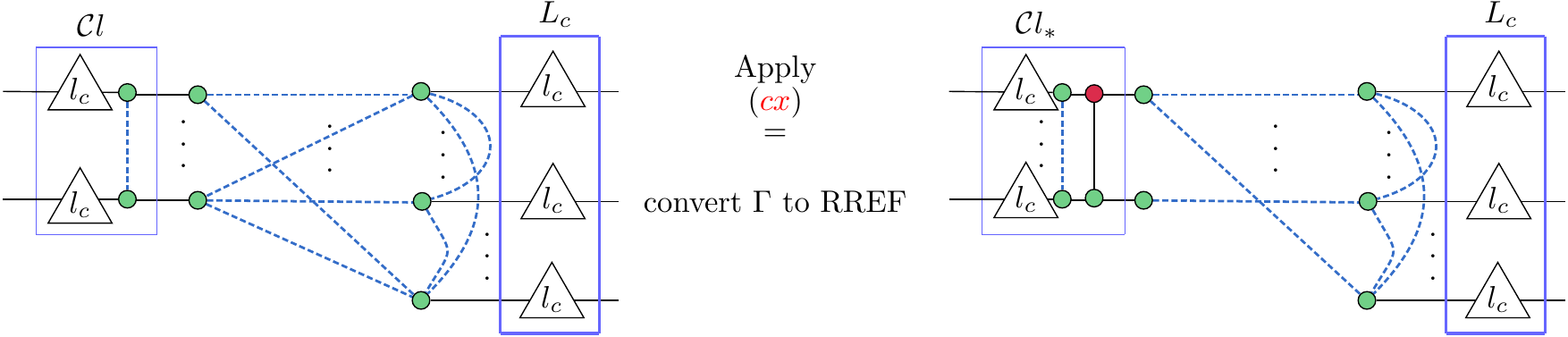}
    \caption{Transformation of the encoder into standard form.}
    \label{fig:encoder_transform}
\end{figure}

Given that the Clifford operation \(\mathcal{C}l_*\) does not alter the codespace, the stabilizer  \(\mathcal{S}\) can be expressed as:

\begin{equation}
    \mathcal{S} = L_c \mathcal{S}_G L_c^{\dagger}.
    \label{eqn:stab_transform}
\end{equation}
To identify the logical Pauli operators, it is essential to understand the mechanism of operator pushing through the encoder. If a logical operator \( L \) on the left side of the encoder is equivalent to an operator \( \overline{L} \) on the right side (i.e., \( \mathcal{E}L = \overline{L} \mathcal{E} \)), then \( \overline{L} \) is a valid physical realization of the logical operator \( L \). This equivalence is diagrammatically represented as:

\begin{figure}[H]
    \centering
    \includegraphics[width=3.8in]{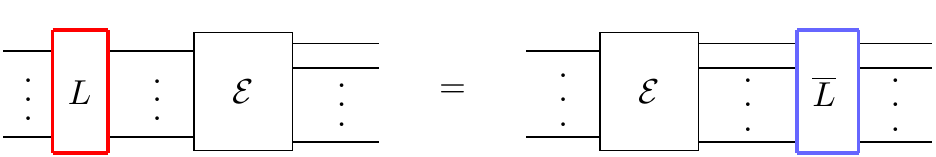}
    \caption{Pushing of logical operators through the encoder. Ref.~\cite{Garvie_2018} provides the proof.}
    \label{fig:logical_propagation}
\end{figure}
For \(\mathcal{E}_G\), the logical Pauli operators \(\overline{P_G}\in \mathcal{P}_n\) are outlined in Eq.(\ref{eqn:check_matrix}). Assuming the logical Pauli operator for \(\mathcal{E}\) is denoted by \(\overline{P}\in \mathcal{P}_n\), then:

\begin{equation}
    \overline{P} = L_c^{\dagger} \overline{P_G'} L_c,
    \label{eqn:logical_pauli_transform}
\end{equation}

where \(P' = \mathcal{C}l_*^{\dagger} P \mathcal{C}l_* \in \mathcal{P}_k\). This relationship can be visualized as follows:

\begin{figure}[H]
    \centering
    \includegraphics[width=5in]{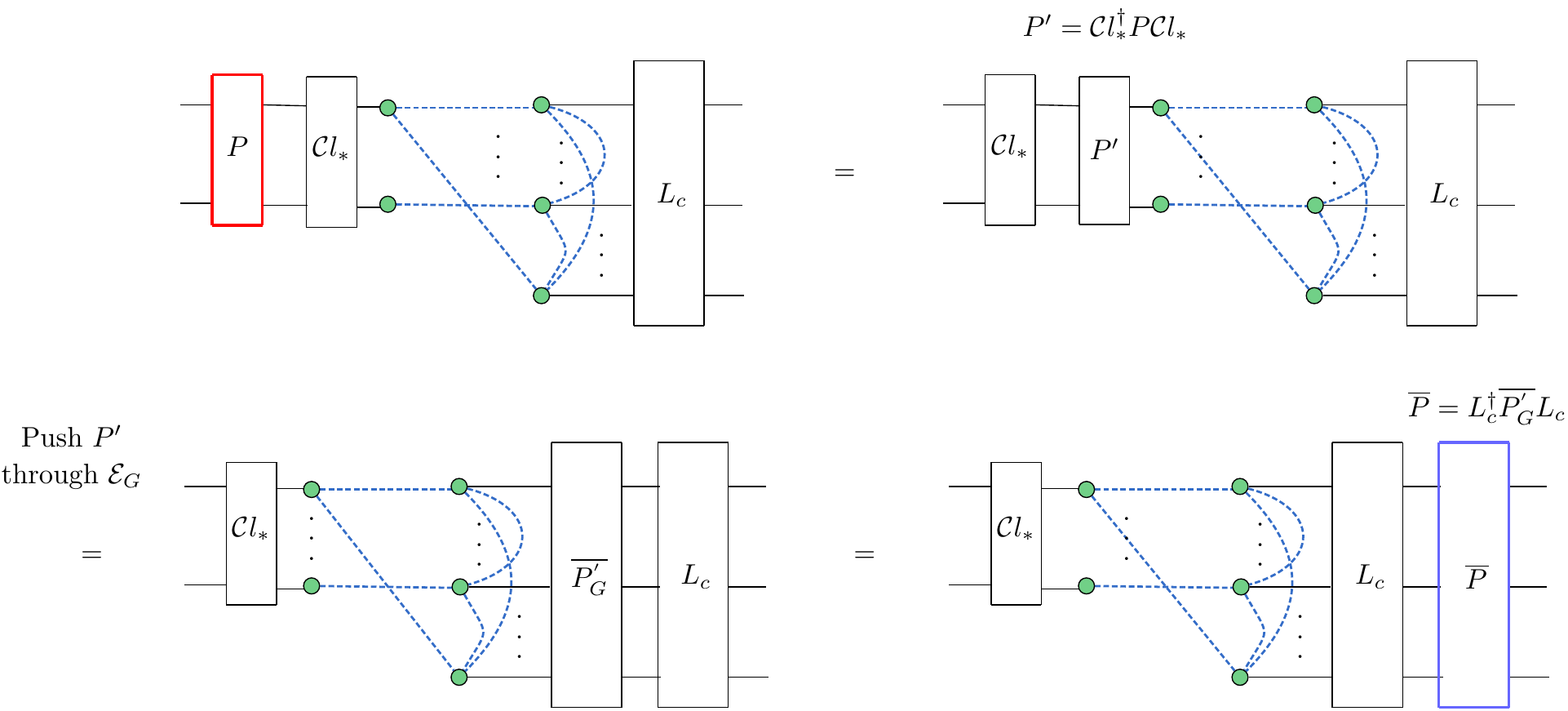}
    \label{fig:enter-label}
\end{figure}

\section{Simplification of self-concatenated [[7,1,3]] code}
\label{ap:713}
\begin{figure}[H]
    \centering
    \includegraphics[width=5in]{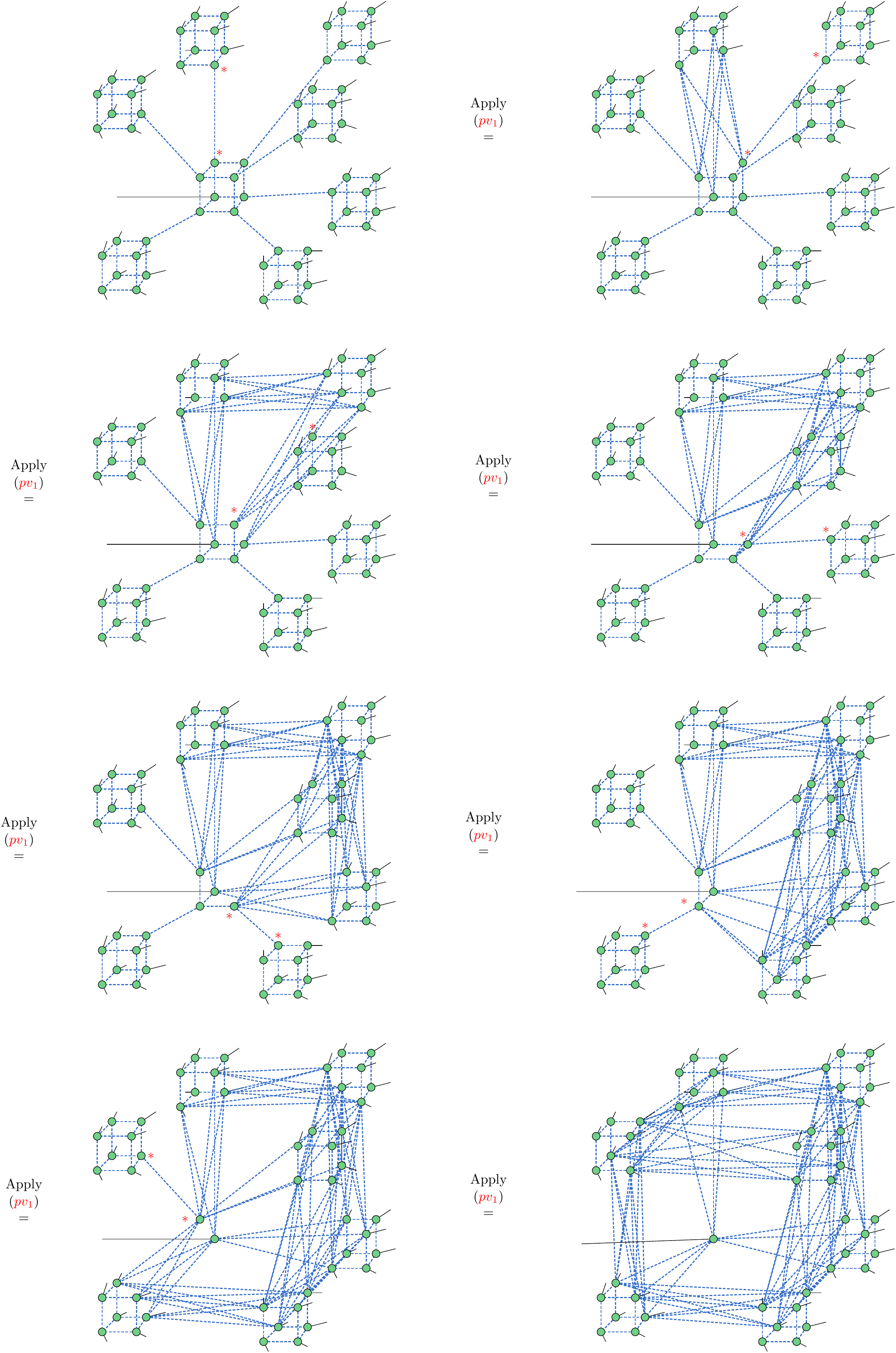}
    \caption{Self-concatenation of the \([[7, 1, 3]]\) graph code, resulting in a \([[49, 1, 9]]\) code.  The operation ($\color{red}pv_1$) is applied sequentially to the two nodes identified by the $\color{red}*$ symbols.}
\end{figure}

\section{Simplification of HaPPY code}
\label{ap:Happy}
\begin{figure}[H]
    \centering
    \includegraphics[width=5.2in]{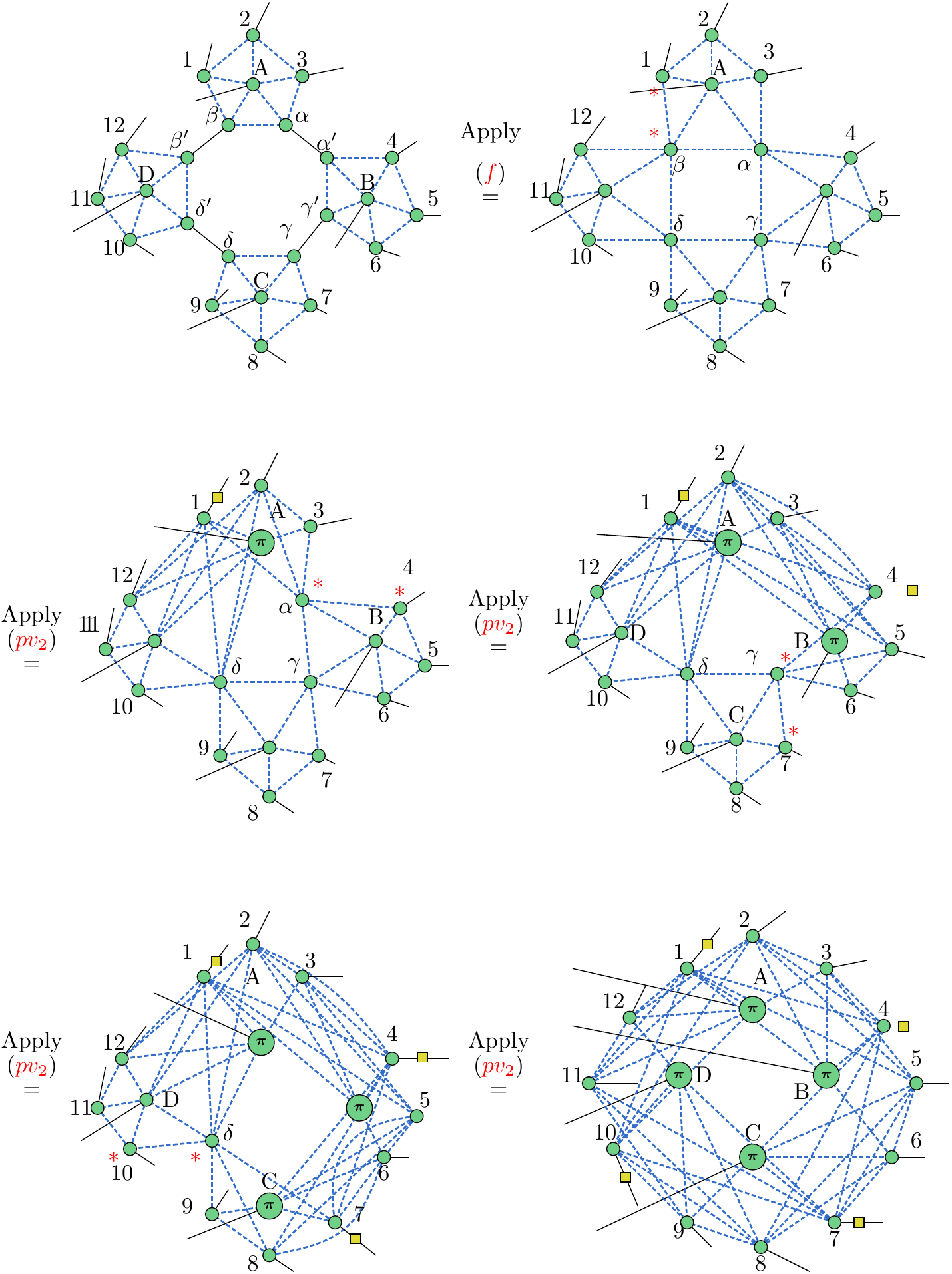}
    \caption{Construction of the HaPPY code from the ZX-calculus encoder diagram. The operation ($\color{red}pv_2$) is applied sequentially to the two nodes identified by the $\color{red}*$ symbols.}
    \label{fig:holo_step_2}
\end{figure}

\bibliography{CodeConcatenation}

\begin{thebibliography}{44}
\expandafter\ifx\csname natexlab\endcsname\relax\def\natexlab#1{#1}\fi
\expandafter\ifx\csname bibnamefont\endcsname\relax
  \def\bibnamefont#1{#1}\fi
\expandafter\ifx\csname bibfnamefont\endcsname\relax
  \def\bibfnamefont#1{#1}\fi
\expandafter\ifx\csname citenamefont\endcsname\relax
  \def\citenamefont#1{#1}\fi
\expandafter\ifx\csname url\endcsname\relax
  \def\url#1{\texttt{#1}}\fi
\expandafter\ifx\csname urlprefix\endcsname\relax\def\urlprefix{URL }\fi
\providecommand{\bibinfo}[2]{#2}
\providecommand{\eprint}[2][]{\url{#2}}

\bibitem[{\citenamefont{Nielsen and Chuang}(2002)}]{nielsen2002quantum}
\bibinfo{author}{\bibfnamefont{M.~A.} \bibnamefont{Nielsen}} \bibnamefont{and} \bibinfo{author}{\bibfnamefont{I.}~\bibnamefont{Chuang}}, \emph{\bibinfo{title}{Quantum computation and quantum information}} (\bibinfo{year}{2002}).

\bibitem[{\citenamefont{Gottesman}(1997)}]{gottesman1997stabilizer}
\bibinfo{author}{\bibfnamefont{D.}~\bibnamefont{Gottesman}}, \emph{\bibinfo{title}{Stabilizer codes and quantum error correction}} (\bibinfo{publisher}{California Institute of Technology}, \bibinfo{year}{1997}).

\bibitem[{\citenamefont{Calderbank et~al.}(1998)\citenamefont{Calderbank, Rains, Shor, and Sloane}}]{calderbank1998quantum}
\bibinfo{author}{\bibfnamefont{A.~R.} \bibnamefont{Calderbank}}, \bibinfo{author}{\bibfnamefont{E.~M.} \bibnamefont{Rains}}, \bibinfo{author}{\bibfnamefont{P.~M.} \bibnamefont{Shor}}, \bibnamefont{and} \bibinfo{author}{\bibfnamefont{N.~J.} \bibnamefont{Sloane}}, \bibinfo{journal}{IEEE Transactions on Information Theory} \textbf{\bibinfo{volume}{44}}, \bibinfo{pages}{1369} (\bibinfo{year}{1998}).

\bibitem[{\citenamefont{Briegel and Raussendorf}(2001)}]{briegel2001persistent}
\bibinfo{author}{\bibfnamefont{H.~J.} \bibnamefont{Briegel}} \bibnamefont{and} \bibinfo{author}{\bibfnamefont{R.}~\bibnamefont{Raussendorf}}, \bibinfo{journal}{Physical Review Letters} \textbf{\bibinfo{volume}{86}}, \bibinfo{pages}{910} (\bibinfo{year}{2001}).

\bibitem[{\citenamefont{Raussendorf and Briegel}(2001)}]{raussendorf2001one}
\bibinfo{author}{\bibfnamefont{R.}~\bibnamefont{Raussendorf}} \bibnamefont{and} \bibinfo{author}{\bibfnamefont{H.~J.} \bibnamefont{Briegel}}, \bibinfo{journal}{Physical review letters} \textbf{\bibinfo{volume}{86}}, \bibinfo{pages}{5188} (\bibinfo{year}{2001}).

\bibitem[{\citenamefont{Hein et~al.}(2006)\citenamefont{Hein, D{\"u}r, Eisert, Raussendorf, Nest, and Briegel}}]{hein2006entanglement}
\bibinfo{author}{\bibfnamefont{M.}~\bibnamefont{Hein}}, \bibinfo{author}{\bibfnamefont{W.}~\bibnamefont{D{\"u}r}}, \bibinfo{author}{\bibfnamefont{J.}~\bibnamefont{Eisert}}, \bibinfo{author}{\bibfnamefont{R.}~\bibnamefont{Raussendorf}}, \bibinfo{author}{\bibfnamefont{M.}~\bibnamefont{Nest}}, \bibnamefont{and} \bibinfo{author}{\bibfnamefont{H.-J.} \bibnamefont{Briegel}}, \bibinfo{journal}{arXiv preprint quant-ph/0602096}  (\bibinfo{year}{2006}).

\bibitem[{\citenamefont{Schlingemann}(2001)}]{schlingemann2001stabilizer}
\bibinfo{author}{\bibfnamefont{D.}~\bibnamefont{Schlingemann}}, \bibinfo{journal}{arXiv preprint quant-ph/0111080}  (\bibinfo{year}{2001}).

\bibitem[{\citenamefont{Schlingemann and Werner}(2001)}]{schlingemann2001quantum}
\bibinfo{author}{\bibfnamefont{D.}~\bibnamefont{Schlingemann}} \bibnamefont{and} \bibinfo{author}{\bibfnamefont{R.~F.} \bibnamefont{Werner}}, \bibinfo{journal}{Physical Review A} \textbf{\bibinfo{volume}{65}}, \bibinfo{pages}{012308} (\bibinfo{year}{2001}).

\bibitem[{\citenamefont{Markus~Grassl}(2002)}]{grassl2002graphs}
\bibinfo{author}{\bibfnamefont{M.~R.} \bibnamefont{Markus~Grassl}, \bibfnamefont{Andreas~Klappenecker}}, in \emph{\bibinfo{booktitle}{Proceedings IEEE International Symposium on Information Theory,}} (\bibinfo{organization}{IEEE}, \bibinfo{year}{2002}), p.~\bibinfo{pages}{45}.

\bibitem[{\citenamefont{Cross et~al.}(2009)\citenamefont{Cross, Smith, Smolin, and Zeng}}]{cross2009codeword}
\bibinfo{author}{\bibfnamefont{A.}~\bibnamefont{Cross}}, \bibinfo{author}{\bibfnamefont{G.}~\bibnamefont{Smith}}, \bibinfo{author}{\bibfnamefont{J.}~\bibnamefont{Smolin}}, \bibnamefont{and} \bibinfo{author}{\bibfnamefont{B.}~\bibnamefont{Zeng}}, \bibinfo{journal}{IEEE Transactions on Information Theory} \textbf{\bibinfo{volume}{1}}, \bibinfo{pages}{433} (\bibinfo{year}{2009}).

\bibitem[{\citenamefont{Chuang et~al.}(2009)\citenamefont{Chuang, Cross, Smith, Smolin, and Zeng}}]{Chuang_2009}
\bibinfo{author}{\bibfnamefont{I.}~\bibnamefont{Chuang}}, \bibinfo{author}{\bibfnamefont{A.}~\bibnamefont{Cross}}, \bibinfo{author}{\bibfnamefont{G.}~\bibnamefont{Smith}}, \bibinfo{author}{\bibfnamefont{J.}~\bibnamefont{Smolin}}, \bibnamefont{and} \bibinfo{author}{\bibfnamefont{B.}~\bibnamefont{Zeng}}, \bibinfo{journal}{Journal of Mathematical Physics} \textbf{\bibinfo{volume}{50}} (\bibinfo{year}{2009}), ISSN \bibinfo{issn}{1089-7658}, \urlprefix\url{http://dx.doi.org/10.1063/1.3086833}.

\bibitem[{\citenamefont{Chen et~al.}(2008)\citenamefont{Chen, Zeng, and Chuang}}]{chen2008nonbinary}
\bibinfo{author}{\bibfnamefont{X.}~\bibnamefont{Chen}}, \bibinfo{author}{\bibfnamefont{B.}~\bibnamefont{Zeng}}, \bibnamefont{and} \bibinfo{author}{\bibfnamefont{I.~L.} \bibnamefont{Chuang}}, \bibinfo{journal}{Physical Review A} \textbf{\bibinfo{volume}{78}}, \bibinfo{pages}{062315} (\bibinfo{year}{2008}).

\bibitem[{\citenamefont{Van~den Nest et~al.}(2004)\citenamefont{Van~den Nest, Dehaene, and De~Moor}}]{van2004graphical}
\bibinfo{author}{\bibfnamefont{M.}~\bibnamefont{Van~den Nest}}, \bibinfo{author}{\bibfnamefont{J.}~\bibnamefont{Dehaene}}, \bibnamefont{and} \bibinfo{author}{\bibfnamefont{B.}~\bibnamefont{De~Moor}}, \bibinfo{journal}{Physical Review A} \textbf{\bibinfo{volume}{69}}, \bibinfo{pages}{022316} (\bibinfo{year}{2004}).

\bibitem[{\citenamefont{Dehaene and De~Moor}(2003)}]{dehaene2003clifford}
\bibinfo{author}{\bibfnamefont{J.}~\bibnamefont{Dehaene}} \bibnamefont{and} \bibinfo{author}{\bibfnamefont{B.}~\bibnamefont{De~Moor}}, \bibinfo{journal}{Physical Review A} \textbf{\bibinfo{volume}{68}}, \bibinfo{pages}{042318} (\bibinfo{year}{2003}).

\bibitem[{\citenamefont{Hostens et~al.}(2005)\citenamefont{Hostens, Dehaene, and De~Moor}}]{hostens2005stabilizer}
\bibinfo{author}{\bibfnamefont{E.}~\bibnamefont{Hostens}}, \bibinfo{author}{\bibfnamefont{J.}~\bibnamefont{Dehaene}}, \bibnamefont{and} \bibinfo{author}{\bibfnamefont{B.}~\bibnamefont{De~Moor}}, \bibinfo{journal}{Physical Review A} \textbf{\bibinfo{volume}{71}}, \bibinfo{pages}{042315} (\bibinfo{year}{2005}).

\bibitem[{\citenamefont{Azuma et~al.}(2015)\citenamefont{Azuma, Tamaki, and Lo}}]{azuma2015all}
\bibinfo{author}{\bibfnamefont{K.}~\bibnamefont{Azuma}}, \bibinfo{author}{\bibfnamefont{K.}~\bibnamefont{Tamaki}}, \bibnamefont{and} \bibinfo{author}{\bibfnamefont{H.-K.} \bibnamefont{Lo}}, \bibinfo{journal}{Nature communications} \textbf{\bibinfo{volume}{6}}, \bibinfo{pages}{1} (\bibinfo{year}{2015}).

\bibitem[{\citenamefont{Borregaard et~al.}(2020)\citenamefont{Borregaard, Pichler, Schr{\"o}der, Lukin, Lodahl, and S{\o}rensen}}]{borregaard2020one}
\bibinfo{author}{\bibfnamefont{J.}~\bibnamefont{Borregaard}}, \bibinfo{author}{\bibfnamefont{H.}~\bibnamefont{Pichler}}, \bibinfo{author}{\bibfnamefont{T.}~\bibnamefont{Schr{\"o}der}}, \bibinfo{author}{\bibfnamefont{M.~D.} \bibnamefont{Lukin}}, \bibinfo{author}{\bibfnamefont{P.}~\bibnamefont{Lodahl}}, \bibnamefont{and} \bibinfo{author}{\bibfnamefont{A.~S.} \bibnamefont{S{\o}rensen}}, \bibinfo{journal}{Physical Review X} \textbf{\bibinfo{volume}{10}}, \bibinfo{pages}{021071} (\bibinfo{year}{2020}).

\bibitem[{\citenamefont{Bartolucci et~al.}(2023)\citenamefont{Bartolucci, Birchall, Bombin, Cable, Dawson, Gimeno-Segovia, Johnston, Kieling, Nickerson, Pant et~al.}}]{bartolucci2023fusion}
\bibinfo{author}{\bibfnamefont{S.}~\bibnamefont{Bartolucci}}, \bibinfo{author}{\bibfnamefont{P.}~\bibnamefont{Birchall}}, \bibinfo{author}{\bibfnamefont{H.}~\bibnamefont{Bombin}}, \bibinfo{author}{\bibfnamefont{H.}~\bibnamefont{Cable}}, \bibinfo{author}{\bibfnamefont{C.}~\bibnamefont{Dawson}}, \bibinfo{author}{\bibfnamefont{M.}~\bibnamefont{Gimeno-Segovia}}, \bibinfo{author}{\bibfnamefont{E.}~\bibnamefont{Johnston}}, \bibinfo{author}{\bibfnamefont{K.}~\bibnamefont{Kieling}}, \bibinfo{author}{\bibfnamefont{N.}~\bibnamefont{Nickerson}}, \bibinfo{author}{\bibfnamefont{M.}~\bibnamefont{Pant}}, \bibnamefont{et~al.}, \bibinfo{journal}{Nature Communications} \textbf{\bibinfo{volume}{14}}, \bibinfo{pages}{912} (\bibinfo{year}{2023}).

\bibitem[{\citenamefont{Knill and Laflamme}(1996)}]{knill1996concatenated}
\bibinfo{author}{\bibfnamefont{E.}~\bibnamefont{Knill}} \bibnamefont{and} \bibinfo{author}{\bibfnamefont{R.}~\bibnamefont{Laflamme}}, \bibinfo{journal}{arXiv preprint quant-ph/9608012}  (\bibinfo{year}{1996}).

\bibitem[{\citenamefont{Knill et~al.}(1996)\citenamefont{Knill, Laflamme, and Zurek}}]{knill1996threshold}
\bibinfo{author}{\bibfnamefont{E.}~\bibnamefont{Knill}}, \bibinfo{author}{\bibfnamefont{R.}~\bibnamefont{Laflamme}}, \bibnamefont{and} \bibinfo{author}{\bibfnamefont{W.}~\bibnamefont{Zurek}}, \bibinfo{journal}{arXiv preprint quant-ph/9610011}  (\bibinfo{year}{1996}).

\bibitem[{\citenamefont{Knill et~al.}(1998)\citenamefont{Knill, Laflamme, and Zurek}}]{knill1998resilient}
\bibinfo{author}{\bibfnamefont{E.}~\bibnamefont{Knill}}, \bibinfo{author}{\bibfnamefont{R.}~\bibnamefont{Laflamme}}, \bibnamefont{and} \bibinfo{author}{\bibfnamefont{W.~H.} \bibnamefont{Zurek}}, \bibinfo{journal}{Proceedings of the Royal Society of London. Series A: Mathematical, Physical and Engineering Sciences} \textbf{\bibinfo{volume}{454}}, \bibinfo{pages}{365} (\bibinfo{year}{1998}).

\bibitem[{\citenamefont{Zalka}(1996)}]{zalka1996threshold}
\bibinfo{author}{\bibfnamefont{C.}~\bibnamefont{Zalka}}, \bibinfo{journal}{arXiv preprint quant-ph/9612028}  (\bibinfo{year}{1996}).

\bibitem[{\citenamefont{Aharonov and Ben-Or}(1997)}]{aharonov1997fault}
\bibinfo{author}{\bibfnamefont{D.}~\bibnamefont{Aharonov}} \bibnamefont{and} \bibinfo{author}{\bibfnamefont{M.}~\bibnamefont{Ben-Or}}, in \emph{\bibinfo{booktitle}{Proceedings of the twenty-ninth annual ACM symposium on Theory of computing}} (\bibinfo{year}{1997}), pp. \bibinfo{pages}{176--188}.

\bibitem[{\citenamefont{Grassl et~al.}(2009)\citenamefont{Grassl, Shor, Smith, Smolin, and Zeng}}]{grassl2009generalized}
\bibinfo{author}{\bibfnamefont{M.}~\bibnamefont{Grassl}}, \bibinfo{author}{\bibfnamefont{P.}~\bibnamefont{Shor}}, \bibinfo{author}{\bibfnamefont{G.}~\bibnamefont{Smith}}, \bibinfo{author}{\bibfnamefont{J.}~\bibnamefont{Smolin}}, \bibnamefont{and} \bibinfo{author}{\bibfnamefont{B.}~\bibnamefont{Zeng}}, \bibinfo{journal}{Physical Review A} \textbf{\bibinfo{volume}{79}}, \bibinfo{pages}{050306} (\bibinfo{year}{2009}).

\bibitem[{\citenamefont{Cao and Lackey}(2022)}]{cao2022quantum}
\bibinfo{author}{\bibfnamefont{C.}~\bibnamefont{Cao}} \bibnamefont{and} \bibinfo{author}{\bibfnamefont{B.}~\bibnamefont{Lackey}}, \bibinfo{journal}{PRX Quantum} \textbf{\bibinfo{volume}{3}}, \bibinfo{pages}{020332} (\bibinfo{year}{2022}).

\bibitem[{\citenamefont{Farrelly et~al.}(2021)\citenamefont{Farrelly, Harris, McMahon, and Stace}}]{farrelly2021tensor}
\bibinfo{author}{\bibfnamefont{T.}~\bibnamefont{Farrelly}}, \bibinfo{author}{\bibfnamefont{R.~J.} \bibnamefont{Harris}}, \bibinfo{author}{\bibfnamefont{N.~A.} \bibnamefont{McMahon}}, \bibnamefont{and} \bibinfo{author}{\bibfnamefont{T.~M.} \bibnamefont{Stace}}, \bibinfo{journal}{Physical Review Letters} \textbf{\bibinfo{volume}{127}}, \bibinfo{pages}{040507} (\bibinfo{year}{2021}).

\bibitem[{\citenamefont{Coecke and Duncan}(2007)}]{Coecke2007graphicalcalculus}
\bibinfo{author}{\bibfnamefont{B.}~\bibnamefont{Coecke}} \bibnamefont{and} \bibinfo{author}{\bibfnamefont{R.}~\bibnamefont{Duncan}}, \bibinfo{journal}{Preprint}  (\bibinfo{year}{2007}).

\bibitem[{\citenamefont{Duncan and Perdrix}(2009)}]{DuncanPerdrixGraphStates}
\bibinfo{author}{\bibfnamefont{R.}~\bibnamefont{Duncan}} \bibnamefont{and} \bibinfo{author}{\bibfnamefont{S.}~\bibnamefont{Perdrix}}, in \emph{\bibinfo{booktitle}{Conference on Computability in Europe}} (\bibinfo{organization}{Springer}, \bibinfo{year}{2009}), pp. \bibinfo{pages}{167--177}.

\bibitem[{\citenamefont{van~de Wetering}(2020)}]{van2020zx}
\bibinfo{author}{\bibfnamefont{J.}~\bibnamefont{van~de Wetering}}, \bibinfo{journal}{arXiv preprint arXiv:2012.13966}  (\bibinfo{year}{2020}).

\bibitem[{\citenamefont{Duncan et~al.}(2020)\citenamefont{Duncan, Kissinger, Perdrix, and van~de Wetering}}]{Duncan_2020}
\bibinfo{author}{\bibfnamefont{R.}~\bibnamefont{Duncan}}, \bibinfo{author}{\bibfnamefont{A.}~\bibnamefont{Kissinger}}, \bibinfo{author}{\bibfnamefont{S.}~\bibnamefont{Perdrix}}, \bibnamefont{and} \bibinfo{author}{\bibfnamefont{J.}~\bibnamefont{van~de Wetering}}, \bibinfo{journal}{Quantum} \textbf{\bibinfo{volume}{4}}, \bibinfo{pages}{279} (\bibinfo{year}{2020}), \urlprefix\url{https://doi.org/10.22331%2Fq-2020-06-04-279}.

\bibitem[{\citenamefont{Beigi et~al.}(2011)\citenamefont{Beigi, Chuang, Grassl, Shor, and Zeng}}]{beigi2011graph}
\bibinfo{author}{\bibfnamefont{S.}~\bibnamefont{Beigi}}, \bibinfo{author}{\bibfnamefont{I.}~\bibnamefont{Chuang}}, \bibinfo{author}{\bibfnamefont{M.}~\bibnamefont{Grassl}}, \bibinfo{author}{\bibfnamefont{P.}~\bibnamefont{Shor}}, \bibnamefont{and} \bibinfo{author}{\bibfnamefont{B.}~\bibnamefont{Zeng}}, \bibinfo{journal}{Journal of Mathematical Physics} \textbf{\bibinfo{volume}{52}}, \bibinfo{pages}{022201} (\bibinfo{year}{2011}).

\bibitem[{\citenamefont{Pastawski et~al.}(2015)\citenamefont{Pastawski, Yoshida, Harlow, and Preskill}}]{pastawski2015holographic}
\bibinfo{author}{\bibfnamefont{F.}~\bibnamefont{Pastawski}}, \bibinfo{author}{\bibfnamefont{B.}~\bibnamefont{Yoshida}}, \bibinfo{author}{\bibfnamefont{D.}~\bibnamefont{Harlow}}, \bibnamefont{and} \bibinfo{author}{\bibfnamefont{J.}~\bibnamefont{Preskill}}, \bibinfo{journal}{Journal of High Energy Physics} \textbf{\bibinfo{volume}{2015}} (\bibinfo{year}{2015}), ISSN \bibinfo{issn}{1029-8479}, \urlprefix\url{http://dx.doi.org/10.1007/JHEP06(2015)149}.

\bibitem[{\citenamefont{Munn{\'e} et~al.}(2022)\citenamefont{Munn{\'e}, Kasper, and Huber}}]{munne2022engineering}
\bibinfo{author}{\bibfnamefont{G.~A.} \bibnamefont{Munn{\'e}}}, \bibinfo{author}{\bibfnamefont{V.}~\bibnamefont{Kasper}}, \bibnamefont{and} \bibinfo{author}{\bibfnamefont{F.}~\bibnamefont{Huber}}, \bibinfo{journal}{arXiv preprint arXiv:2209.08954}  (\bibinfo{year}{2022}).

\bibitem[{\citenamefont{Calderbank and Shor}(1996)}]{Calderbank_1996}
\bibinfo{author}{\bibfnamefont{A.~R.} \bibnamefont{Calderbank}} \bibnamefont{and} \bibinfo{author}{\bibfnamefont{P.~W.} \bibnamefont{Shor}}, \bibinfo{journal}{Physical Review A} \textbf{\bibinfo{volume}{54}}, \bibinfo{pages}{1098–1105} (\bibinfo{year}{1996}), ISSN \bibinfo{issn}{1094-1622}, \urlprefix\url{http://dx.doi.org/10.1103/PhysRevA.54.1098}.

\bibitem[{\citenamefont{Kissinger and van~de Wetering}(2020)}]{Kissinger_2020}
\bibinfo{author}{\bibfnamefont{A.}~\bibnamefont{Kissinger}} \bibnamefont{and} \bibinfo{author}{\bibfnamefont{J.}~\bibnamefont{van~de Wetering}}, \bibinfo{journal}{Physical Review A} \textbf{\bibinfo{volume}{102}} (\bibinfo{year}{2020}), ISSN \bibinfo{issn}{2469-9934}, \urlprefix\url{http://dx.doi.org/10.1103/PhysRevA.102.022406}.

\bibitem[{\citenamefont{Bouchet}(1988)}]{bouchet1988graphic}
\bibinfo{author}{\bibfnamefont{A.}~\bibnamefont{Bouchet}}, \bibinfo{journal}{Journal of Combinatorial Theory, Series B} \textbf{\bibinfo{volume}{45}}, \bibinfo{pages}{58} (\bibinfo{year}{1988}).

\bibitem[{\citenamefont{Kotzig}(1968)}]{kotzig1968eulerian}
\bibinfo{author}{\bibfnamefont{A.}~\bibnamefont{Kotzig}}, \bibinfo{journal}{Theory of Graphs} pp. \bibinfo{pages}{219--230} (\bibinfo{year}{1968}).

\bibitem[{\citenamefont{Khesin et~al.}(2023)\citenamefont{Khesin, Lu, and Shor}}]{khesin2023graphical}
\bibinfo{author}{\bibfnamefont{A.~B.} \bibnamefont{Khesin}}, \bibinfo{author}{\bibfnamefont{J.~Z.} \bibnamefont{Lu}}, \bibnamefont{and} \bibinfo{author}{\bibfnamefont{P.~W.} \bibnamefont{Shor}}, \bibinfo{journal}{arXiv preprint arXiv:2301.02356}  (\bibinfo{year}{2023}).

\bibitem[{\citenamefont{Grassl}(2011)}]{grassl2011variations}
\bibinfo{author}{\bibfnamefont{M.}~\bibnamefont{Grassl}}, in \emph{\bibinfo{booktitle}{International Conference on Coding and Cryptology}} (\bibinfo{organization}{Springer}, \bibinfo{year}{2011}), pp. \bibinfo{pages}{142--158}.

\bibitem[{\citenamefont{Kissinger}(2022)}]{kissinger2022phase}
\bibinfo{author}{\bibfnamefont{A.}~\bibnamefont{Kissinger}}, \bibinfo{journal}{arXiv preprint arXiv:2204.14038}  (\bibinfo{year}{2022}).

\bibitem[{\citenamefont{Helwig and Cui}(2013)}]{helwig2013absolutely}
\bibinfo{author}{\bibfnamefont{W.}~\bibnamefont{Helwig}} \bibnamefont{and} \bibinfo{author}{\bibfnamefont{W.}~\bibnamefont{Cui}}, \emph{\bibinfo{title}{Absolutely maximally entangled states: Existence and applications}} (\bibinfo{year}{2013}), \eprint{1306.2536}.

\bibitem[{\citenamefont{Almheiri et~al.}(2015)\citenamefont{Almheiri, Dong, and Harlow}}]{almheiri2015bulk}
\bibinfo{author}{\bibfnamefont{A.}~\bibnamefont{Almheiri}}, \bibinfo{author}{\bibfnamefont{X.}~\bibnamefont{Dong}}, \bibnamefont{and} \bibinfo{author}{\bibfnamefont{D.}~\bibnamefont{Harlow}}, \bibinfo{journal}{Journal of High Energy Physics} \textbf{\bibinfo{volume}{2015}} (\bibinfo{year}{2015}), ISSN \bibinfo{issn}{1029-8479}, \urlprefix\url{http://dx.doi.org/10.1007/JHEP04(2015)163}.

\bibitem[{\citenamefont{Adcock et~al.}(2020)\citenamefont{Adcock, Morley-Short, Dahlberg, and Silverstone}}]{Adcock_2020}
\bibinfo{author}{\bibfnamefont{J.~C.} \bibnamefont{Adcock}}, \bibinfo{author}{\bibfnamefont{S.}~\bibnamefont{Morley-Short}}, \bibinfo{author}{\bibfnamefont{A.}~\bibnamefont{Dahlberg}}, \bibnamefont{and} \bibinfo{author}{\bibfnamefont{J.~W.} \bibnamefont{Silverstone}}, \bibinfo{journal}{Quantum} \textbf{\bibinfo{volume}{4}}, \bibinfo{pages}{305} (\bibinfo{year}{2020}), ISSN \bibinfo{issn}{2521-327X}, \urlprefix\url{http://dx.doi.org/10.22331/q-2020-08-07-305}.

\bibitem[{\citenamefont{Garvie and Duncan}(2018)}]{Garvie_2018}
\bibinfo{author}{\bibfnamefont{L.}~\bibnamefont{Garvie}} \bibnamefont{and} \bibinfo{author}{\bibfnamefont{R.}~\bibnamefont{Duncan}}, \bibinfo{journal}{Electronic Proceedings in Theoretical Computer Science} \textbf{\bibinfo{volume}{266}}, \bibinfo{pages}{147–163} (\bibinfo{year}{2018}), ISSN \bibinfo{issn}{2075-2180}, \urlprefix\url{http://dx.doi.org/10.4204/EPTCS.266.10}.

\end{thebibliography}

\end{document}